\documentclass[12pt]{article}

\usepackage[margin=1in]{geometry}
\usepackage[T1]{fontenc}
\usepackage{lmodern}
\usepackage{amsmath,amssymb,amsthm,mathtools}
\usepackage{booktabs}
\usepackage{enumitem}
\usepackage{xcolor}
\usepackage{microtype}
\usepackage{fancyhdr}
\usepackage{natbib}

\usepackage{comment}
\usepackage[colorlinks=true,linkcolor=blue!50!black,citecolor=blue!50!black,
            urlcolor=blue!50!black]{hyperref}
\hypersetup{
  pdftitle={Sharp Identification in Linear Panel Quantile Models},
  pdfauthor={Shakeeb Khan and Elie Tamer}
}

\allowdisplaybreaks
\numberwithin{equation}{section}
\setlist[itemize]{leftmargin=1.6em,itemsep=0.2em,topsep=0.3em}
\setlist[enumerate]{leftmargin=2.0em,itemsep=0.25em,topsep=0.35em}
\newtheorem{theorem}{Theorem}[section]
\newtheorem{proposition}[theorem]{Proposition}
\newtheorem{corollary}[theorem]{Corollary}
\newtheorem{lemma}[theorem]{Lemma}
\theoremstyle{definition}
\newtheorem{definition}[theorem]{Definition}

\newtheorem{example}[theorem]{Example}
\newtheorem{algorithm}[theorem]{Algorithm}
\newtheorem{remark}[theorem]{Remark}

\newcommand{\R}{\mathbb{R}}
\newcommand{\cB}{\mathcal{B}}
\newcommand{\cC}{\mathcal{C}}

\newcommand{\Pobs}{P^{\mathrm{obs}}}
\newcommand{\Pstr}{P^{\mathrm{s}}}

\newcommand{\E}{\mathbb{E}}
\newcommand{\one}{\mathbf{1}}

\newcommand{\essinf}{\operatorname*{ess\,inf}}
\newcommand{\esssup}{\operatorname*{ess\,sup}}
\newcommand{\ThetaI}{\Theta_I}

\begin{document}

\title{Identification in Linear Quantile Panel Models}
\author{Shakeeb Khan \\ Boston College  \and Elie Tamer \\ Harvard University}
\date{August 2026}
\maketitle

\begin{abstract}
This paper studies identification in linear quantile panel models with unrestricted individual heterogeneity when the number of time periods is fixed and small. We impose quantile strict exogeneity, whereby the conditional quantile restriction holds given the individual’s complete regressor history and latent individual effect, but otherwise allow the disturbances to be arbitrarily dependent over time. We show that the common slope coefficient can be partially identified from the requirement that the residuals in all periods be compatible with a common latent individual effect. We characterize the sharp identified set as the set of coefficients for which there exists a latent coupling satisfying the period-specific quantile restrictions. This characterization yields observable crossing inequalities that provide computationally convenient outer bounds and an observable dual representation that is sharp. When the regressors and outcomes have finite support, the sharp identified set can be computed exactly using finite-dimensional linear programs, without discretizing the latent individual effect; for continuously distributed regressors, we develop nested dual-sieve procedures that converge to the sharp set. We further characterize how identification depends on directional variation in the regressor paths, showing that additional periods need not generate identification unless they provide sufficiently rich variation relative to the support of the composite residual. Under a finite-width composite-residual condition, we obtain explicit bounds on the identified set and give conditions under which even a two-period model is point identified. Finally, we consider a quantile-varying factor-loading extension in which the latent individual effect is common across quantiles, and show how cross-quantile restrictions can identify economically meaningful relative loadings.
\end{abstract}

\noindent\textbf{Keywords:} panel data, quantile regression, fixed effects,
partial identification, sharp identified set, duality, linear programming.

\noindent\textbf{JEL classification:} C14, C23, C61.

\newpage
\section{Introduction}
\label{sec:introduction}

Linear panel data models are important in empirical work because they allow researchers
to study the relation between outcomes and covariates in the presence of persistent
unobserved heterogeneity.  In the linear conditional mean model, identification with a
fixed number of periods is well understood. It is developed most clearly in
\citet{Chamberlain1984}, where under a fixed-$T$ regime, the model uses restrictions on the longitudinal
information available about the disturbances---either strict exogeneity or predeterminedness restrictions. The corresponding fixed-$T$ question is
much less developed for conditional quantile models.  This paper asks what the quantile
analogue of mean strict exogeneity identifies about a common slope in a linear model when the individual
effect is unrestricted.  

Consider
\[
        Y_{it}=X_{it}'\beta+A_i+U_{it}, \qquad t=1,\ldots,T.
\]
The classical strict-exogeneity restriction is
\[
        E[U_{it}\mid X_{i1},\ldots,X_{iT},A_i]=0,
        \qquad t=1,\ldots,T.
\]
The conditioning set is important: the mean-zero restriction is imposed given the
complete regressor path and the individual effect.  If future regressors can respond
to current disturbances, the corresponding predeterminedness restriction conditions
instead on $(X_{i1},\ldots,X_{it},A_i)$.  The choice between strict exogeneity and
predeterminedness is therefore a substantive assumption about feedback and timing.  It
does not depend on whether $T$ is fixed or large.

The short-panel analysis of \citet{Chamberlain1984} keeps $T$ fixed and lets the number of
individuals increase.  In such a setting, $A_i$ cannot be estimated consistently for a
particular individual.  This does not prevent identification of the common slope in
the linear mean model.  Differencing removes $A_i$, and linearity of conditional
expectations preserves moment restrictions for $\beta$.  Under the appropriate
within-panel rank condition, these restrictions identify the slope from changes in the
observed regressors.  \citet{griliches1979sibling}, for example, considers a sibling design in
which the two observations are twins in the same family, so that $T$ is equal to 2  by
construction\footnote{This conclusion uses the full-history restriction.  With only
the contemporaneous condition $E[U_{it}\mid X_{it},A_i]=0$, differencing need not
produce valid moment restrictions, and with fixed $T$ the slope need not be identified
or even bounded.  Chamberlain (1984) makes clear why the contemporaneous restriction
is not sufficient for the change regression.}.

Much of the more recent linear fixed-effects quantile literature studies a different model.  It
starts from a period-specific restriction, understood to hold for every individual, of
the form
\[
        Q_{\tau}(Y_{it}\mid X_{it},A_i)
        =X_{it}'\beta(\tau)+A_i.
\]
There is no general within transformation that eliminates $A_i$ and preserves this
conditional quantile restriction.  The individual effects are therefore commonly
estimated together with the slope, and the resulting incidental-parameters problem is
addressed by allowing the number of observations for each individual to increase.
This is the approach taken by the influential work of \citet{koenker2004} and developed in, among others, \citet{lamarche2010, katoetal2012,galvao2017quantile} and references therein.

This literature often combines two features: contemporaneous conditioning and a
large-$T$ asymptotic sequence.  These are logically separate choices.  The restriction
\[
        0\in Q_{\tau}(U_{it}\mid X_{it},A_i)
\]
does not generally imply the same restriction conditional on the complete regressor
history.  If $X_{i,t+1}$ responds to $U_{it}$, for example, the conditional
$\tau$-quantile of $U_{it}$ can be zero given $(X_{it},A_i)$ and nonzero after one also
conditions on $X_{i,t+1}$.  Conversely, allowing $T$ to increase does not determine
which information set should enter the maintained quantile restriction.  The cited
large-$T$ analyses also restrict temporal sampling.  \citet{koenker2004} assumes independent
responses conditional on the regressor array, while \citet{katoetal2012} first study an
i.i.d. sequence over time and then allow stationary weak dependence under mixing
conditions.

We study a different combination of assumptions.  We keep $T$ fixed, possibly as
small as two, and impose the quantile analogue of classical strict exogeneity.  Let
\[
        X_i=(X_{i1},\ldots,X_{iT})
\]
denote the complete regressor path.  Our maintained restriction is
\begin{equation}
        0\in Q_{\tau}(U_{it}\mid X_i,A_i),
        \qquad t=1,\ldots,T. 
        \tag{A}
\end{equation}
We refer to (A) as $\tau$-quantile strict exogeneity, or Q-SE.  It requires zero to
remain a conditional $\tau$-quantile of $U_{it}$ after conditioning on the individual
effect and on past, current, and future regressors.  This full-history condition is an
exogeneity assumption; it is not a consequence of working with a short panel. Related full-history conditioning appears in  important work of 
\citet{manski1987semiparametric} for a binary panel model with stationarity, and in \citet{honore1992trimmed} for a censored panel
model. When
feedback from current disturbances to future regressors is appropriate, the natural
alternative is quantile predeterminedness, which conditions the period-$t$ restriction
only on $(X_{i1},\ldots,X_{it},A_i)$.  Section~\ref{sec:crossing} shows that one necessary implication,
the pairwise residual-crossing outer bound, extends to this weaker model.  The
comparison between periods $s$ and $t$ must then be conditioned on the regressor
history through $\min\{s,t\}$.  The paper's sharp coupling and computational results are
established under Q-SE. It is worth noting that \citet{rosen2012set} studies also a different fixed-$T$ panel restriction that conditions on
$X_i$ but not on $A_i$.  Under such a quantile restriction alone, he shows that every slope in the parameter space is
observationally equivalent to the true slope; Rosen then obtains nontrivial bounds after adding a 
conditional-independence restriction.  See also \citet{canay2011} for other work on panel quantile regressions.

Aside from our Q-SE restriction, the model we use is weak.  We impose no
independence, stationarity, exchangeability, identical-distribution, or mixing
restriction on $(U_{i1},\ldots,U_{iT})$.  The disturbances may be arbitrarily dependent
and heteroskedastic over time, and their distributions may differ across periods.
Moreover, $A_i$ may be arbitrarily related to the complete regressor path.  
The main identification difficulty is immediate.  From
\[
        0\in Q_{\tau}(U_{it}\mid X_i,A_i)
        \quad\text{and}\quad
        0\in Q_{\tau}(U_{is}\mid X_i,A_i)
\]
one cannot conclude that
\[
        0\in Q_{\tau}(U_{it}-U_{is}\mid X_i,A_i).
\]
Differencing eliminates $A_i$ algebraically, but it does not preserve the quantile
restriction.  At the same time, fixed $T$ does not permit consistent estimation of
$A_i$.  The question of this paper is therefore: what does Q-SE identify about the
common slope when $T$ is fixed and the individual effect is unrestricted?  To our
knowledge, the sharp identified set under this combination of assumptions has not been
characterized.

The key observation that motivates our sharp set construction is simple.  For a candidate slope $b$, define
\[
        R_{it}(b)=Y_{it}-X_{it}'b.
\]
At the true slope,
\[
        R_{it}(\beta)=A_i+U_{it}.
\]
The same scalar $A_i$ must therefore satisfy all $T$ conditional-quantile restrictions
for the residual vector, conditional on $X_i$.  This simultaneous compatibility
condition can restrict $b$ even though $A_i$ is neither estimated nor differenced out.
We characterize exactly the slopes for which there exists a joint law of
$(Y_i,X_i,A_i)$, with the observed $(Y_i,X_i)$ marginal, that satisfies all of these
restrictions.

It helps to see in the simplest case why this compatibility requirement has
bite.  Take $T=2$ and $\tau=1/2$.  At a candidate $b=\beta+d$ the residuals
are $R_{it}(b)=A_i+U_{it}-X_{it}'d$: the true composite residual, shifted by an
amount that differs across the two periods whenever $X_{i2}'d\neq X_{i1}'d$.
Model (A) requires some sorting of individuals into groups, indexed by the
value of the individual effect, such that within each group the same number
is a median of the period-1 residuals and a median of the period-2 residuals.
A relative shift between the two periods can be absorbed by such a sorting
only up to the spread of the residuals within a period; a larger shift forces
one period's residuals to lie systematically above the other's in every
group, which no common median can accommodate.  Thus the identified set is
governed by the size of the within-individual index change $(X_{i2}-X_{i1})'d$
relative to the width of the composite residual distribution.  This
observation underlies the simple outer bounds, the exact characterization,
and the results on covariate support that follow.

The paper has five main results.  First, we give an exact coupling characterization of
the sharp identified set.  A candidate $b$ belongs to $\Theta_I$ if and only if the
observed distribution of $(Y_i,X_i)$ can be coupled with a scalar $A_i$ such that, for
every period,
\[
\begin{split}
        \Pr\{R_{it}(b)\leq A_i\mid X_i,A_i\}&\geq\tau,\\
        \Pr\{R_{it}(b)<A_i\mid X_i,A_i\}&\leq\tau.
\end{split}
\]
The weak and strict inequalities allow for atoms at the conditional quantile.  This
characterization is sharp: every structural model satisfying Q-SE produces a feasible
coupling, and every feasible coupling can be completed into a structural model that
satisfies Q-SE and reproduces the observed distribution.

Second, we separate simple observable implications from the full identifying content
of Q-SE.  Common-support overlap and residual-crossing inequalities give convenient
outer sets for $\beta$.  They can exclude many candidate slopes, but they are not sharp
in general.  Under compact support, we derive a sharp observable dual.  A candidate is
in the sharp set if and only if an infinite collection of observable inequalities,
indexed by nonnegative multiplier functions, holds.  Every candidate outside the set
strictly violates at least one such inequality.  The crossing inequalities are useful
special cases of this dual family; the full multiplier class is generally needed for
sharpness.

Third, we make the characterization constructive.  When the regressor paths and the
conditional outcome distributions have finite support, it is enough to place the
latent effect on the finite set of candidate residual values, with no numerical grid
for $A_i$.  The primal linear program constructs a rationalizing distribution, while
the normalized dual defines a scalar criterion that is zero exactly on the sharp
set and negative elsewhere.  Under compatibility of the maintained model, the sharp
set is therefore the set of global maximizers of that criterion, or the set of global
minimizers of its negative.  This permits joint optimization over coefficients and
profiling over nuisance coordinates, rather than requiring a separate feasibility
test at every point of a coefficient grid.  In low-dimensional coefficient spaces,
the entire optimizing set can be recovered by evaluating the criterion once on each
face of the finite residual-crossing hyperplane arrangement.  For continuous designs,
restricting the dual multiplier class gives criterion functions whose optimizing sets
are population outer sets.  Under the compact-support and approximation conditions
of the duality result, these sets decrease to the sharp identified set.

Fourth, we study what determines the size of the identified set.  When panels of
different lengths are compared as marginals of the same underlying population, adding
periods weakly contracts the set.  But neither additional periods nor the usual
within-panel rank condition guarantees point identification.  The rank condition only
rules out coefficient directions whose index can be absorbed exactly into $A_i$.  The
stronger restriction is directional.  Suppose the true composite residual
$V_{it}=A_i+U_{it}$ lies, conditional on $X_i$, in a common, possibly path-dependent
interval of width at most $C_T$.  Then any feasible perturbation
$d=b-\beta$ must satisfy
\[
        \max_{t\leq T} X_{it}'d-
        \min_{t\leq T} X_{it}'d\leq C_T
        \qquad\text{almost surely}.
\]
Thus identification depends on the directional support of the regressor paths relative
to the support of the composite residual, not simply on the number of regressors or on
a covariance rank condition.  This result gives explicit population bounds on the
identified set as $T$ changes; it is not a large-$T$ estimation argument.  It also
permits point identification with only two periods.  In
particular, if $X_{i2}-X_{i1}$ is Gaussian with a positive-definite covariance matrix
and the composite residual has a uniformly finite-width conditional support envelope,
every nonzero coefficient direction generates unbounded within-panel index variation
and the slope is point identified. The paper also shows that in cases with restricted support for the regressors, the identified set is nontrivial and can be wide (with $T$ fixed).

Finally, the baseline model is pointwise in $\tau$: different quantiles may be
rationalized by different latent structures.  We therefore consider the stronger
factor-loading specification
\[
        Q_q(Y_{it}\mid X_i,A_i)
        =X_{it}'\beta(q)+\rho(q)A_i,
\]
under which the same $A_i$ must satisfy the restrictions at several quantiles.  After
normalizing $\rho(1/2)=1$ and imposing the common-orientation restriction
$\rho(\tau)>0$, we combine the median restriction with one additional quantile.  We
characterize the sharp joint set
for $(\beta(1/2),\beta(\tau),\rho(\tau))$ and its sharp projection onto the
target-quantile slope and relative loading.  The cross-quantile restrictions contain
information about $\rho(\tau)$ that is unavailable when either quantile is considered
separately, and the coupling, duality, and finite-support linear-programming arguments
extend to this joint model.

The paper is limited to identification and population computation.  We do not estimate
the individual effects, use a large-$T$ approximation, or develop confidence regions
for the identified set.  The finite-support criterion gives an exact population
optimization characterization.  The continuous-design criteria give population
outer approximations whose limiting intersection is sharp.  Sampling theory and inference are left for future
work.

The remainder of the paper is organized as follows.  Section~\ref{sec:model}
states the model and defines the sharp identified set.  Section~\ref{sec:crossing}
derives the simple observable outer bounds: the crossing inequalities and the
support-overlap condition.  Section~\ref{sec:coupling} gives the exact coupling
characterization and introduces a two-point example that is used throughout
to illustrate the constructions.  Section~\ref{sec:duality} develops the sharp
observable dual, explains why it loses no information, and shows that the
crossing inequalities are a special, and in general insufficient, subfamily.
Section~\ref{sec:lp} turns the dual into computations: an exact linear program
and criterion function under finite support, and nested approximations for
continuous outcomes and regressors.  Section~\ref{sec:longpanel} studies how
covariate support and the number of periods determine the size of the
identified set, including the point-identification and nonidentification
examples.  Section~\ref{sec:loadingduality} develops the factor-loading
extension, Section~\ref{sec:montecarlo} describes the Monte Carlo design, and
Section~\ref{sec:conclusion} concludes.  Proofs of the duality and computational
results are collected in the appendices.

\section{Model and the sharp identified set}
\label{sec:model}

\subsection{Observed data and the maintained restriction}

For one individual, the observed variables are
\[
  Y=(Y_1,\ldots,Y_T)'\in\R^T,
  \qquad
  X=(X_1,\ldots,X_T),
  \qquad X_t\in\R^k,
\]
where \(T\) is fixed and finite.  Let \(\Pobs\) denote the population law of
\((Y,X)\), and let the slope parameter \(b\) range over
\(\cB\subseteq\R^k\).  All random variables take values in Euclidean, hence standard
Borel, spaces.

Fix a quantile index \(\tau\in(0,1)\).  The model of interest is
\begin{equation}
  Y_t=X_t'b+A+U_t,
  \qquad t=1,\ldots,T,
  \label{eq:structural}
\end{equation}
where \(A\in\R\) is an unobserved scalar individual effect.  The maintained
stochastic restriction is
\begin{equation*}
  P(U_t\leq 0\mid X,A)\geq\tau,
  \qquad
  P(U_t<0\mid X,A)\leq\tau,
  \qquad t=1,\ldots,T.
  \tag{A}\label{eq:modelA}
\end{equation*}
Condition \eqref{eq:modelA} says that zero belongs to the possibly set-valued
conditional \(\tau\)-quantile of \(U_t\).  It handles atoms at zero without choosing
a particular generalized inverse.  When \(\tau=1/2\), it is equivalently
\[
  P(U_t\leq0\mid X,A)\geq\tfrac12,
  \qquad
  P(U_t\geq0\mid X,A)\geq\tfrac12.
\]

The maintained model imposes none of the following:
\begin{itemize}
\item independence, exchangeability, stationarity, or identical distributions across
      \(U_1,\ldots,U_T\);
\item independence between \(A\) and \(X\), or between \(A\) and the idiosyncratic
      errors beyond the conditional quantile restriction;
\item continuity of any observed or latent variable;
\item a model linking coefficient functions or errors across different quantile
      indices.
\end{itemize}
Thus the joint conditional law of \(U=(U_1,\ldots,U_T)'\) given \((X,A)\) and
the conditional law of \(A\) given \(X\) are otherwise unrestricted.

\begin{remark}[Conditioning information and quantile strict exogeneity]
\label{rem:conditioninginformation}
Let
\[
  \mathcal G_t=\sigma(X_t,A),
  \qquad
  \mathcal H_t=\sigma(X_1,\ldots,X_t,A),
  \qquad
  \mathcal F=\sigma(X,A).
\]
Because \(\mathcal G_t\subseteq\mathcal H_t\subseteq\mathcal F\), Condition
\eqref{eq:modelA} implies, by iterated expectations, the corresponding two
quantile inequalities conditional on \(\mathcal H_t\) and on \(\mathcal G_t\).
The converses generally fail.  Full-path conditioning is therefore a genuine
strict-exogeneity restriction at the \(\tau\)-th quantile, not a harmless change
in notation from \(0\in Q_\tau(U_t\mid X_t,A)\).

This restriction is nevertheless much weaker than independence of \(U_t\) from
\(X\) given \(A\).  Only the conditional probabilities of the two events at zero
are restricted.  The magnitudes and shape of \(U_t\mid(X,A)\) may depend
arbitrarily on the regressor path, the disturbances may be arbitrarily dependent
across periods, and \(A\) may be arbitrarily related to \(X\).  We maintain the
full path because it makes all period-specific restrictions valid on one common
conditioning field.  This delivers the conditional-on-path crossing inequalities,
the pathwise coupling characterization, and the decomposition into pathwise LPs.
In applications where current disturbances affect future covariates at the
relevant quantile, a contemporaneous or sequential restriction may be more
credible, but the restriction is weaker, its identified set is generally larger,
and its sharp computation is different.
\end{remark}

\paragraph{Equivalent pointwise formulation.}
Writing $b_\tau=b(\tau)$ and making the quantile index on the latent effect
explicit, the maintained model is equivalently
\[
X_t'b_\tau+A_\tau\in Q_\tau(Y_t\mid X,A_\tau),
\qquad t=1,\ldots,T,
\]
where membership in the possibly set-valued conditional quantile means
\[
P(Y_t-X_t'b_\tau-A_\tau\leq 0\mid X,A_\tau)\geq\tau,
\qquad
P(Y_t-X_t'b_\tau-A_\tau<0\mid X,A_\tau)\leq\tau.
\tag{A}
\]
Equivalently, defining the quantile-specific residual
\[
U_{t,\tau}=Y_t-X_t'b_\tau-A_\tau,
\]
the model can be written as
\[
Y_t=X_t'b_\tau+A_\tau+U_{t,\tau},
\qquad
0\in Q_\tau(U_{t,\tau}\mid X,A_\tau).
\]
Because the baseline analysis is pointwise at the fixed index $\tau$,
we suppress the subscript and write $b=b_\tau$, $A=A_\tau$, and
$U_t=U_{t,\tau}$ throughout.

\begin{remark}[Pointwise versus common-across-quantile heterogeneity]
\label{rem:pointwiseanchor}
The baseline model is maintained at one fixed quantile index.  For each
\(q\) analyzed separately, it permits a different latent completion and a
different scalar anchor \(A_q\) satisfying
\[
  X_t'b(q)+A_q\in Q_q(Y_t\mid X,A_q),
  \qquad t=1,\ldots,T.
\]
Thus \(A_q\) is common across periods at the specified quantile, but neither
\(A_q=A_{q'}\) nor equality of the latent couplings is imposed when
\(q\neq q'\).  Separate pointwise identified sets consequently impose no
cross-quantile compatibility or noncrossing restriction.

Section \ref{sec:loadingduality} changes this maintained structure.  There the
quantile-specific individual intercept is restricted to the one-factor form
\(A_q=\rho(q)A\), and the median and target-quantile restrictions must hold
under one common coupling of \((Y,X,A)\).  The distinction is between separate
pointwise existence of possibly different latent structures and joint existence
of one latent structure satisfying several quantile restrictions.
\end{remark}

\subsection{The structural sharp set}
\label{sec:structuralset}

It is useful to define the identified set first in terms of complete structural
probability laws, before introducing the coupling representation.

\begin{definition}[Admissible structural law]
\label{def:structure}
For a candidate \(b\in\cB\), an admissible structural law is a Borel probability
law \(\Pstr\) of \((Y,X,A,U)\) such that:
\begin{enumerate}[label=(\roman*)]
\item the \((Y,X)\)-marginal of \(\Pstr\) is \(\Pobs\);
\item \(Y_t=X_t'b+A+U_t\) holds \(\Pstr\)-almost surely for every \(t\);
\item the two conditional inequalities in \eqref{eq:modelA} hold
      \(\Pstr\)-almost surely for every \(t\).
\end{enumerate}
Let \(\mathfrak S(b)\) be the collection of all such laws.
\end{definition}

\begin{definition}[Structural sharp identified set]
\label{def:sharpstruct}
The sharp identified set for the slope under \eqref{eq:structural}--\eqref{eq:modelA}
is
\begin{equation}
  \ThetaI^{\mathrm{str}}
  =\{b\in\cB:\mathfrak S(b)\neq\varnothing\}.
  \label{eq:structsharp}
\end{equation}
\end{definition}

Definition \ref{def:sharpstruct} is the usual observational-equivalence definition:
a slope is included precisely when at least one complete latent-variable structure
reproduces the observed distribution and satisfies the maintained restrictions.

\subsection{Time-invariant directions, rank, and nonidentification}
\label{subsec:rankA}

The unrestricted individual effect absorbs every component of the linear index
that is constant over time.  Define the finite-horizon absorption space
\begin{equation}
  \mathcal D_T
  =
  \left\{
    d\in\mathbb R^k:
    (X_t-X_s)'d=0\ \text{almost surely for every }s,t\leq T
  \right\}.
  \label{eq:absorption-space}
\end{equation}
An intercept belongs to \(\mathcal D_T\), as does the coefficient direction of
any time-invariant regressor.  More generally, \(\mathcal D_T\) can contain a
linear combination of regressors even when no individual regressor is constant.

\begin{proposition}[Absorption equivalence and the rank requirement]
\label{prop:absorptionA}
Let \(d\in\mathcal D_T\).  If \(b,b+d\in\mathcal B\), then
\[
  b\in\Theta_I^{\mathrm{str}}
  \quad\Longleftrightarrow\quad
  b+d\in\Theta_I^{\mathrm{str}}.
\]
Consequently, the data can identify \(b\) at most modulo
\(\mathcal D_T\).  The condition
\begin{equation}
  \mathcal D_T=\{0\}
  \label{eq:no-absorption-A}
\end{equation}
is necessary to rule out this exact nonidentification, but it is not sufficient
for point identification under model \textup{(A)}.
\end{proposition}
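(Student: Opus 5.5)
The equivalence follows from an explicit transformation of admissible structural laws; the insufficiency claim follows from a counterexample.

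\emph{Equivalence.} Fix \(d\in\mathcal D_T\) and suppose \(\Pstr\in\mathfrak S(b)\). Since \((X_t-X_s)'d=0\) almost surely for all \(s,t\), the index \(X_t'd\) does not depend on \(t\) almost surely. Define the scalar \(\tilde A=A-X_1'd\), which is a measurable function of \((X,A)\). Let \(\tilde P\) be the law of \((Y,X,\tilde A,U)\) under \(\Pstr\), i.e.\ the push-forward of \(\Pstr\) under \((y,x,a,u)\mapsto(y,x,a-x_1'd,u)\). Conditions (i)--(iii) of Definition~\ref{def:structure} must then be checked for \(b+d\).

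\begin{itemize}
\item Condition (i) is immediate, because \((Y,X)\) is untouched.
\item For (ii), almost surely
\[
Y_t=X_t'b+A+U_t=X_t'(b+d)+(A-X_t'd)+U_t=X_t'(b+d)+\tilde A+U_t,
\]
using \(X_t'd=X_1'd\) almost surely.
\item For (iii), the map \((X,A)\mapsto(X,\tilde A)\) is a bijection with measurable inverse \(A=\tilde A+X_1'd\). Hence \(\sigma(X,\tilde A)=\sigma(X,A)\), so the conditional probabilities of \(\{U_t\le0\}\) and \(\{U_t<0\}\) are the same versions. The inequalities in \eqref{eq:modelA} therefore continue to hold almost surely.
\end{itemize}

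So \(\tilde P\in\mathfrak S(b+d)\). The converse direction follows by applying the same argument with \(-d\in\mathcal D_T\). The claim that \(b\) is identified at most modulo \(\mathcal D_T\) is a direct restatement, and necessity of \eqref{eq:no-absorption-A} for point identification follows because otherwise \(\beta+d\in\Theta_I^{\mathrm{str}}\) for some \(d\neq0\), provided \(\mathcal B\) contains \(\beta+d\). I will note this proviso on \(\mathcal B\), for example by taking \(\mathcal B=\R^k\).

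\emph{Insufficiency.} I will build a single example with \(T=2\), \(k=1\), \(\tau=1/2\). Let \(X=(X_1,X_2)\) with \(X_2-X_1\in\{-1,1\}\) equally likely, so that \(\mathcal D_T=\{0\}\). Take \(\beta=0\), \(A=0\), and \(U_t\) i.i.d.\ uniform on \([-1,1]\), independent of \(X\).

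For a small candidate \(b=d\), say \(|d|\le 1/2\), I will construct a coupling directly. Set
\[
A^\ast=-\tfrac12(X_1+X_2)d,
\]
which is a function of \(X\), and
\[
U_t^\ast=Y_t-X_t d-A^\ast=U_t-\tfrac12(X_t-X_{3-t})d .
\]
Then \(U_t^\ast\) is \(U_t\) shifted by the constant \(\pm d/2\) given \(X\), so its conditional median is \(\pm d/2\neq0\). That direct attempt therefore fails, and the construction must use the dependence between \(U_1\) and \(U_2\) and make \(A^\ast\) random.

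\emph{Main obstacle.} The hard step is choosing a latent \(A^\ast\), dependent on \((U_1,U_2)\), such that both period residuals have median \(A^\ast\) conditional on \((X,A^\ast)\). The intuition from the introduction is that a relative shift of size \(|d|\) is absorbable whenever it is smaller than the residual spread. My first try is a two-point latent effect, conditional on \(X\) with \(\Delta=X_2-X_1=1\):
\begin{itemize}
\item set \(A^\ast=a_1\) on the event \(\{Y_1<0,\ Y_2<0\}\cup\{\text{part of the mixed quadrants}\}\);
\item set \(A^\ast=a_2\) on the complementary event, with the groups arranged so that each group's period-1 and period-2 residuals share a median.
\end{itemize}
If this is messy, a simpler route is to choose the design so that \(U_1,U_2\) are comonotone, \(U_1=U_2=U\) uniform on \([-1,1]\). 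For \(b=d\) the residuals are then \(R_t=U-X_t d\), and both \(R_1\) and \(R_2\) are deterministic shifts of one variable \(U\). I would then take \(A^\ast=U-\tfrac12(X_1+X_2)d+\xi\) with \(\xi\) symmetric on \(\{-|d|/2,\,|d|/2\}\) and independent of \(X\), arranged so that \(R_t-A^\ast\) takes the values \(\pm|d|/2\) and \(0\) with the required conditional probabilities. A cleaner variant puts \(A^\ast=U-\tfrac12(X_1+X_2)d\) and adds independent symmetric noise to each \(U_t\) to create spread around the median. In either variant, verifying \eqref{eq:modelA} for the constructed law reduces to a finite check, which shows an interval of slopes lies in \(\Theta_I^{\mathrm{str}}\) despite \(\mathcal D_T=\{0\}\).
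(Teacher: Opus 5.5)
Your proof of the equivalence is correct and is the paper's own argument: subtract the time-invariant index \(X_1'd\) from the anchor, keep \(U\), use \(\sigma(X,\tilde A)=\sigma(X,A)\), and get the converse from \(-d\). The gap is in the insufficiency part, because none of the constructions you write down yields a counterexample. In the comonotone design, fix a path with \(X_2-X_1=1\) and \(d>0\). Your anchor equals \(R_1(d)\) when \(\xi=d/2\) and equals \(R_2(d)\) when \(\xi=-d/2\). Given \((X,A^\ast)\), both values of \(\xi\) remain possible only when both implied values of \(U\) lie in \([-1,1]\). On the event \(\{U>1-d,\ \xi=d/2\}\), which has probability \(d/4>0\), the alternative value \(U+d\) is impossible. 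There \(\xi=d/2\) holds with conditional probability one, so \(R_2(d)<A^\ast\) with conditional probability one, which violates \(P(R_2(d)<A^\ast\mid X,A^\ast)\leq 1/2\). This coupling therefore fails for every \(d\neq0\). The check is also not finite, since it depends on the conditional law of the continuous \(U\) given \(A^\ast\). In your \emph{cleaner variant}, given \((X,A^\ast)\) you have \(R_1(d)-A^\ast=\eta_1+\Delta d/2\) and \(R_2(d)-A^\ast=\eta_2-\Delta d/2\). Model (A) then requires both \(\pm d/2\) to be medians of the symmetric noise. Any noise with positive density at zero has a unique median, so this also fails.

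The simplest device, which the paper uses, is a disturbance whose \(\tau\)-quantile is not unique, i.e.\ whose conditional CDF is flat at level \(\tau\). If \(\varepsilon\) puts mass \(\tau\) at \(-\sigma\) and \(1-\tau\) at \(\sigma\), the nonrandom anchor \(A^{(b)}=-\tfrac12\iota'd\) absorbs every period-specific shift of size at most \(\sigma\) (Examples~\ref{ex:Tbinarynonpoint}--\ref{ex:Tcontinuousnonpoint}). The paper does not prove insufficiency inside the proof of Proposition~\ref{prop:absorptionA}. It relies on Example~\ref{ex:running}, where the single path \((0,1)\) gives \(\mathcal D_2=\{0\}\) yet \(\ThetaI=[-1,1]\), and on those later examples; citing any of them closes your gap. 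If you want to keep a continuous comonotone \(U\), the pairing of \(U\)-values within an anchor group must preserve the uniform law on all of \([-1,1]\), and the shift by \(d\) does not. A wrap-around pairing does. Set \(A^\ast=U-\one\{U>0\}-\min_tX_td\). Given \((X,A^\ast)\), \(U\) equals \(u_0\) or \(u_0+1\) with probability \(1/2\) each, for some \(u_0\leq0\). The larger residual minus \(A^\ast\) is then \(0\) or \(1\), and the smaller one is \(-|d|\) or \(1-|d|\). Hence \(A^\ast\) is a median of both residuals whenever \(|d|\leq1\).
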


\begin{proof}
For \(d\in\mathcal D_T\), there is a measurable
\(c_d(X)=X_t'd\) that does not depend on \(t\), outside one common null set.
Given an admissible structure for \(b\), set
\[
  A^{d}=A-c_d(X),\qquad U_t^{d}=U_t.
\]
Then
\[
  X_t'(b+d)+A^{d}+U_t^{d}
  =X_t'b+A+U_t=Y_t.
\]
Conditional on \(X\), the map \(A\mapsto A^d\) is one-to-one, so
\(\sigma(X,A^d)=\sigma(X,A)\).  The two conditional quantile inequalities are
therefore unchanged.  This proves feasibility of \(b+d\); applying the same
argument to \(-d\) proves the converse.
\end{proof}

If the within-path differences are square integrable, define
\begin{equation}
  G_T
  =
  \sum_{1\leq s<t\leq T}
  \mathbb E\!\left[
    (X_t-X_s)(X_t-X_s)'
  \right].
  \label{eq:within-rank-matrix}
\end{equation}
Then \(d'G_Td=0\) if and only if \(d\in\mathcal D_T\); hence
\eqref{eq:no-absorption-A} is equivalent to
\(\operatorname{rank}(G_T)=k\).  For a nonrandom regressor path this reduces to
full column rank of a matrix of within-path differences, for example the rows
\((x_t-x_1)'\), \(t=2,\ldots,T\).  This rank condition removes only the
algebraic absorption invariance.  It does not by itself guarantee that the
common conditional quantile restrictions separate every pair of slopes.

We henceforth write
\[
  \Theta_I:=\Theta_I^{\mathrm{str}}.
\]
All sharp-set statements remain valid without
\eqref{eq:no-absorption-A}; when the condition fails, they characterize the
corresponding union of observationally equivalent directions.  To report a
coefficient on a time-invariant regressor one must impose an additional
normalization or restriction on \(A\); the maintained model alone supplies none.

\section{Simple Outer Set via Moment Inequalities}
\label{sec:crossing}

The sharp set is a latent-variable object, but model (A) has a particularly simple
observable implication.  It supplies an inexpensive outer bound before any coupling
problem is solved.

\subsection{Crossing inequalities}

For a candidate \(b\), continue to write
\[
  R_t(b)=Y_t-X_t'b.
\]
Let
\begin{equation}
  c_\tau=\max\{\tau,1-\tau\}.
  \label{eq:ctau}
\end{equation}

\begin{proposition}[Observable crossing bound]
\label{prop:crossing}
If \(b\in\ThetaI^{\mathrm{str}}\), then for every \(v\in\R\), every ordered pair
\(s\neq t\), and \(P_X\)-almost every \(x\),
\begin{equation}
  P^{\mathrm{obs}}\{R_s(b)\leq v<R_t(b)\mid X=x\}
  \leq c_\tau.
  \label{eq:crossing}
\end{equation}
In particular, at the median no fixed level can be crossed strictly in either
direction by more than one half of the conditional population.
\end{proposition}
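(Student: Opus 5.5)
Given \(b\in\ThetaI^{\mathrm{str}}\), I will fix an admissible structural law \(\Pstr\in\mathfrak S(b)\) from Definition~\ref{def:structure}. Under it, \(R_t(b)=A+U_t\) almost surely for every \(t\). The plan is to bound the crossing probability conditional on \((X,A)\) and then integrate \(A\) out. The \((Y,X)\)-marginal of \(\Pstr\) is \(\Pobs\), so the left side of \eqref{eq:crossing} equals
\[
  \Pstr\{A+U_s\le v<A+U_t\mid X=x\}
  =\E^{\Pstr}\!\bigl[\Pstr\{A+U_s\le v<A+U_t\mid X,A\}\,\big|\,X=x\bigr]
\]
for \(P_X\)-a.e.\ \(x\), by the tower property. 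It therefore suffices to show that the inner conditional probability is at most \(c_\tau\) almost surely.

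Fix a realization of \((X,A)\) and set \(w=v-A\), which is measurable with respect to \(\sigma(X,A)\). The event of interest is \(\{U_s\le w\}\cap\{U_t>w\}\). I split into two cases according to the sign of \(w\), working on the \(\sigma(X,A)\)-measurable sets \(\{w\ge 0\}\) and \(\{w<0\}\).

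\emph{Case \(w\ge 0\).} The event is contained in \(\{U_t>w\}\subseteq\{U_t>0\}\). By \eqref{eq:modelA}, \(P(U_t\le 0\mid X,A)\ge\tau\), so \(P(U_t>0\mid X,A)\le 1-\tau\).

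\emph{Case \(w<0\).} The event is contained in \(\{U_s\le w\}\subseteq\{U_s<0\}\). By \eqref{eq:modelA}, \(P(U_s<0\mid X,A)\le\tau\).

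In both cases the bound is at most \(\max\{\tau,1-\tau\}=c_\tau\). To make the case split rigorous conditionally, I will write the indicator of the event as
\[
  \one\{w\ge0\}\,\one\{U_t>w\}\,\one\{U_s\le w\}
  +\one\{w<0\}\,\one\{U_t>w\}\,\one\{U_s\le w\},
\]
bound each term by the relevant indicator, and take conditional expectations given \((X,A)\), pulling out the measurable factors \(\one\{w\ge0\}\) and \(\one\{w<0\}\).

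The main technical point is handling ``almost every \(x\)'' uniformly in \(v\). The conditional inequalities in \eqref{eq:modelA} hold on a single \(\Pstr\)-null complement that does not depend on \(v\), because \(w\) enters only through the measurable split. So for each fixed \(v\) the bound holds for \(P_X\)-a.e.\ \(x\). If a version valid simultaneously for all \(v\) is wanted, I will restrict to rational \(v\) and use right-continuity in \(v\) of \(\Pobs\{R_s\le v<R_t\mid X\}\) under regular conditional distributions, which exist because the spaces are standard Borel. Finally, the median remark follows from \(c_{1/2}=1/2\), applied to both ordered pairs \((s,t)\) and \((t,s)\).
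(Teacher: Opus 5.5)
Your proposal is correct and follows the paper's proof essentially verbatim. Splitting on $\{v-A\ge 0\}$ versus $\{v-A<0\}$ is the paper's split on $\{A\le v\}$ versus $\{A>v\}$: the crossing event lies in $\{U_t>0\}$ (conditional probability at most $1-\tau$) or in $\{U_s<0\}$ (at most $\tau$), and the bound follows by integrating over $A$ given $X=x$. Your extra step, which makes the null set uniform in $v$ via rational thresholds and right-continuity of $v\mapsto P\{R_s(b)\le v<R_t(b)\mid X=x\}$ under a regular conditional distribution, is valid and is a small refinement that the paper leaves implicit.
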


In words: among individuals with a given regressor path, those whose
period-\(s\) residual falls at or below a level \(v\) while their period-\(t\)
residual exceeds it cannot make up more than a fraction \(c_\tau\) of the
population.  The reason is that such an individual's common effect must be
either at or below \(v\), in which case the period-\(t\) disturbance is
positive, or above \(v\), in which case the period-\(s\) disturbance is
negative; each of these tail events has conditional probability at most
\(1-\tau\) or \(\tau\), respectively.

\begin{proof}
Fix \(b\in\ThetaI^{\mathrm{str}}\) and write
\(E_{st}(v;b)=\{R_s(b)\leq v<R_t(b)\}\).  Under an admissible structural
law, \(R_j(b)=A+U_j\).  Conditional on \((X,A)\), split according to the
position of \(A\).  If \(A\leq v\), then
\[
  E_{st}(v;b)\subseteq\{R_t(b)>v\}\subseteq\{U_t>0\},
\]
so its conditional probability is at most \(1-\tau\).  If \(A>v\), then
\[
  E_{st}(v;b)\subseteq\{R_s(b)\leq v\}\subseteq\{U_s<0\},
\]
so its conditional probability is at most \(\tau\).  Hence, almost surely,
\begin{equation}
  P\{E_{st}(v;b)\mid X,A\}
  \leq(1-\tau)\one\{A\leq v\}+\tau\one\{A>v\}.
  \label{eq:conditional-crossing-bound}
\end{equation}
Taking expectations conditional on \(X=x\) bounds the crossing probability by
\[
  (1-\tau)P(A\leq v\mid X=x)+\tau P(A>v\mid X=x)
  \leq\max\{\tau,1-\tau\}.
\]
At the median the last expression equals \(1/2\).  The argument uses only one
necessary marginal tail event in each case and never multiplies probabilities across
periods.
\end{proof}

\begin{remark}[Why the bound is conditional on the full path]
\label{rem:crossingconditioning}
The conditional-on-\(X=x\) conclusion in \eqref{eq:crossing} uses the fact that
both period-specific quantile restrictions hold given the same full path.  Under
only contemporaneous restrictions given \((X_s,A)\) and \((X_t,A)\), that
pathwise conclusion generally does not follow.  The same proof can first average
each tail restriction down to \(A\) and then integrate over \(A\), yielding only
the unconditional implication
\[
  P^{\mathrm{obs}}\{R_s(b)\leq v<R_t(b)\}\leq c_\tau.
\]
Accordingly, contemporaneous conditioning would require a different, global
coupling problem rather than the path-by-path construction used in this paper.
\end{remark}

\begin{corollary}[Observable conditional moment inequalities]
\label{cor:crossingmoments}
Every \(b\in\ThetaI\) satisfies
\begin{equation}
  \E_{\mathrm{obs}}\!\left[
    h(X)\{\one(R_s(b)\leq v<R_t(b))-c_\tau\}
  \right]\leq0
  \label{eq:observablecmi}
\end{equation}
for every ordered pair \(s\neq t\), every \(v\in\R\), and every bounded,
nonnegative, Borel-measurable instrument \(h(X)\).
\end{corollary}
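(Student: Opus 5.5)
The corollary follows from Proposition~\ref{prop:crossing} by integrating the conditional crossing bound against the instrument. Fix \(b\in\ThetaI\), an ordered pair \(s\neq t\), a level \(v\in\R\), and a bounded, nonnegative, Borel-measurable \(h\). Set
\[
  g(x)=\Pobs\{R_s(b)\leq v<R_t(b)\mid X=x\},
\]
a version of the regular conditional probability. Such a version exists because \((Y,X)\) takes values in a standard Borel space. The function \(g\) is Borel-measurable with values in \([0,1]\). By Proposition~\ref{prop:crossing}, \(g(x)\leq c_\tau\) for \(P_X\)-almost every \(x\).

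The key step is the tower property. Because \(h(X)\) is bounded and \(\sigma(X)\)-measurable, and the indicator is bounded, every expectation below is finite. This gives
\[
  \E_{\mathrm{obs}}\!\left[h(X)\one(R_s(b)\leq v<R_t(b))\right]
  =\E_{\mathrm{obs}}\!\left[h(X)\,g(X)\right].
\]
Subtracting \(c_\tau\E_{\mathrm{obs}}[h(X)]\) from both sides yields
\[
  \E_{\mathrm{obs}}\!\left[h(X)\{\one(R_s(b)\leq v<R_t(b))-c_\tau\}\right]
  =\E_{\mathrm{obs}}\!\left[h(X)\{g(X)-c_\tau\}\right].
\]
The integrand on the right is the product of \(h(X)\geq0\) and \(g(X)-c_\tau\leq0\). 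It is therefore nonpositive \(P_X\)-almost surely, so its expectation is \(\leq0\). This is exactly \eqref{eq:observablecmi}.

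The only point needing care is that the almost-sure qualifier in Proposition~\ref{prop:crossing} is harmless. The exceptional null set may depend on \((s,t,v)\), but here these are fixed before integrating, so it is a single \(P_X\)-null set and does not affect the expectation. No uniformity over \(v\) or over \(h\) is needed, since the statement quantifies over them one at a time. I expect no real obstacle: the argument is a direct integration of the pathwise bound. Nonnegativity of \(h\) is what preserves the direction of the inequality, and boundedness of \(h\) is what guarantees integrability.
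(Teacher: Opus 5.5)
Your proof is correct and follows the paper's own argument: the paper also multiplies the conditional crossing bound of Proposition~\ref{prop:crossing} by \(h(X)\geq0\) and applies iterated expectations. Your remarks on choosing a version of the conditional probability and on the \((s,t,v)\)-dependent null set spell out details that the paper's one-line proof leaves implicit.
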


\begin{proof}
Multiply \eqref{eq:crossing} by \(h(X)\geq0\) and apply iterated expectations.
\end{proof}

Corollary \ref{cor:crossingmoments} is useful with continuously distributed outcomes
and regressors: it involves only observed variables and the candidate \(b\).
Indicators of covariate cells give finite-partition outer restrictions, while richer
instrument classes tighten them.  Crucially, the event in
\eqref{eq:observablecmi} uses each observation's actual candidate residuals.  Replacing
within-cell regressors by a representative value is a separate approximation.

\begin{remark}[{\bf Crossing inequalities under predetermined regressors}]
\label{rem:crossing-predetermined}

The crossing inequalities extend directly to a predeterminedness restriction,
with one important change: the relevant conditioning set is the history
available at the earlier of the two periods, rather than the complete regressor
path.

To see this first for $T=2$, suppose that, in place of strict exogeneity, we
maintain
\[
    0\in Q_\tau(U_1\mid X_1,A)
    \qquad\text{and}\qquad
    0\in Q_\tau(U_2\mid X_1,X_2,A).
    \tag{P}
\]
Here and below, $0\in Q_\tau(U\mid\mathcal G)$ means
\[
    P(U\leq 0\mid\mathcal G)\geq\tau,
    \qquad
    P(U<0\mid\mathcal G)\leq\tau .
\]
The second restriction in (P) also holds after conditioning on the smaller
information set $(X_1,A)$.  Indeed, by iterated expectations,
\[
\begin{split}
    P(U_2\leq0\mid X_1,A)
    &=
    E\!\left[
        P(U_2\leq0\mid X_1,X_2,A)
        \mid X_1,A
    \right]
    \geq \tau,\\
    P(U_2<0\mid X_1,A)
    &=
    E\!\left[
        P(U_2<0\mid X_1,X_2,A)
        \mid X_1,A
    \right]
    \leq \tau.
\end{split}
\]
Hence
\[
    0\in Q_\tau(U_t\mid X_1,A),
    \qquad t=1,2.
\]
Thus the two disturbances share the same conditional quantile anchor on the
common information set $(X_1,A)$.

Let
\[
    R_t(b)=Y_t-X_t'b,
    \qquad
    c_\tau=\max\{\tau,1-\tau\}.
\]
At the true parameter $b=\beta$,
\[
    R_t(\beta)=A+U_t.
\]
The argument used for the strict-exogeneity crossing bound therefore gives,
for every $v\in\mathbb R$,
\[
    P\!\left(
        R_1(\beta)\leq v<R_2(\beta)
        \mid X_1
    \right)
    \leq c_\tau,
    \tag{P-C$_{12}$}
\]
and
\[
    P\!\left(
        R_2(\beta)\leq v<R_1(\beta)
        \mid X_1
    \right)
    \leq c_\tau.
    \tag{P-C$_{21}$}
\]
The form of the crossing inequality is therefore unchanged.  What changes is
the conditioning set.  Under strict exogeneity the corresponding inequalities
hold conditional on $(X_1,X_2)$; under predeterminedness they are guaranteed
only conditional on $X_1$.

This distinction is substantive.  In general,
\[
    0\in Q_\tau(U_1\mid X_1,A)
\]
does not imply
\[
    0\in Q_\tau(U_1\mid X_1,X_2,A).
\]
The future regressor $X_2$ may depend on $U_1$, so conditioning on $X_2$ may
alter the conditional quantile of the first-period disturbance.  Consequently,
under predeterminedness one cannot in general impose the crossing inequalities
conditional on the complete regressor path.

The same argument applies for arbitrary $T$.  Write
\[
    \bar X_t=(X_1,\ldots,X_t),
\]
and suppose that
\[
    0\in Q_\tau(U_t\mid \bar X_t,A),
    \qquad t=1,\ldots,T.
    \tag{Pred}
\]
For any $s<t$, the period-$t$ restriction can be coarsened from
$(\bar X_t,A)$ to $(\bar X_s,A)$.  Hence
\[
    0\in Q_\tau(U_s\mid \bar X_s,A),
    \qquad
    0\in Q_\tau(U_t\mid \bar X_s,A).
\]
It follows that, for every pair $s<t$ and every $v\in\mathbb R$,
\[
    P\!\left(
        R_s(\beta)\leq v<R_t(\beta)
        \mid \bar X_s
    \right)
    \leq c_\tau,
    \tag{P-C$_{st}^{+}$}
\]
and
\[
    P\!\left(
        R_t(\beta)\leq v<R_s(\beta)
        \mid \bar X_s
    \right)
    \leq c_\tau.
    \tag{P-C$_{st}^{-}$}
\]
Equivalently, for any two distinct periods $s$ and $t$, the crossing
restriction may be conditioned on
\[
    \bar X_{\min\{s,t\}},
\]
the largest regressor history with respect to which both period-specific
quantile restrictions are necessarily valid.

Thus a natural crossing-based outer set under predeterminedness is
\[
\mathcal B_{\mathrm{pred}}^{\,\mathrm{cross}}
=
\left\{
b:
\begin{array}{l}
P\!\left(
R_s(b)\leq v<R_t(b)\mid \bar X_s
\right)\leq c_\tau,\\[2mm]
P\!\left(
R_t(b)\leq v<R_s(b)\mid \bar X_s
\right)\leq c_\tau,
\end{array}
\quad
\forall\, s<t,\ \forall\,v\in\mathbb R
\right\},
\]
where the inequalities are understood to hold almost surely in $\bar X_s$.

Relative to strict exogeneity, predeterminedness therefore does not change the
basic crossing argument; it changes the information on which the argument can
be conditioned.  Strict exogeneity permits pairwise crossings to be examined
within cells of the complete path $X=(X_1,\ldots,X_T)$, whereas
predeterminedness requires the $(s,t)$ comparison to average over regressors
dated after period $s$.  The resulting crossing outer set is therefore, in
general, weaker.
\end{remark}

\subsection{Two-period crossing profiles}

Fix \(T=2\) and a regressor path \(X=x=(x_1,x_2)\).  The path restricts \(b\)
through
\[
  d=(x_2-x_1)'b.
\]
Adding the common number \(x_1'b\) to both residual coordinates does not change
crossing events after the threshold is relabeled.  Define
\begin{align}
  G_x(d)
  &=\sup_{v\in\R}
    P^{\mathrm{obs}}\{Y_1\leq v<Y_2-d\mid X=x\},
  \label{eq:Gprofile}\\
  H_x(d)
  &=\sup_{v\in\R}
    P^{\mathrm{obs}}\{Y_2-d\leq v<Y_1\mid X=x\}.
  \label{eq:Hprofile}
\end{align}
The function \(G_x\) is nonincreasing in \(d\), and \(H_x\) is nondecreasing.
Proposition \ref{prop:crossing} therefore gives the pathwise outer set
\begin{equation}
  D_x^{\mathrm{cross}}
  =\{d\in\R:G_x(d)\leq c_\tau,\ H_x(d)\leq c_\tau\},
  \label{eq:crossinterval}
\end{equation}
which is an interval, possibly empty, unbounded, or with one or both endpoints
excluded.  Atoms determine endpoint inclusion, so it is safest to retain the direct
inequalities in \eqref{eq:crossinterval}.  With finite conditional support, the
profiles change only at finitely many differences \(y_2^j-y_1^h\), and every boundary
point can be checked explicitly.

With finitely many regressor paths, intersecting
\[
  \{b:(x_2-x_1)'b\in D_x^{\mathrm{cross}}\}
\]
over \(x\) produces a convex, easily computed outer set.  For \(T>2\), intersecting
the analogous restriction over every period pair remains outer-valid.

\paragraph{Why first differences alone are insufficient.}
The crossing restrictions use the joint distribution of residual levels, not only
their differences.  This is essential.  Model (A) places no restriction on the law
of \(U_2-U_1\): for any target difference \(W\), one may mix
\((-W,0)\) with \((0,W)\) when \(W>0\), and use the mirrored construction when
\(W<0\), with weights chosen so that both coordinates retain conditional
\(\tau\)-quantile zero.  Consequently, the law of \(\Delta Y\) considered in
isolation cannot exclude a slope under model (A).  Identification comes from the
compatibility of the observed outcome levels with one common latent anchor.

The next section turns the structural definition into an exact restriction on
couplings of the observed data with that scalar anchor.

\subsection{A support-overlap outer bound}
\label{subsec:supportouter}

\begin{lemma}[Common support overlap is necessary]
\label{lem:overlap}
For a candidate \(b\), define
\[
  \ell_t(x,b)=\essinf\{R_t(b)\mid X=x\},
  \qquad
  u_t(x,b)=\esssup\{R_t(b)\mid X=x\}.
\]
If \(b\in\ThetaI\), then for \(P_X\)-almost every \(x\),
\begin{equation}
  \bigcap_{t=1}^T[\ell_t(x,b),u_t(x,b)]\neq\varnothing.
  \label{eq:overlap}
\end{equation}
\end{lemma}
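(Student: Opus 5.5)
The plan is to argue by contradiction, working under an admissible structural law \(\Pstr\in\mathfrak S(b)\) and conditioning on \(X=x\).

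Suppose the intersection in \eqref{eq:overlap} is empty on a set of paths with positive \(P_X\)-measure. For a family of closed intervals on the line, emptiness of the intersection is equivalent to the existence of two periods \(s,t\) with \(u_s(x,b)<\ell_t(x,b)\). This is Helly's theorem in dimension one: \(\max_t \ell_t>\min_t u_t\). Because there are only finitely many ordered pairs, we can fix one pair \((s,t)\) for which this strict separation holds on a positive-measure set of \(x\).

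The key step is to show that such a strict gap is incompatible with a common anchor \(A\). Under \(\Pstr\) we have \(R_j(b)=A+U_j\). Model \eqref{eq:modelA} gives, conditionally on \((X,A)\), \(P(R_s(b)\le A\mid X,A)\ge\tau>0\). Hence, given \(X=x\), we get \(A\ge \essinf\) of \(R_s(b)\) on a positive-probability part of the event; more precisely, \(A\) must lie in the essential range of \(R_s(b)\) given \((X,A)\) from above. I would make this precise as follows. Condition on \(X=x\) and \(A=a\). From \(P(R_s\le a\mid x,a)\ge\tau>0\), and since \(R_s\ge \ell_s(x,b)\) almost surely given \(x\), hence almost surely given \((x,a)\) for almost every \(a\), we obtain \(a\ge \ell_s\). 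Similarly, from \(P(R_t<a\mid x,a)\le\tau<1\) we get \(P(R_t\ge a\mid x,a)>0\), so \(a\le u_t\).

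More usefully, the bounds we need go the other way. From \(P(R_s\ge a\mid x,a)=1-P(R_s<a\mid x,a)\ge 1-\tau>0\) and \(R_s\le u_s\) almost surely, we get \(a\le u_s(x,b)\). From \(P(R_t\le a\mid x,a)\ge\tau>0\) and \(R_t\ge\ell_t\) almost surely, we get \(a\ge \ell_t(x,b)\). Thus for \(P_X\)-a.e.\ \(x\) and \(P(\cdot\mid x)\)-a.e.\ \(a\), the value \(a\) lies in \([\ell_t(x,b),u_s(x,b)]\). This interval is empty when \(u_s<\ell_t\), which contradicts the fact that \(A\mid X=x\) is a probability law. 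Doing this for all \(t\) simultaneously shows directly that \(A\) lies in \(\bigcap_t[\ell_t,u_t]\) almost surely, so the Helly reduction is not even needed.

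The main obstacle is the measure-theoretic bookkeeping. We must transfer "\(R_t\in[\ell_t(x,b),u_t(x,b)]\) a.s.\ given \(X=x\)" to a.s.\ given \((X,A)\), and handle null sets. I would do this with regular conditional distributions on the standard Borel spaces, as assumed in the paper. The event \(\{R_t\notin[\ell_t(X,b),u_t(X,b)]\}\) has \(\Pstr\)-probability zero, so its conditional probability given \((X,A)\) vanishes almost surely. The ess inf/sup are measurable functions of \(x\) with values in \([-\infty,\infty]\), so infinite endpoints cause no trouble. Finally, intersect the finitely many (\(2T\)) almost-sure events on which the bounds hold.
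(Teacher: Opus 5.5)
Your argument is correct and is essentially the paper's proof. The weak inequality rules out an anchor below \(\ell_t(x,b)\), and the strict inequality rules out an anchor above \(u_t(x,b)\). So almost every anchor lies in every period's conditional support interval, and the intersection is nonempty because \(A\mid X=x\) is a probability law.

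The differences are cosmetic. You work under a structural law rather than a coupling, which is equivalent by Theorem~\ref{thm:sharp}. The Helly/contradiction framing is unnecessary, as you note yourself. Your null-set bookkeeping via \(\Pstr\{R_t(b)\notin[\ell_t(X,b),u_t(X,b)]\}=0\) is more explicit than the paper's appeal to disintegration.
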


\begin{proof}
The proof uses the coupling representation of Section~\ref{sec:coupling}:
by Theorem~\ref{thm:sharp}, \(b\in\ThetaI\) if and only if there is a joint
law \(\pi\) of \((Y,X,A)\) with the observed \((Y,X)\)-marginal under which
\(A\) is a conditional \(\tau\)-quantile of every \(R_t(b)\) given \((X,A)\);
write \(\cC_b(\Pobs)\) for the set of such laws.
Let \(\pi\in\cC_b(\Pobs)\).  Fix an \(x\) outside the null sets used in the
conditional distributions.  Since the conditional residual marginal is the observed
one, disintegration implies that, for \(P_\pi(A\in da\mid X=x)\)-almost every \(a\),
the conditional law of \(R_t(b)\) given \((X,A)=(x,a)\) is supported on
\([\ell_t(x,b),u_t(x,b)]\).

If \(a<\ell_t(x,b)\), then
\(P_\pi(R_t(b)\leq a\mid X=x,A=a)=0<\tau\).
If \(a>u_t(x,b)\), then
\(P_\pi(R_t(b)<a\mid X=x,A=a)=1>\tau\).
Feasibility therefore requires
\(a\in[\ell_t(x,b),u_t(x,b)]\) for every \(t\), for almost every anchor used by the
coupling.  The conditional law of \(A\) is a probability measure, so the intersection
in \eqref{eq:overlap} must be nonempty.
\end{proof}

\paragraph{Support-overlap outer set.}
Lemma \ref{lem:overlap} implies
\begin{equation}
  \ThetaI\subseteq\Theta_I^{\mathrm{supp}}
  :=
  \left\{b:
  \max_t\ell_t(x,b)\leq\min_tu_t(x,b)
  \text{ for \(P_X\)-almost every }x\right\}.
  \label{eq:supportouterset}
\end{equation}
With finite conditional support, these endpoints are the smallest and
largest candidate residuals in each period.  The restriction can therefore be
checked directly from the observed support, without choosing a distribution for
\(A\).  Section~\ref{sec:lp} explains how to use it as a preliminary check
before computing the sharp set.

\section{Sharp latent-coupling characterization}
\label{sec:coupling}

Definition~\ref{def:sharpstruct} describes the identified set in terms of
complete structural laws for \((Y,X,A,U)\).  This section shows that the
disturbances can be dropped from that description: a slope is in the
identified set if and only if the observed distribution of \((Y,X)\) can be
coupled with a scalar \(A\) so that \(A\) is a conditional \(\tau\)-quantile
of every candidate residual.  The reformulation is elementary, since at any
candidate slope the disturbance is just the residual minus the effect, but it
is the form in which the identification problem becomes tractable: it
replaces a search over structures by a search over one latent scalar per
individual, and Sections~\ref{sec:duality} and~\ref{sec:lp} operate entirely
on it.

For \(b\in\cB\), define the candidate residual coordinate
\begin{equation}
  R_t(b)=Y_t-X_t'b,
  \qquad t=1,\ldots,T.
  \label{eq:residual}
\end{equation}
Let \(\Pi(\Pobs)\) be the collection of all Borel probability laws \(\pi\) of
\((Y,X,A)\) whose \((Y,X)\)-marginal is \(\Pobs\).  The conditional law of \(A\)
may therefore depend arbitrarily on the observed pair \((Y,X)\).

\begin{definition}[Feasible latent coupling]
\label{def:coupling}
For \(b\in\cB\), let \(\cC_b(\Pobs)\) contain every
\(\pi\in\Pi(\Pobs)\) satisfying, for \(t=1,\ldots,T\),
\begin{align}
  \pi\{R_t(b)\leq A\mid X,A\}&\geq\tau,
  \label{eq:coupleweak}\\
  \pi\{R_t(b)<A\mid X,A\}&\leq\tau
  \label{eq:couplestrict}
\end{align}
\(\pi\)-almost surely.
\end{definition}

\begin{theorem}[Exact coupling characterization and sharpness]
\label{thm:sharp}
Under only \eqref{eq:structural}--\eqref{eq:modelA},
\begin{equation}
  \ThetaI^{\mathrm{str}}
  =
  \ThetaI^{\mathrm{coup}}
  :=
  \{b\in\cB:\cC_b(\Pobs)\neq\varnothing\}.
  \label{eq:sharpequality}
\end{equation}
Consequently, the right-hand side of \eqref{eq:sharpequality} is the sharp identified
set: every included \(b\) is rationalized by a complete admissible structure, and no
excluded \(b\) can be rationalized by any structure satisfying model (A).
\end{theorem}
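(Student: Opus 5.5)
The plan is to prove the two inclusions separately. Both rest on the identity \(U_t=R_t(b)-A\), which holds at any candidate slope \(b\). Once \(A\) is given, \((Y,X,A,U)\) and \((Y,X,A)\) carry the same information.

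\emph{Inclusion \(\ThetaI^{\mathrm{str}}\subseteq\ThetaI^{\mathrm{coup}}\).} Fix \(b\) with some \(\Pstr\in\mathfrak S(b)\), and let \(\pi\) be the \((Y,X,A)\)-marginal of \(\Pstr\). Condition (i) of Definition~\ref{def:structure} gives \(\pi\in\Pi(\Pobs)\). Condition (ii) gives, \(\Pstr\)-almost surely,
\[
  \{U_t\le 0\}=\{R_t(b)\le A\},\qquad \{U_t<0\}=\{R_t(b)<A\}.
\]
These events are functions of \((Y,X,A)\). So their conditional probabilities given \(\sigma(X,A)\) are the same under \(\Pstr\) and under its marginal \(\pi\), up to a null set. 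Hence \eqref{eq:modelA} transfers verbatim to \eqref{eq:coupleweak}--\eqref{eq:couplestrict}, and \(\pi\in\cC_b(\Pobs)\). The only technical point is that conditional probabilities are versions defined up to null sets. I will note that a \(\Pstr\)-null set in \(\sigma(X,A)\) is \(\pi\)-null, since the two laws agree on \((X,A)\).

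\emph{Inclusion \(\ThetaI^{\mathrm{coup}}\subseteq\ThetaI^{\mathrm{str}}\).} Given \(\pi\in\cC_b(\Pobs)\), I will construct the structure explicitly. Take \((Y,X,A)\sim\pi\) and define \(U_t:=Y_t-X_t'b-A\), a measurable map of \((Y,X,A)\). Let \(\Pstr\) be the pushforward of \(\pi\) under \((y,x,a)\mapsto (y,x,a,(y_t-x_t'b-a)_{t\le T})\). This is a Borel law on a Euclidean space. It verifies the three conditions:
\begin{enumerate}[label=(\roman*)]
\item its \((Y,X)\)-marginal is that of \(\pi\), namely \(\Pobs\);
\item \(Y_t=X_t'b+A+U_t\) holds identically;
\item the quantile inequalities \eqref{eq:modelA} hold because \(\{U_t\le0\}=\{R_t(b)\le A\}\) and \(\{U_t<0\}=\{R_t(b)<A\}\) pointwise, together with \eqref{eq:coupleweak}--\eqref{eq:couplestrict}.
\end{enumerate}
Hence \(\Pstr\in\mathfrak S(b)\). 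No restriction on the joint law of \(U\) or on \(A\mid X\) needs checking, since the model imposes none. That is exactly why the disturbances can be dropped.

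The sharpness statement then follows directly from Definition~\ref{def:sharpstruct} combined with the equality \eqref{eq:sharpequality}. I expect no substantive obstacle. The main care goes to the measure-theoretic bookkeeping: that conditional probabilities of events determined by \((Y,X,A)\) given \(\sigma(X,A)\) coincide under the extended law and its marginal, which is immediate because the extension is a deterministic function of \((Y,X,A)\). A secondary point is that the ``almost surely'' qualifiers refer to the same null sets under both laws. Standard Borel spaces guarantee that regular conditional versions exist, but the argument does not even require them.
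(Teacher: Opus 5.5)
Your proposal is correct and matches the paper's proof step for step. For necessity, you marginalize an admissible structural law to $(Y,X,A)$ and use $U_t=R_t(b)-A$ to translate the quantile events. For sufficiency, you define $U_t:=R_t(b)-A$ as a deterministic function of a feasible coupling and take the resulting law as $\Pstr$. Your remarks on versions of conditional probabilities and shared null sets only make explicit the bookkeeping that the paper leaves implicit.
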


\begin{proof}
\textit{Necessity.}
Let \(b\in\ThetaI^{\mathrm{str}}\).  By
Definition \ref{def:sharpstruct}, choose
\(\Pstr\in\mathfrak S(b)\), and let \(\pi\) be its marginal law of \((Y,X,A)\).
Property (i) of Definition \ref{def:structure} implies
\(\pi\in\Pi(\Pobs)\).  By the structural identity,
\[
  U_t=Y_t-X_t'b-A=R_t(b)-A
  \qquad \Pstr\text{-almost surely}.
\]
Therefore, conditional on \((X,A)\),
\[
  \{U_t\leq0\}=\{R_t(b)\leq A\},
  \qquad
  \{U_t<0\}=\{R_t(b)<A\}
\]
up to null sets.  The two restrictions in \eqref{eq:modelA} become
\eqref{eq:coupleweak}--\eqref{eq:couplestrict}.  Hence
\(\pi\in\cC_b(\Pobs)\), so \(b\in\ThetaI^{\mathrm{coup}}\).

\textit{Sufficiency.}
Let \(b\in\ThetaI^{\mathrm{coup}}\), and choose
\(\pi\in\cC_b(\Pobs)\).  On the probability space carrying
\((Y,X,A)\sim\pi\), define, coordinate by coordinate,
\begin{equation}
  U_t:=Y_t-X_t'b-A=R_t(b)-A.
  \label{eq:constructU}
\end{equation}
Let \(\Pstr\) be the joint law of \((Y,X,A,U)\) generated by this deterministic
transformation.  The \((Y,X)\)-marginal of \(\Pstr\) is \(\Pobs\), and
\eqref{eq:structural} holds identically.  Moreover,
\[
  P^{\mathrm{s}}(U_t\leq0\mid X,A)
  =\pi\{R_t(b)\leq A\mid X,A\}\geq\tau
\]
and
\[
  P^{\mathrm{s}}(U_t<0\mid X,A)
  =\pi\{R_t(b)<A\mid X,A\}\leq\tau.
\]
Thus \(\Pstr\in\mathfrak S(b)\), and
\(b\in\ThetaI^{\mathrm{str}}\).

The two containments prove \eqref{eq:sharpequality}.  The construction
\eqref{eq:constructU} also proves attainability for every included candidate, while
the necessity argument proves that no omitted candidate can satisfy the maintained
model.  This is precisely sharpness.
\end{proof}

\begin{remark}[Kernels are implied, not additionally assumed]
\label{rem:kernels}
Because all spaces are standard Borel, any feasible joint law admits regular
conditional distributions \(P(A\in da\mid X=x)\) and
\(P(U\in du\mid X=x,A=a)\).  Thus Theorem \ref{thm:sharp} also constructs the
usual kernel representation \((F_{A\mid X},F_{U\mid X,A})\).  No independence
factorization of either kernel is used.
\end{remark}

\begin{remark}[Global and pathwise formulations]
\label{rem:globalpathwise}
The global coupling formulation automatically includes the measurable-selection
requirement when \(X\) is continuously distributed.  When \(X\) has finite support,
it separates exactly across positive-probability regressor paths: a candidate is
feasible if and only if a conditional coupling exists in every path.  Section
\ref{sec:lp} uses this pathwise factorization for computation.  Marginalizing a
feasible full-panel coupling to any pair of periods also shows that pairwise sharp
sets are outer for the full-panel set.
\end{remark}

\paragraph{How to read the coupling.}
It is useful to think of a coupling as a sorting rule.  Each individual is
observed with a regressor path and an outcome vector; the coupling assigns to
that individual a value of the effect \(A\), possibly at random.  Conditioning
on \((X,A)=(x,a)\) collects all individuals with regressor path \(x\) who were
assigned the effect \(a\).  Conditions
\eqref{eq:coupleweak}--\eqref{eq:couplestrict} say that, within every such
group, the number \(a\) must be a \(\tau\)-quantile of the period-1 residuals,
and of the period-2 residuals, and so on through period \(T\).  A candidate
slope belongs to the identified set exactly when the observed population can
be sorted into groups with this property.  Two features of the sorting give
the model its content.  The assignment is made once per individual, not once
per period, so the same \(a\) has to work for all \(T\) residual coordinates
of everyone in the group.  And the assignment may depend on the outcomes,
which is why no restriction on the dependence between \(A\) and \((Y,X)\) is
needed or used.

At the true slope, sorting by the true individual effect works, since the
residuals are then \(A+U_t\) and zero is a conditional \(\tau\)-quantile of
each \(U_t\).  At a false slope \(b=\beta+d\), the residuals become
\(A+U_t-X_t'd\): the true composite residual shifted by an amount that
varies over time whenever \(X_t'd\) does.  The question is whether some other
sorting can accommodate the time-varying shift.  When the shift is small
relative to the spread of the residuals it typically can, and when it is
large it cannot.  The following example, which is carried through
Sections~\ref{sec:duality} and~\ref{sec:lp}, makes this concrete.

\begin{example}[Two outcome points]
\label{ex:running}
Let \(T=2\), \(\tau=1/2\), \(\cB=\mathbb R\), and let the regressor be scalar
with a single path \(X=(X_1,X_2)=(0,1)\) of probability one, so that
conditioning on the path is vacuous.  (Statements about groups below hold up
to null sets under the coupling.)  Suppose the observed outcome vector
satisfies
\[
  P\{(Y_1,Y_2)=(0,0)\}=P\{(Y_1,Y_2)=(1,1)\}=\tfrac12 .
\]
These data are generated, for instance, by \(\beta=0\) with \(A\in\{0,1\}\)
equally likely and \(U_1=U_2=0\); or by \(\beta=0\) with \(A=0\) and
perfectly persistent disturbances \(U_1=U_2\in\{0,1\}\) equally likely.  Model
(A) does not distinguish these structures, and both place \(b=0\) in the
identified set.

For a candidate \(b\), the residuals are \(R_1(b)=Y_1\) and
\(R_2(b)=Y_2-b\).  The outcome vector \((0,0)\) has residual pair
\((0,-b)\) and the vector \((1,1)\) has residual pair \((1,1-b)\).  A
group containing only one of the two outcome vectors cannot work unless
\(b=0\): with a single point of mass, \(a\) would have to equal both of its
residuals, which differ.  So for \(b\neq0\) every group must contain both
vectors, and by the same reasoning it must contain them with equal weight:
if one vector carried more than half of a group's mass, the group's median in
each period would be that vector's residual, and \(a\) would again have to
equal two different numbers.  Consider, then, a group with a common effect
\(a\) and equal weight on the two vectors.  Its period-1 residuals are
\(\{0,1\}\), and \(a\) is a median of them if and only if \(0\leq a\leq1\).  The period-2 residuals are \(\{-b,1-b\}\), and
\(a\) is a median of them if and only if \(-b\leq a\leq1-b\).  Both
requirements can be met if and only if \(|b|\leq1\).  Hence
\[
  \ThetaI=[-1,1].
\]
The endpoints belong to the set because of the weak and strict inequalities:
at \(b=1\), the common effect \(a=0\) is a median of \(\{0,1\}\) and of
\(\{-1,0\}\).

The width of the identified set has a simple reading.  Moving from
\(\beta=0\) to \(b\) shifts the period-2 residuals relative to the period-1
residuals by \(-(X_2-X_1)b=-b\).  The pooled group can absorb a relative shift
of at most the spread of the residuals within a period, which is one here.
The identified set is therefore the set of \(b\) for which
\(|X_2-X_1|\,|b|\) does not exceed the residual spread.
Section~\ref{sec:longpanel} shows that this reading is general as an outer
bound: the set of slopes for which the within-individual change in the index
does not exceed the width of the composite residual always contains
\(\ThetaI\).  It is exact here because, with two equally weighted residual
values, every point between them is a median, so the entire spread is
available to absorb the shift.  In other designs the outer bound can be
strict: Section~\ref{sec:longpanel} gives examples in which it is exact, and
Example~\ref{ex:crossingnotsharp} gives a configuration in which even the
finer crossing bounds of Section~\ref{sec:crossing} are not.
\hfill\(\square\)
\end{example}

\section{Sharp observable duality}
\label{sec:duality}

Theorem~\ref{thm:sharp} describes the identified set through an object that
is never observed: a joint distribution of the data and the individual
effect.  Econometricians are more accustomed to identified sets described by
observable restrictions, typically a family of moment inequalities in the
parameter and the distribution of the data.  This section shows that model
(A) can be written in exactly that form, and that, when the observed
variables are bounded, nothing is lost in the translation.  The family of
inequalities is the \emph{dual} of the coupling problem, in the sense of
linear programming, and the result is best understood as a statement that
this linear program has no duality gap.

The construction has four steps, and it is worth previewing them in words.
\begin{enumerate}[label=(\roman*)]
\item The two quantile restrictions are conditional moment inequalities,
      conditional on the regressor path and the individual effect.
\item As in the conditional moment inequality literature, multiplying by
      nonnegative functions of the conditioning variables and averaging
      produces unconditional moment inequalities.  Here the multiplier
      functions depend on \((x,a)\), and there is one for each period and
      each of the two restrictions.
\item The resulting unconditional moments still involve the unobserved
      \(A\).  Because the model says nothing about how \(A\) is related to
      the data beyond the quantile restrictions themselves, we replace \(A\)
      in each observation by whatever value makes the weighted moment
      largest.  This gives an upper bound that depends only on the observed
      distribution; if even the upper bound is negative, no allocation of
      individual effects can rescue the candidate slope.
\item Duality shows that the collection of these observable inequalities,
      over all nonnegative multiplier functions, is sharp: a candidate that
      admits no coupling always violates one of them strictly.
\end{enumerate}
The crossing inequalities of Section~\ref{sec:crossing} turn out to be the
inequalities produced by one particularly simple choice of multipliers.  The
result therefore explains what the crossing bounds capture, what they miss,
and how the finite-dimensional programs of Section~\ref{sec:lp} arise.

\subsection{From conditional restrictions to observable inequalities}
\label{subsec:populationdual}

Write \(W=(Y,X)\), let \(\mathcal W\) denote the support of \(\Pobs\), and fix
\(b\in\cB\).  Throughout this section, assume that \(\mathcal W\) is compact.
Let \(\mathcal X\) denote the support of the complete regressor path, and set
\begin{equation}
  \underline r_b
  =\min_{\substack{(y,x)\in\mathcal W\\1\leq t\leq T}}
       \{y_t-x_t'b\},
  \qquad
  \overline r_b
  =\max_{\substack{(y,x)\in\mathcal W\\1\leq t\leq T}}
       \{y_t-x_t'b\}.
  \label{eq:residualenvelope}
\end{equation}
The interval \(K_b=[\underline r_b,\overline r_b]\) contains all candidate
residuals.  It also contains every value of the individual effect that a feasible
coupling can use.

\begin{lemma}[A bounded range for the individual effect]
\label{lem:anchorcompact}
Every \(\pi\in\cC_b(\Pobs)\) satisfies \(P_\pi(A\in K_b)=1\).
\end{lemma}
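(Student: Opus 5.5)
The plan is to argue by contradiction on the two tail events $\{A<\underline r_b\}$ and $\{A>\overline r_b\}$, reusing the mechanism from the proof of Lemma~\ref{lem:overlap}. Since $\mathcal W$ is compact and $\Pobs(W\in\mathcal W)=1$, every coupling $\pi\in\cC_b(\Pobs)$ satisfies $\underline r_b\le R_t(b)\le \overline r_b$ for all $t$, $\pi$-almost surely. The endpoints $\underline r_b,\overline r_b$ are attained and finite because the map $(y,x)\mapsto y_t-x_t'b$ is continuous on the compact set $\mathcal W$. Consequently, for any Borel set $B$ with $\pi(A\in B)>0$, the conditional law of $R_t(b)$ given $(X,A)$ is also supported on $K_b$ for $\pi$-almost every $(X,A)$ with $A\in B$. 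This holds because $\pi(R_t(b)\notin K_b,\ A\in B)=0$, which forces $\pi\{R_t(b)\notin K_b\mid X,A\}=0$ almost surely.

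Suppose $\pi(A<\underline r_b)>0$. On this event, for any fixed period $t$, we have $\{R_t(b)\le A\}\subseteq\{R_t(b)<\underline r_b\}$. The latter set is $\pi$-null, so $\pi\{R_t(b)\le A\mid X,A\}=0$ almost surely on $\{A<\underline r_b\}$. Rigorously, this follows by integrating the conditional probability over the event $\{A<\underline r_b\}\cap F$ for measurable $F\in\sigma(X,A)$ and using the tower property. Since $\tau>0$, this contradicts \eqref{eq:coupleweak} on a set of positive probability. The upper tail is symmetric. If $\pi(A>\overline r_b)>0$, then on that event $\{R_t(b)<A\}\supseteq\{R_t(b)\le \overline r_b\}$, which has full measure. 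Hence $\pi\{R_t(b)<A\mid X,A\}=1>\tau$ there, contradicting \eqref{eq:couplestrict} since $\tau<1$. Combining the two cases gives $\pi(A\in K_b)=1$.

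The only step needing care is the passage from the unconditional statement ``$R_t(b)\in K_b$ a.s.''\ to the conditional version on the positive-probability event in $\sigma(X,A)$. This is done by writing $E_\pi[\one\{A<\underline r_b\}\,\pi\{R_t(b)\le A\mid X,A\}]=\pi(R_t(b)\le A<\underline r_b)=0$, so the nonnegative integrand vanishes almost surely on $\{A<\underline r_b\}$. That is incompatible with its being at least $\tau>0$ there. The upper-tail case uses the same identity applied to $\pi(R_t(b)<A,\ A>\overline r_b)=\pi(A>\overline r_b)$. Note that a single period suffices in each case, so the lemma uses only one of the $T$ restrictions.
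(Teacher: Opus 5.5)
Your proof is correct and follows the same route as the paper's. The weak restriction \eqref{eq:coupleweak} rules out anchor mass below \(\underline r_b\) because \(\tau>0\), and the strict restriction \eqref{eq:couplestrict} rules out mass above \(\overline r_b\) because \(\tau<1\); your tower-property identity spells out the passage from an unconditional null event to the conditional statement, a step the paper leaves implicit.
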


\begin{proof}
Below \(\underline r_b\), the conditional probability of \(R_t(b)\leq A\)
is zero, violating its lower bound \(\tau>0\).  Above \(\overline r_b\), the
conditional probability of \(R_t(b)<A\) is one, violating its upper bound
\(\tau<1\).  Neither region can receive positive probability in a feasible
coupling.
\end{proof}

The lemma is the formal version of a simple observation: an individual effect
placed below every candidate residual cannot be a \(\tau\)-quantile of any of
them, and neither can one placed above every candidate residual.  It matters
here because it confines the unobserved effect to a compact interval, which is
what the duality argument below requires.

\paragraph{Step (i): the restrictions as conditional moment inequalities.}
For a coupling \(\pi\), the two restrictions in Definition~\ref{def:coupling}
say that
\[
  E_\pi[\one\{R_t(b)\leq A\}-\tau\mid X,A]\geq0,
  \qquad
  E_\pi[\tau-\one\{R_t(b)<A\}\mid X,A]\geq0,
  \qquad t=1,\ldots,T.
\]
These are \(2T\) conditional moment inequalities.  The conditioning variables
are the regressor path, which is observed, and the individual effect, which is
not.

\paragraph{Step (ii): instrumenting the restrictions.}
Multiplying a conditional moment inequality by a nonnegative function of the
conditioning variables and taking expectations preserves the inequality.  Let
\(\lambda_t^-(x,a)\geq0\) be the weight attached to the period-\(t\) lower
restriction for individuals with regressor path \(x\) and effect \(a\), and let
\(\lambda_t^+(x,a)\geq0\) be the weight attached to the period-\(t\) upper
restriction.  The superscripts identify the restriction, not the sign of the
weight; all weights are nonnegative.  A weight may vary with \((x,a)\), but not
with the outcome within an \((x,a)\) group, because it must be a function of the
conditioning variables only.

\begin{samepage}
For \(w=(y,x)\), define the weighted sum
\begin{equation}
\begin{aligned}
  S_b(w,a;\lambda)
  =\sum_{t=1}^T\Big[
  &\lambda_t^-(x,a)\{\one(y_t-x_t'b\leq a)-\tau\}
  \\[-1mm]
  &+\lambda_t^+(x,a)\{\tau-\one(y_t-x_t'b<a)\}\Big].
\end{aligned}
  \label{eq:dualscore}
\end{equation}
\par\end{samepage}
We refer to \(S_b\) as the \emph{score}.  It is a per-individual quantity:
for an individual with data \(w\) and effect \(a\), each term checks one
quantile restriction at that effect, scores it positively if the event that
should be frequent occurs and negatively otherwise, and weights the result.
Taking expectations under a feasible coupling, every term has nonnegative
conditional mean given \((X,A)\), so
\[
  E_\pi[S_b(W,A;\lambda)]\geq0
  \qquad\text{for every }\lambda\geq0.
\]
This is an unconditional moment inequality, but it cannot yet be checked, because
it averages over the unobserved \(A\).

\paragraph{Step (iii): removing the unobserved effect.}
Whatever value the coupling assigns to \(A\) for an individual with data \(W\),
that value lies in \(K_b\), so the score can be no larger than its maximum over
\(a\in K_b\).  Define
\begin{equation}
  \Psi_b(\lambda)
  =E_{\mathrm{obs}}\!\left[
       \max_{a\in K_b}S_b(W,a;\lambda)
    \right].
  \label{eq:dualfunctional}
\end{equation}
Then, under any feasible coupling,
\[
  0\leq E_\pi[S_b(W,A;\lambda)]
    \leq E_{\mathrm{obs}}\!\left[
          \max_{a\in K_b}S_b(W,a;\lambda)
        \right]
    =\Psi_b(\lambda).
\]
The right-hand side depends only on the observed distribution, the candidate
\(b\), and the chosen weights.  It gives each observation the most favorable
value of the individual effect \emph{for those weights}.  Consequently,
\begin{equation}
  \Psi_b(\lambda)<0
  \quad\Longrightarrow\quad
  b\notin\ThetaI .
  \label{eq:dualnecessity}
\end{equation}
A negative value says that the weighted restrictions fail on average even when
every individual is given the benefit of the doubt about his or her effect; no
coupling can then satisfy them.  Two features of the maximization deserve
emphasis.  The maximizing \(a\) is chosen separately for each observation and
separately for each \(\lambda\); it is not an estimate of anyone's individual
effect, and the collection of maximizers need not itself be a feasible coupling.
And the maximization is over the value of the effect only; the residuals inside
the indicators are the actual candidate residuals of the observation.

\paragraph{Step (iv): nothing is lost.}
The implication in \eqref{eq:dualnecessity} uses one \(\lambda\) at a time.
The important result is the converse: if the inequality \(\Psi_b(\lambda)\geq0\)
holds for \emph{every} nonnegative continuous weight function, then a feasible
coupling exists.  Write \(C_+(\mathcal X\times K_b)\) for the set of nonnegative
continuous functions on \(\mathcal X\times K_b\).

\begin{theorem}[Sharp observable duality]
\label{thm:observabledual}
Suppose \(\mathcal W\) is compact.  Then
\begin{equation}
  b\in\ThetaI
  \quad\Longleftrightarrow\quad
  \Psi_b(\lambda)\geq0
  \quad\text{for every }
  \lambda\in C_+(\mathcal X\times K_b)^{2T}.
  \label{eq:sharpdualcriterion}
\end{equation}
If \(b\notin\ThetaI\), some nonnegative continuous multiplier satisfies
\begin{equation}
  \Psi_b(\lambda)<0.
  \label{eq:strictdualcertificate}
\end{equation}
\end{theorem}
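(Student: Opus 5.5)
The forward implication is already in hand: if \(b\in\ThetaI\), Theorem~\ref{thm:sharp} gives some \(\pi\in\cC_b(\Pobs)\). Lemma~\ref{lem:anchorcompact} puts \(A\) in \(K_b\) \(\pi\)-a.s., and steps (ii)--(iii) then give \(0\le E_\pi[S_b(W,A;\lambda)]\le\Psi_b(\lambda)\) for every continuous \(\lambda\ge0\). All the work is in the converse, which I will prove by contraposition. Suppose \(\cC_b(\Pobs)=\varnothing\); the goal is a continuous \(\lambda\ge0\) with \(\Psi_b(\lambda)<0\). I will treat this as a Hahn--Banach/minimax separation on the compact space \(\mathcal W\times K_b\).

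\emph{Step 1 (weak formulation).} Let \(\Pi_K\) be the set of laws \(\pi\) on \(\mathcal W\times K_b\) with \(W\)-marginal \(\Pobs\). It is convex and weak\(^*\)-compact, by Prokhorov together with compactness of \(\mathcal W\times K_b\). A law \(\pi\in\Pi_K\) lies in \(\cC_b(\Pobs)\) if and only if \(E_\pi[S_b(W,A;\lambda)]\ge0\) for all \(\lambda\in C_+(\mathcal X\times K_b)^{2T}\). The nontrivial direction runs as follows. Each conditional constraint says that a signed measure on \(\mathcal X\times K_b\), for instance \(B\mapsto E_\pi[(\one\{R_t\le A\}-\tau)\one_B(X,A)]\), is nonnegative. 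A finite signed Borel measure on a compact metric space that integrates every nonnegative continuous function to a nonnegative number is a nonnegative measure, by Riesz or Urysohn approximation of indicators of closed sets. Nonnegativity of that measure is exactly the a.s.\ conditional inequality. So infeasibility means \(\inf_{\pi\in\Pi_K}\sup_{\lambda}E_\pi[S_b]=+\infty\), and by scaling this is equivalent to \(\inf_{\pi}\sup_{\lambda\in\Lambda_1}E_\pi[S_b]>0\), where \(\Lambda_1\) is the convex set of multipliers normalized by \(\sum_t\|\lambda_t^\pm\|_\infty\le1\).

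\emph{Step 2 (minimax).} The map \((\pi,\lambda)\mapsto E_\pi[S_b(W,A;\lambda)]\) is bilinear. I would apply Sion's minimax theorem, with the minimization over the compact convex set \(\Pi_K\), to get \(\sup_{\lambda\in\Lambda_1}\inf_{\pi\in\Pi_K}E_\pi[S_b]>0\). Then I compute the inner infimum. Since the \(W\)-marginal is fixed and \(A\) may depend on \(W\) arbitrarily, a measurable selection argument shows the infimum equals \(E_{\mathrm{obs}}[\min_{a\in K_b}S_b(W,a;\lambda)]\). Replacing \(\lambda\) by the same multiplier with the roles flipped is not available, so here I have to be careful about signs. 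The paper's \(\Psi_b\) uses the max of \(S_b\), and \(S_b\) scores constraints as \(\ge0\). Infeasibility should therefore produce \(\inf_\pi\sup_\lambda\) of the violation \(-E_\pi[S_b]\). I will redo Step 1 in the form "\(\pi\) feasible iff \(\sup_\lambda(-E_\pi S_b)\le0\)". Minimax then gives a \(\lambda\) with \(\sup_\pi E_\pi[S_b]<0\), which equals \(E_{\mathrm{obs}}[\max_{a\in K_b}S_b(W,a;\lambda)]=\Psi_b(\lambda)<0\). That is exactly \eqref{eq:strictdualcertificate}.

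\emph{Main obstacle: semicontinuity.} Sion needs \(\pi\mapsto E_\pi[S_b(\cdot;\lambda)]\) to be upper semicontinuous in the weak topology on \(\Pi_K\). But \(S_b\) contains indicators, and \(a\mapsto\one(r\le a)\) is upper semicontinuous while \(a\mapsto-\one(r<a)\) is also upper semicontinuous (since \(\one(r<a)\) is lsc). Hence \(S_b\) is u.s.c.\ in \((w,a)\) for continuous \(\lambda\ge0\), and this is exactly why the weak/strict pairing of the indicators matters. By the portmanteau theorem, \(\pi\mapsto E_\pi[S_b]\) is then weak\(^*\) u.s.c. The function is concave (linear) in \(\pi\) and linear in \(\lambda\). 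Sion then applies with the sup taken over the compact \(\Pi_K\) and the inf over the convex set \(\Lambda_1\), which need not be compact. This yields \(\inf_\lambda\sup_\pi E_\pi[S_b]=\sup_\pi\inf_\lambda E_\pi[S_b]\). The right side is negative under infeasibility, because each \(\pi\) has some normalized \(\lambda\) with \(E_\pi S_b<0\). Two points still need checking. First, the supremum over \(\pi\) of a u.s.c.\ integrand is attained and equals \(E_{\mathrm{obs}}\max_a S_b\); u.s.c.\ guarantees the maximum over compact \(K_b\) exists, and a measurable maximizer exists by a selection theorem. Second, the compactness step yields a uniform negative value, not merely a pointwise one. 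I expect this to be the delicate part.
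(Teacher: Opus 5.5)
Your route is the paper's: the signed-measure characterization of feasibility for a fixed coupling via nonnegative continuous test functions, normalization of the multipliers, Sion's theorem with the compact convex set of couplings on $\mathcal W\times K_b$ on the supremum side, and observation-by-observation maximization through a measurable selector, which identifies $\sup_\pi E_\pi[S_b]$ with $\Psi_b(\lambda)$. Discard the signs from your first pass at Step 1; the correct infeasibility statement, which you reach later, is $\sup_{\pi}\inf_{\lambda\in\Lambda_1}E_\pi[S_b(W,A;\lambda)]<0$.

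The step you leave open, getting a uniformly negative value rather than a pointwise one, is closed exactly as in the paper's Step 2, and it uses only what you have already established. For each fixed $\lambda$, the map $\pi\mapsto E_\pi[S_b(W,A;\lambda)]$ is weakly upper semicontinuous. Hence $\phi(\pi)=\inf_{\lambda\in\Lambda_1}E_\pi[S_b(W,A;\lambda)]$ is upper semicontinuous, being an infimum of such maps. It therefore attains its supremum on the compact set $\Pi_K$ at some $\pi^\ast$. Since $\cC_b(\Pobs)=\varnothing$, this $\pi^\ast$ is infeasible, so $\phi(\pi^\ast)<0$ by your Step 1 and rescaling, and hence $\sup_\pi\phi<0$. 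Equivalently, the usual statement of Sion's theorem already asserts that the outer extremum over the compact factor is attained. With this line added, your argument is complete.
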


Necessity is \eqref{eq:dualnecessity}.  The proof of sufficiency is in
Appendix~\ref{app:dualityproof}; because the mechanism is the key to everything
that follows, we describe it here in words.

\paragraph{Why the theorem is true.}
Fix \(b\) and consider the set \(\Gamma_b\) of all joint distributions of
\((W,A)\) with \(W\)-marginal \(\Pobs\) and \(A\in K_b\).  This set is convex,
and it is compact in the topology of weak convergence because \(\mathcal W\) and
\(K_b\) are compact.  The coupling problem asks whether \(\Gamma_b\) contains an
element satisfying the \(2T\) conditional restrictions, each of which is an
infinite family of inequalities that are linear in the joint distribution.
It is therefore a linear program whose unknown is a probability measure.  Four
observations, corresponding to Steps 1--4 of the proof in
Appendix~\ref{app:dualityproof}, turn it into \eqref{eq:sharpdualcriterion}.

The first observation is that, for a \emph{fixed} \(\pi\in\Gamma_b\), the
conditional restrictions hold if and only if \(E_\pi[S_b(W,A;\lambda)]\geq0\)
for every nonnegative continuous \(\lambda\).  The reason is that each
conditional restriction says that a certain signed measure on
\((x,a)\)-space is nonnegative, and a signed measure on a compact metric
space is nonnegative exactly when it integrates every nonnegative continuous
function to a nonnegative number.  Continuous test functions are rich enough
to detect any negative part.

The second observation concerns the value of a game.  After normalizing the
scale of \(\lambda\), define
\[
  V_b=\sup_{\pi\in\Gamma_b}\ \inf_{\lambda}\ E_\pi[S_b(W,A;\lambda)].
\]
Think of a critic, who chooses weights to make the weighted restrictions look
as bad as possible, and of nature, who chooses the coupling to make them look
as good as possible.  In \(V_b\), nature commits to a coupling first and the
critic responds to it.  By the first observation, the critic's best response
is worth zero if the coupling is feasible and strictly negative otherwise, so
\(V_b=0\) exactly when a feasible coupling exists, and \(V_b<0\) otherwise.
(Compactness of \(\Gamma_b\) and upper semicontinuity of the inner infimum
ensure that the supremum is attained, so that the ``exactly'' is warranted.)

The third observation is that the order of moves can be reversed.  Sion's
minimax theorem applies because the payoff is affine in \(\pi\), linear in
\(\lambda\), and suitably semicontinuous, and it gives
\[
  V_b=\inf_{\lambda}\ \sup_{\pi\in\Gamma_b}\ E_\pi[S_b(W,A;\lambda)].
\]
Now the critic commits to weights first and nature responds to known weights.

The fourth observation is that nature's best response to fixed weights is easy
to describe.  The only restriction on \(\pi\in\Gamma_b\) is its
\(W\)-marginal; the model imposes no other link between \(A\) and \((Y,X)\).
So nature may attach to each observation whichever value of \(a\) maximizes
the score for that observation, and this is the best it can do:
\[
  \sup_{\pi\in\Gamma_b}E_\pi[S_b(W,A;\lambda)]
  =E_{\mathrm{obs}}\!\left[\max_{a\in K_b}S_b(W,a;\lambda)\right]
  =\Psi_b(\lambda).
\]
(A measurable selection of maximizers exists because the score is upper
semicontinuous in \(a\), so the pointwise choice does define a coupling.)
Combining the four observations, \(b\in\ThetaI\) if and only if
\(\inf_\lambda\Psi_b(\lambda)=0\), which is \eqref{eq:sharpdualcriterion}, and
when \(b\notin\ThetaI\) the infimum is negative, which is
\eqref{eq:strictdualcertificate}.

The observation that makes the unobserved effect disappear is the last one,
and it is exactly where the weakness of the model enters: because nothing
restricts the joint distribution of \(A\) and the data except the quantile
inequalities, and those inequalities are already priced by the weights,
nature's optimal allocation of effects is observation by observation.  The
price of this freedom is that the critic must be allowed \emph{every}
nonnegative weight function of \((x,a)\).  With a restricted class of weights
the argument still gives valid necessary conditions, and hence outer sets, but
sufficiency can fail.  Section~\ref{subsec:crossingdual} shows that the
crossing inequalities are precisely such a restricted class.

\paragraph{A criterion function.}
The theorem also delivers an objective function for the slope.  Because the
score is positively homogeneous in \(\lambda\), the scale of the weights can
be normalized, for instance by requiring the supremum norms of the \(2T\)
weight functions to sum to at most one.  Over the normalized class, the
zero weight gives \(\Psi_b(0)=0\), so
\begin{equation}
  \inf_{\lambda\ \mathrm{normalized}}\Psi_b(\lambda)
  \begin{cases}
    =0 & \text{if } b\in\ThetaI,\\
    <0 & \text{if } b\notin\ThetaI.
  \end{cases}
  \label{eq:dualcriterionpreview}
\end{equation}
The left-hand side is a scalar function of \(b\) that is zero exactly on the
identified set and negative elsewhere.  It plays the role that a population
criterion function plays in the set-identification literature
\citep{chernozhukovhongtamer2007}: whenever \(\ThetaI\neq\varnothing\), the
identified set is its set of maximizers.  The particular normalization of the
weights affects the magnitude of negative values but not the zero set, and
different normalizations will be convenient in different settings below.
Section~\ref{sec:lp} makes this operational, first exactly when the observed
support is finite, and then through a sequence of approximations when it is
not.

\begin{remark}[Weak and strict inequalities]
The two events in \eqref{eq:dualscore} are kept distinct.  This allows atoms at
the conditional quantile and gives the upper semicontinuity used in the duality
proof.  No no-ties or continuity assumption on the observed or latent variables
is imposed.
\end{remark}

\begin{remark}[Bounded Borel multipliers]
\label{rem:boreldual}
Continuous multipliers suffice for sharpness.  The necessity argument in
\eqref{eq:dualnecessity} also applies to bounded nonnegative Borel multipliers
whenever the pointwise supremum is measurable.  In particular, indicators of
regressor cells and indicators of intervals of \(a\) can be used as weights.
This observation permits both the crossing calculation below and the
finite-dimensional multiplier classes in Section~\ref{subsec:continuous}.
\end{remark}

\begin{remark}[Relation to other dual characterizations]
\label{rem:dualliterature}
The device of converting conditional moment inequalities into a family of
unconditional ones by nonnegative instrument functions is standard
\citep{khan2009inference,andrewsshi2013,chernozhukovleerosen2013}; what is specific here is that
one conditioning variable is latent and is removed by the pointwise
maximization in \eqref{eq:dualfunctional}.  The resulting characterization is in
the spirit of the optimal-transport duality used for incomplete models by
\citet{ekelandgalichonhenry2010} and \citet{galichonhenry2011}, of the
generalized instrumental variable framework of \citet{chesherrosen2017}, and of
the latent-variable moment problems studied by \citet{schennach2014}.  In the
finite-support case treated in Section~\ref{sec:lp}, it reduces to linear
programming over the distribution of the individual effect, as in
\citet{honoretamer2006}.
\end{remark}

\subsection{Crossing bounds and the role of the full multiplier class}
\label{subsec:crossingdual}

The crossing inequality in Proposition~\ref{prop:crossing} can be obtained from
Theorem~\ref{thm:observabledual} by a simple choice of weights, and seeing how
clarifies both what the crossing bounds do and why they are not the whole
story.

\paragraph{Crossing inequalities as a choice of weights.}
Work conditionally on \(X=x\), fix \(s\neq t\) and \(v\in K_b\).  The idea is to
let the weights switch at the threshold \(v\).  For effects at or below \(v\),
test only the period-\(t\) lower restriction; for effects above \(v\), test only
the period-\(s\) upper restriction:
\begin{equation}
  \lambda_t^-(a)=\frac{\one(a\leq v)}{\tau},
  \qquad
  \lambda_s^+(a)=\frac{\one(a>v)}{1-\tau},
  \label{eq:crossingmultipliers}
\end{equation}
with all other weights zero.  Consider an individual in the crossing event
\(C_{st}(v;b)=\{R_s(b)\leq v<R_t(b)\}\).  If this individual is assigned an
effect \(a\leq v\), then \(R_t(b)>v\geq a\), so the period-\(t\) weak event
fails and the score is \(\tau^{-1}(0-\tau)=-1\).  If instead \(a>v\), then
\(R_s(b)\leq v<a\), so the period-\(s\) strict event holds and the score is
\((1-\tau)^{-1}(\tau-1)=-1\).  Thus on the crossing event the score equals
\(-1\) for \emph{every} value of the effect: there is no benefit of the doubt to
give.  Off the crossing event, the score is no larger than
\[
  M_\tau
  =\max\left\{\frac{1-\tau}{\tau},\frac{\tau}{1-\tau}\right\}
  =\frac{c_\tau}{1-c_\tau}.
\]
Consequently, the observable dual inequality \(\Psi_b(\lambda)\geq0\), applied
conditionally on \(X=x\), gives
\[
  0\leq E_{\mathrm{obs}}\!\left[
       \sup_{a\in K_b}S_b(W,a;\lambda)\mid X=x
       \right]
  \leq
  -P^{\mathrm{obs}}(C_{st}\mid X=x)
  +M_\tau\{1-P^{\mathrm{obs}}(C_{st}\mid X=x)\},
\]
and rearranging yields \(P^{\mathrm{obs}}(C_{st}\mid X=x)\leq c_\tau\), which is
Proposition~\ref{prop:crossing}.  In Example~\ref{ex:running} at \(b=3/2\),
for instance, the period-2 residuals \(\{-3/2,-1/2\}\) all lie below the
period-1 residuals \(\{0,1\}\).  With \(s=2\), \(t=1\), and \(v=-1/2\), every
individual is in the crossing event, the score equals \(-1\) whatever effect
is assigned, and \(\Psi_b(\lambda)=-1<0\): the candidate is excluded by a
single observable inequality, without any reference to a coupling.

Read this way, a crossing inequality is a test that involves only two of the
\(2T\) restrictions, uses a single threshold, and gives every individual with
an effect on one side of the threshold the same weight.  The support-overlap
condition of Lemma~\ref{lem:overlap} is a further special case: if the
conditional supports of two periods' residuals do not overlap, any threshold
between them gives a crossing event of probability one.  The full multiplier
class allows weights that vary freely with \((x,a)\) and combine all \(2T\)
restrictions at once.  The next example shows that this additional freedom is
needed.

\begin{example}[The crossing inequalities are not sufficient]
\label{ex:crossingnotsharp}
Let \(T=2\) and \(\tau=1/2\), and suppose that at some candidate \(b\) and
some regressor path \(x\) the pair of candidate residuals
\((R_1(b),R_2(b))\) has the following conditional distribution:
\[
\begin{array}{c|ccccccc}
  (R_1(b),R_2(b)) & (3,2) & (3,-1) & (0,-2) & (2,1) & (0,-1) & (-1,0) & (-3,-2)\\
  \hline
  \text{probability}\times27 & 4 & 5 & 6 & 6 & 2 & 3 & 1
\end{array}
\]
The support-overlap condition holds: the period-1 residuals range over
\([-3,3]\) and the period-2 residuals over \([-2,2]\).  Every crossing
inequality holds as well: the largest value of
\(P^{\mathrm{obs}}\{R_s(b)\leq v<R_t(b)\mid X=x\}\) over
\(s\neq t\) and \(v\in\mathbb R\) is \(13/27<1/2\).  Nevertheless, no feasible
coupling exists at this path, so \(b\notin\ThetaI\).

To see why, consider the outcome vector with residuals \((3,-1)\), which
carries probability \(5/27\).  Whatever effect \(a\) it is assigned, it is
grouped with other outcome vectors assigned the same \(a\), and within that
group \(a\) must be a median of the period-1 residuals and of the period-2
residuals.  Suppose \(-1<a<3\).  Then the vector \((3,-1)\) lies strictly
above \(a\) in period 1 and strictly below \(a\) in period 2.  For \(a\) to
be a median in period 1, the group must contain at least as much mass with
\(R_1(b)\leq a\) as with \(R_1(b)>a\); for \(a\) to be a median in period 2,
at least as much mass with \(R_2(b)\geq a\) as with \(R_2(b)<a\).  Each
requirement says that a set of vectors carries at least half of the group's
mass, and the vector \((3,-1)\) belongs to neither set.  Adding the two
requirements, the group must therefore contain mass at least \(q\) of vectors
with \(R_1(b)\leq a\leq R_2(b)\), where \(q\) is the mass of \((3,-1)\) in
the group.  For \(-1<a\leq 0\) the only such vector is \((-1,0)\); for \(0<a<3\)
there is none.  A similar count for \(a\leq-1\) shows that only \((-1,0)\) and
\((-3,-2)\) can serve as partners, and for \(a\geq3\) no partner exists at all.
Summing over all groups, the vector \((3,-1)\) can be balanced by at most the
combined mass of \((-1,0)\) and \((-3,-2)\), which is \(4/27<5/27\).  The
requirement that one effect serve as median in both periods therefore cannot
be met, even though every pairwise threshold comparison is satisfied.

The obstruction is a matching condition of Hall's type: a set of outcome
vectors demands more balancing mass than its admissible partners can supply.
Crossing inequalities are the counting arguments of this kind generated by a
single threshold; the full multiplier class in
Theorem~\ref{thm:observabledual} contains all such counting arguments, and
more, since it also allows fractional weights.  For this configuration the
normalized dual criterion of Section~\ref{subsec:finitecriterion} takes the
value \(-1/378\).

The configuration is not artificial.  Take the single regressor path
\(x=(0,1)\) and let the observed outcome vector be
\((Y_1,Y_2)=(R_1,R_2+2)\), with \((R_1,R_2)\) distributed as in the table.
Then \(\beta=0\) belongs to the identified set, so the model is correctly
specified, and the table is the residual configuration at the candidate
\(b=2\).  That candidate is therefore outside \(\ThetaI\) even though it
satisfies every crossing inequality and the support-overlap condition; in
this design the crossing outer set is strictly larger than the sharp set.
\hfill\(\square\)
\end{example}

The example also indicates the practical division of labor.  Crossing and
support-overlap bounds are cheap, are valid with continuous outcomes and
regressors, and in several of the designs of Section~\ref{sec:longpanel} the
support-overlap bound is already exact.  But sharpness in general requires the
complete family.  This distinction suggests
an approximation strategy: retain a tractable class of weights, compute the
resulting outer set, and enlarge the class to recover more of the identifying
information.  Section~\ref{subsec:continuous} develops this approach for
continuous support.

\begin{remark}[Scope]
Compact support is used for the population duality theorem, not for the
structural definition or the coupling characterization of \(\ThetaI\).
The theorem covers bounded continuous and discrete distributions without
restricting \(A\) to a numerical grid.  An extension to unbounded support
would require additional tightness or coercivity conditions and is not
asserted here.
\end{remark}

\section{Computing the sharp identified set}
\label{sec:lp}

Sections~\ref{sec:coupling} and~\ref{sec:duality} characterize \(\ThetaI\) in
two equivalent ways: as the set of slopes for which a coupling exists, and as
the set of slopes at which every observable dual inequality holds.  This
section turns the two characterizations into computations.  The organizing
idea is the criterion function previewed in
\eqref{eq:dualcriterionpreview}: the smallest normalized value of the
observable dual functional is a scalar function of \(b\) that equals zero on
the identified set and is negative elsewhere.  Under finite observed support
a version of this function, under a convenient finite-dimensional
normalization of the weights, can be evaluated exactly by a linear program,
and then \emph{the entire identified set is the set of global maximizers of
the criterion}, or equivalently the set of global minimizers of its negative.
The identified set can therefore be found by optimizing over the coefficient
vector, profiling out nuisance coordinates when only some coefficients are of
interest, rather than by a separate feasibility decision at every point of a
prespecified grid.

The section proceeds in three settings of increasing generality.
\begin{enumerate}[label=(\alph*)]
\item \emph{Finite support of \((Y,X)\).}  The coupling problem is an
      ordinary linear program in a finite table of probabilities
      (Section~\ref{subsec:membership}), its dual is the criterion
      (Section~\ref{subsec:finitecriterion}), and the criterion is piecewise
      constant in \(b\), which permits exact recovery of its zero set
      (Section~\ref{subsec:fullset}).
\item \emph{Continuous outcomes, finitely many regressor paths.}  Binning
      the candidate residuals with a fixed monotone rule produces a
      finite-support problem of exactly the form in (a).  Each resolution
      gives an outer set, and refining the bins recovers the sharp set
      (Section~\ref{subsec:histogramcriterion}).
\item \emph{Continuous regressors.}  There is no finite reduction of the
      latent effect, so we approximate the \emph{dual} instead: restrict the
      multipliers to a finite-dimensional class, compute the resulting outer
      set, and enlarge the class.  Nested classes recover the sharp set
      (Sections~\ref{subsec:continuous} and~\ref{subsec:sieveevaluation}).
\end{enumerate}
Throughout, approximation of the multiplier class, numerical integration, and
optimization over coefficients are distinct operations, and we try to keep
them separate.  Proofs and supplementary implementation details are collected
in Appendix~\ref{app:computation}.  Example~\ref{ex:running} is carried
through the finite-support development to make each object concrete.

\subsection{Finite support: a linear program in probabilities}
\label{subsec:membership}

Suppose \(X\) has finite support.  Fix a positive-probability regressor path
\(X=x\) and a candidate \(b\).  Write the observed conditional distribution as
\begin{equation}
  P^{\mathrm{obs}}(Y=y^j\mid X=x)=p_j>0,
  \qquad j=1,\ldots,J,
  \qquad \sum_{j=1}^Jp_j=1,
  \label{eq:finitesupport}
\end{equation}
and define the residual at each outcome support point by
\begin{equation}
  z_{jt}(b)=y_t^j-x_t'b.
  \label{eq:finiteresidual}
\end{equation}
Here \(y^j\) is a complete \(T\)-period outcome vector, not a support point
for a single period.  By Remark~\ref{rem:globalpathwise}, a coupling exists
if and only if one exists separately in each positive-probability path, so
we may work path by path.

\paragraph{Which values of the individual effect must be considered?}
The coupling assigns to each outcome vector a distribution over values of
the individual effect.  In principle these values range over a continuum.
The first observation is that only finitely many of them matter.  Let
\(v_1<\cdots<v_P\) be the distinct values among the \(JT\) residual
coordinates \(z_{jt}(b)\).  It is enough to consider
\begin{equation}
  \mathcal A(b)=\{v_1,\ldots,v_P\},\qquad P\leq JT.
  \label{eq:valueanchors}
\end{equation}
This is an exact reduction, not an assumption that the individual effect is
discrete, and it is worth seeing why.  Suppose a coupling places some
individuals at an effect \(a\) lying strictly between two consecutive residual
values \(v_p<a<v_{p+1}\).  Move all of them to \(v_p\).  Every weak event
\(\{z_{jt}(b)\leq a\}\) is unchanged, because no residual lies in
\((v_p,a]\); every strict event \(\{z_{jt}(b)<a\}\) can only become less
likely, because the residuals equal to \(v_p\) no longer count.  The lower
quantile inequality is therefore unaffected and the upper one is weakly
relaxed, so the moved individuals still satisfy both restrictions.  Effects
outside \([v_1,v_P]\) are infeasible by the argument of
Lemma~\ref{lem:anchorcompact}, applied within the path.  The formal argument
in Appendix~\ref{app:finiteproof} also accounts for combining probability
mass that is moved to the same value.

\paragraph{What are the unknowns?}
Consider a table with one row for each outcome vector \(y^j\) and one column
for each candidate effect \(v_p\).  Its unknown entries are
\[
  q_{jp}=P^\ast(Y=y^j,A=v_p\mid X=x),
  \qquad
  w_p=\sum_{j=1}^Jq_{jp}.
\]
The distinction between observed and unobserved probabilities is:
\begin{center}
\renewcommand{\arraystretch}{1.08}
\begin{tabular}{lccc c}
\toprule
 & \(A=v_1\) & \(\cdots\) & \(A=v_P\) & Observed mass \\
\midrule
\(Y=y^1\) & \(q_{11}\) & \(\cdots\) & \(q_{1P}\) & \(p_1\) \\
\(\vdots\) & \(\vdots\) & & \(\vdots\) & \(\vdots\) \\
\(Y=y^J\) & \(q_{J1}\) & \(\cdots\) & \(q_{JP}\) & \(p_J\) \\
\midrule
Latent mass & \(w_1\) & \(\cdots\) & \(w_P\) & \(1\) \\
\bottomrule
\end{tabular}
\end{center}
The observed distribution fixes the row totals at \(p_j\).  The column totals
\(w_p\), which form the distribution of \(A\mid X=x\), are not specified.
Within a nonempty column \(p\), the normalized entries \(q_{jp}/w_p\) must
make \(v_p\) a conditional \(\tau\)-quantile of every residual coordinate.
The table is a transportation problem: the rows supply fixed amounts of
probability, the columns absorb whatever they receive, and the only question is
whether the supplies can be routed so that every column is balanced in every
period.

\begin{samepage}
This gives the following linear feasibility problem:
\begin{align}
  q_{jp}&\geq0,
  &&j=1,\ldots,J,\quad p=1,\ldots,P,
  \label{eq:lpnonnegative}\\*
  \sum_{p=1}^Pq_{jp}&=p_j,
  &&j=1,\ldots,J,
  \label{eq:lpmarginal}\\*
  \sum_{j:z_{jt}(b)\leq v_p}q_{jp}
     &\geq\tau\sum_{j=1}^Jq_{jp},
  &&p=1,\ldots,P,\quad t=1,\ldots,T,
  \label{eq:lpweak}\\*
  \sum_{j:z_{jt}(b)<v_p}q_{jp}
     &\leq\tau\sum_{j=1}^Jq_{jp},
  &&p=1,\ldots,P,\quad t=1,\ldots,T.
  \label{eq:lpstrict}
\end{align}
\par\end{samepage}
The last two lines are the conditional quantile inequalities multiplied by the
column mass \(w_p\).  This removes division by an unknown probability and
makes the restrictions linear.  A column with \(w_p=0\) imposes no restriction.
At the median, each nonempty column must place at least half of its mass on
or below \(v_p\), and at least half on or above \(v_p\), in every period.

The same entries \(q_{jp}\) appear in all \(T\) restrictions.  This is where
the common individual effect enters the computation, and it is the entire
source of identifying power: an outcome vector is allocated to a column once,
and that one allocation must pass the quantile check in every period.  If the
model instead allowed a different effect in each period, the problem would
separate across \(t\), and every \(b\) would be feasible, since all the mass
could then be placed, period by period, at the marginal \(\tau\)-quantile of
that period's residual.  No restriction is imposed on dependence across the
coordinates of \(y^j\): the rows are complete outcome vectors, so all
within-individual dependence is retained.

\begin{theorem}[Exact finite-support membership]
\label{thm:lp}
For the path \(X=x\) and candidate \(b\), a feasible conditional latent
coupling exists if and only if
\eqref{eq:lpnonnegative}--\eqref{eq:lpstrict} is feasible.  Consequently, if
\(X\) and every conditional outcome distribution have finite support,
\begin{equation}
  \ThetaI
  =\{b\in\cB:
      \text{\eqref{eq:lpnonnegative}--\eqref{eq:lpstrict} is feasible
      for every path of positive probability}\}.
  \label{eq:lpsharpset}
\end{equation}
\end{theorem}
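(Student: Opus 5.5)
The plan is to prove the two directions of the pathwise equivalence separately and then assemble the global statement \eqref{eq:lpsharpset} from Theorem~\ref{thm:sharp} and the pathwise factorization of Remark~\ref{rem:globalpathwise}. \emph{Sufficiency} is immediate. Given a feasible table \((q_{jp})\), define the conditional law of \((Y,A)\) given \(X=x\) by \(P^\ast(Y=y^j,A=v_p\mid X=x)=q_{jp}\). Constraint \eqref{eq:lpmarginal} reproduces the observed conditional outcome law. For every column with \(w_p>0\), dividing \eqref{eq:lpweak}--\eqref{eq:lpstrict} by \(w_p\) gives exactly \eqref{eq:coupleweak}--\eqref{eq:couplestrict} at \((x,v_p)\). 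Columns with \(w_p=0\) are null events and impose nothing. Hence the conditional coupling is feasible.

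\emph{Necessity} is the substantive direction. Take any feasible conditional coupling \(\pi(\cdot\mid x)\), with \(A\) possibly continuous. Lemma~\ref{lem:anchorcompact}, applied within the path, gives \(A\in[v_1,v_P]\) almost surely. Define the measurable map \(\phi(a)=\max\{v_p:v_p\le a\}\), which sends \(a\) to the largest residual value not exceeding it, and set
\[
q_{jp}=\pi(Y=y^j,\ \phi(A)=v_p\mid X=x).
\]
Row sums equal \(p_j\) by construction. To check the column constraints, fix \(p\) and integrate the conditional inequalities over the set \(\{a:\phi(a)=v_p\}=[v_p,v_{p+1})\). For \(a\) in this interval and every \(j,t\), the events \(\{z_{jt}\le a\}\) and \(\{z_{jt}\le v_p\}\) coincide, because no residual lies in \((v_p,a]\). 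Also \(\{z_{jt}<v_p\}\subseteq\{z_{jt}<a\}\). Integrating \(\pi(z_{jt}\le A\mid x,A)\ge\tau\) against the law of \(A\) on that interval yields \eqref{eq:lpweak}. Integrating \(\pi(z_{jt}<A\mid x,A)\le\tau\) and using the inclusion yields \eqref{eq:lpstrict}.

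The point needing care is this aggregation step. Merging mass from many values of \(a\) into one column must preserve the inequalities. It does, because each inequality is linear in the column's mass and the events are monotone in the right direction. The integration should be written with the disintegration \(\pi(dy,da\mid x)\) so that conditional probabilities are used correctly on null sets. This is the main thing I expect to write out carefully.

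Finally, for \eqref{eq:lpsharpset}: since \(X\) has finitely many support points, a global coupling in \(\cC_b(\Pobs)\) exists if and only if a conditional coupling exists at each positive-probability path. One direction restricts a global coupling to each path. The other glues the pathwise couplings together with the weights \(P(X=x)\), and conditioning on \((X,A)\) then reduces to conditioning within each path. By Theorem~\ref{thm:sharp}, \(\ThetaI\) is the set of \(b\) with \(\cC_b(\Pobs)\neq\varnothing\), and combining with the pathwise equivalence gives \eqref{eq:lpsharpset}.
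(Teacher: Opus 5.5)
Your proposal is correct and follows essentially the same route as the paper. For sufficiency you divide the column constraints by \(w_p\); for necessity your pushforward map \(\phi\), which sends anchor mass in \([v_p,v_{p+1})\) to \(v_p\), is a clean formalization of the paper's ``move gap mass to the lower endpoint and aggregate at equal residual values'' step, using the same identities \(\one(z\le a)=\one(z\le v_p)\) and \(\one(z<v_p)\le\one(z<a)\), and you assemble \eqref{eq:lpsharpset} by the same disintegrate-and-glue argument across paths combined with Theorem~\ref{thm:sharp}.
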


A feasible table directly supplies a rationalizing joint distribution.  Its
column totals give \(P^\ast(A=v_p\mid X=x)=w_p\), while
\(q_{jp}/p_j\) gives \(P^\ast(A=v_p\mid Y=y^j,X=x)\).  Theorem~\ref{thm:sharp}
then completes the structural model.  Conversely, the residual-value reduction
shows that searching over these tables omits no feasible slope.

\begin{example}[Example~\ref{ex:running}, continued: the table]
\label{ex:runningtable}
In Example~\ref{ex:running} there is one path and two outcome vectors,
\(y^1=(0,0)\) and \(y^2=(1,1)\), each with probability \(1/2\), and the
residual pairs are \((0,-b)\) and \((1,1-b)\).  Take \(b=1/2\).  The distinct
residual values are \(\{-1/2,0,1/2,1\}\), so the table has two rows and four
columns.  One feasible table places both rows entirely in the column
\(A=0\):
\begin{center}
\renewcommand{\arraystretch}{1.08}
\begin{tabular}{lcccc c}
\toprule
 & \(A=-\tfrac12\) & \(A=0\) & \(A=\tfrac12\) & \(A=1\) & Observed mass\\
\midrule
\(Y=(0,0)\), residuals \((0,-\tfrac12)\) & 0 & \(\tfrac12\) & 0 & 0 & \(\tfrac12\)\\
\(Y=(1,1)\), residuals \((1,\tfrac12)\)  & 0 & \(\tfrac12\) & 0 & 0 & \(\tfrac12\)\\
\midrule
Latent mass & 0 & 1 & 0 & 0 & 1\\
\bottomrule
\end{tabular}
\end{center}
In the column \(A=0\), the period-1 residuals are \(\{0,1\}\) with equal
weight, so \(0\) is a median (half the mass is at or below it and none is
strictly below it); the period-2 residuals are \(\{-1/2,1/2\}\), and again
\(0\) is a median.  Both quantile constraints hold in both periods, and
\(b=1/2\) is feasible.  Placing the rows in different columns cannot work
for any \(b\neq0\): a column containing a single outcome vector must have
its effect equal to both residuals of that vector, which is impossible when
the two residuals differ.

Now take \(b=3/2\).  The residual pairs are \((0,-3/2)\) and \((1,-1/2)\),
and the candidate effects are \(\{-3/2,-1/2,0,1\}\).  Every period-2
residual is now below every period-1 residual.  In any column with
\(a\geq0\), both rows are strictly below \(a\) in period 2, so the strict
inequality \eqref{eq:lpstrict} fails for \(t=2\) unless the column is empty;
in any column with \(a<0\), both rows are strictly above \(a\) in period 1,
so the weak inequality \eqref{eq:lpweak} fails for \(t=1\) unless the column
is empty.  No table exists and \(b=3/2\) is excluded, as found in
Example~\ref{ex:running}. \hfill\(\square\)
\end{example}

\begin{algorithm}[Constructing a rationalizing probability table]
\label{alg:membership}
For a candidate \(b\), and in each positive-probability regressor path:
\begin{enumerate}[label=\arabic*.]
\item Compute all candidate residuals \(z_{jt}(b)\) and sort their distinct
      values to form \(\mathcal A(b)\).
\item Form the unknown probability table \((q_{jp})\) and impose
      \eqref{eq:lpnonnegative}--\eqref{eq:lpstrict}.
\item Solve the linear feasibility problem, equivalently a linear program
      with objective zero.
\end{enumerate}
Retain \(b\) if and only if every path is feasible.  The programs are separate
because the model leaves the distribution of \(A\mid X\) unrestricted across
regressor paths.
\end{algorithm}

The pathwise program has \(JP\leq TJ^2\) nonnegative unknowns, \(J\) row-total
equalities, and \(2TP\) quantile inequalities.  This construction is useful
when a feasible latent distribution is to be reported.  For searching over
slopes, the dual program below supplies a scalar objective with the same
identifying content.  Solving the primal in addition to that dual is not a
requirement of the characterization.

\subsection{A criterion function for the sharp identified set}
\label{subsec:finitecriterion}

The table of Section~\ref{subsec:membership} answers a yes-or-no question.
For searching over slopes it is more convenient to have a number, and the
dual of the table provides one.  The dual is the finite-support version of
the observable functional \(\Psi_b\) of Section~\ref{sec:duality}: the
critic now chooses a nonnegative weight for each quantile restriction
\emph{within each candidate-effect column}, that is, for each pair \((p,t)\)
and each of the two restrictions.

Use the finite-support notation above.  For weights
\(\lambda_{pt}^-\) and \(\lambda_{pt}^+\), define
\begin{equation}
\begin{aligned}
  s_{jp}(b;\lambda)=\sum_{t=1}^T\Big[
  &\lambda_{pt}^-\{\one(z_{jt}(b)\leq v_p)-\tau\}\\[-1mm]
  &+\lambda_{pt}^+\{\tau-\one(z_{jt}(b)<v_p)\}\Big].
\end{aligned}
  \label{eq:finitedualscore}
\end{equation}
This is the score \eqref{eq:dualscore}, evaluated at outcome vector
\(y^j\) and candidate effect \(v_p\); the weights are now finitely many
numbers rather than functions of \(a\), because only the values
\(v_1,\ldots,v_P\) of the effect matter.  They correspond to step functions
of \(a\) in Section~\ref{sec:duality}: setting
\(\lambda_t^\pm(a)=\lambda_{pt}^\pm\) for \(a\in[v_p,v_{p+1})\), and
extending the weights as constants below \(v_1\) and above \(v_P\), the
score for each outcome vector is maximized over \(a\) at one of the
\(v_p\), because moving \(a\) off a residual value can only lower the score.
For each \(j\), taking the largest score over \(p\) gives outcome vector
\(j\) the most favorable effect for those weights.  The observable dual
functional \(\Psi_b(\lambda)\) is therefore exactly the average of these
maxima, \(\sum_jp_j\max_ps_{jp}(b;\lambda)\), and by the necessity argument
of Section~\ref{sec:duality}, together with Remark~\ref{rem:boreldual}, a
negative value rules out the slope.

To find the most negative such average, introduce unrestricted variables
\(d_j\) to represent the maxima and solve
\begin{equation}
  \delta_x(b)=\min_{\lambda^-,\lambda^+,\,d}\sum_{j=1}^Jp_jd_j
  \label{eq:dualcertificateobjective}
\end{equation}
subject to
\begin{align}
  d_j&\geq s_{jp}(b;\lambda),
  &&j=1,\ldots,J,\quad p=1,\ldots,P,
  \label{eq:dualcertificateepigraph}\\*
  \lambda_{pt}^-,\lambda_{pt}^+&\geq0,
  &&p=1,\ldots,P,\quad t=1,\ldots,T,
  \label{eq:dualcertificatenonnegative}\\*
  \sum_{p=1}^P\sum_{t=1}^T(\lambda_{pt}^-+\lambda_{pt}^+)&\leq1.
  \label{eq:dualcertificatenormalize}
\end{align}
Minimizing over \(d_j\) sets it equal to \(\max_p s_{jp}(b;\lambda)\), so
the objective is indeed the average best score.  The last constraint fixes
the scale of the weights.  Without it, any negative average could be made
arbitrarily negative by multiplying the weights by a large constant, and the
program would be unbounded whenever the slope is excluded.  The normalization
does not change which slopes can be excluded, only the number reported.  It
is a sum over all weights, which is convenient for a finite program; the
population class \(\Lambda_b^1\) in Appendix~\ref{app:dualityproof} and the
sieve program of Section~\ref{subsec:sieveevaluation} normalize supremum
norms instead.  The two conventions produce different negative values at an
excluded slope but the same zero set.

\begin{proposition}[The finite-support dual criterion]
\label{prop:dualcertificate}
The program \eqref{eq:dualcertificateobjective}--\eqref{eq:dualcertificatenormalize}
has value \(\delta_x(b)\leq0\), with
\begin{align}
  \delta_x(b)=0
  &\quad\Longleftrightarrow\quad
  \text{the pathwise primal coupling LP is feasible},
  \label{eq:dualcertificatezero}\\
  \delta_x(b)<0
  &\quad\Longleftrightarrow\quad
  \text{the pathwise primal coupling LP is infeasible}.
  \label{eq:dualcertificatenegative}
\end{align}
A negative optimum supplies an observable inequality excluding \(b\).
\end{proposition}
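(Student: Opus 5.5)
The plan is to treat the pathwise problem as a pair of finite linear programs and apply LP duality (Farkas' lemma). First, $\delta_x(b)\le 0$ is immediate: $\lambda=0$ with $d_j=0$ is feasible for \eqref{eq:dualcertificateepigraph}--\eqref{eq:dualcertificatenormalize} and has objective $0$. The feasible set is a nonempty polyhedron. Because the weights are bounded by the normalization, each $s_{jp}$ is bounded, so $\max_p s_{jp}$ is bounded and the minimum is attained and finite.

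For the direction ``primal feasible $\Rightarrow\delta_x(b)=0$'' I would argue directly, in the same way as the necessity step \eqref{eq:dualnecessity}. Let $(q_{jp})$ be feasible and take any dual-feasible $(\lambda,d)$. Then
\[
\sum_j p_j d_j=\sum_{j,p} q_{jp} d_j\ \ge\ \sum_{j,p} q_{jp}\, s_{jp}(b;\lambda).
\]
Regrouping the right-hand side by $(p,t)$ gives
\[
\sum_{p,t}\lambda^-_{pt}\Big(\sum_{j:z_{jt}\le v_p}q_{jp}-\tau w_p\Big)+\lambda^+_{pt}\Big(\tau w_p-\sum_{j:z_{jt}<v_p}q_{jp}\Big),
\]
and this is $\ge 0$ by \eqref{eq:lpweak}--\eqref{eq:lpstrict}. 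Hence $\delta_x(b)\ge 0$, and so $\delta_x(b)=0$.

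The substantive direction is ``primal infeasible $\Rightarrow\delta_x(b)<0$''. I would write the primal as a system $Mq\ge 0$, $Nq=p$, $q\ge 0$. Here $M$ stacks the $2TP$ rows indexed by $(p,t,\pm)$, with entries $\one(z_{jt}\le v_p)-\tau$ and $\tau-\one(z_{jt}<v_p)$ placed on the variables of column $p$, and $N$ encodes the row sums \eqref{eq:lpmarginal}. Farkas' lemma, in the form for mixed equality and inequality systems with nonnegative variables, says that infeasibility is equivalent to the existence of $\lambda\ge 0$ (multiplying the $M$-rows) and a free vector $d$ (multiplying the $N$-rows) such that $M^\top\lambda - N^\top d\le 0$ componentwise and $p^\top d<0$. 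The component of $M^\top\lambda-N^\top d$ for $(j,p)$ is exactly $s_{jp}(b;\lambda)-d_j$. So the Farkas certificate is a pair $(\lambda,d)$ with $d_j\ge s_{jp}$ for all $p$ and $\sum_j p_j d_j<0$. Note that $\lambda\ne0$, since otherwise $d_j\ge0$ for all $j$, which contradicts $\sum_j p_j d_j<0$. Rescaling $(\lambda,d)$ by $1/\sum(\lambda^-+\lambda^+)$ preserves both conditions, because they are positively homogeneous, and enforces \eqref{eq:dualcertificatenormalize}. This gives a feasible point with negative objective, so $\delta_x(b)<0$.

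The main obstacle is bookkeeping: the Farkas alternative has to be stated with the correct sign conventions so that it reproduces exactly the score \eqref{eq:finitedualscore} with $d$ free. I would verify this by writing the transpose entry for $q_{jp}$ explicitly. The two equivalences then follow because the cases are exhaustive and $\delta_x(b)\le0$. Finally, a negative optimum $\lambda^\ast$ yields, via the step-function extension described before the proposition and Remark~\ref{rem:boreldual}, an observable inequality $\Psi_b(\lambda^\ast)<0$ on the path $x$, and this excludes $b$ by Theorem~\ref{thm:lp}.
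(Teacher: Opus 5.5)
Your proposal is correct and follows essentially the same route as the paper: the zero multiplier gives \(\delta_x(b)\leq0\), and the Farkas alternative with a free vector for the row-total equalities and \(\lambda\geq0\) for the quantile rows, rescaled by positive homogeneity, gives \(\delta_x(b)<0\) under infeasibility. The only difference is cosmetic: you prove the feasible direction by a direct weak-duality computation (\(\sum_jp_jd_j=\sum_{j,p}q_{jp}d_j\geq\sum_{j,p}q_{jp}s_{jp}\geq0\)), whereas the paper derives it from Corollary~\ref{cor:finitedual}, which amounts to the same easy half of Farkas' lemma.
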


The zero weights give value zero, so the minimum cannot be positive.  That it
is strictly negative precisely when no probability table exists is the
finite-dimensional theorem of the alternative (Farkas' lemma): either the
table is feasible, or there is a nonnegative combination of its quantile
constraints that contradicts the row totals, and the coefficients of that
combination are exactly the weights \(\lambda\).  The proof is in
Appendix~\ref{subsec:finitedual}.  Thus the primal table and the dual
criterion are two views of the same problem: the primal exhibits a
rationalizing latent distribution when one exists, the dual exhibits an
observable inequality that no latent distribution can satisfy when none does.

\begin{example}[Example~\ref{ex:running}, continued: a dual certificate]
\label{ex:runningdual}
Return to \(b=3/2\), where the residual pairs are \((0,-3/2)\) and
\((1,-1/2)\) and the candidate effects are \(v_1<v_2<v_3<v_4\), equal to
\(-3/2,-1/2,0,1\).  The crossing weights \eqref{eq:crossingmultipliers} with
threshold \(v=-1/2\), scaled to satisfy the normalization
\eqref{eq:dualcertificatenormalize}, are
\[
  \lambda_{p1}^-=\tfrac14\ \text{for } v_p\leq-\tfrac12\ (p=1,2),
  \qquad
  \lambda_{p2}^+=\tfrac14\ \text{for } v_p>-\tfrac12\ (p=3,4),
\]
and zero otherwise; the four positive weights sum to one.  In a column with
\(v_p\leq-1/2\), only the period-1 lower weight is active, and both period-1
residuals exceed \(v_p\), so \(s_{jp}=\tfrac14(0-\tfrac12)=-\tfrac18\) for
both rows.  In a column with \(v_p>-1/2\), only the period-2 upper weight is
active, and both period-2 residuals are strictly below \(v_p\), so
\(s_{jp}=\tfrac14(\tfrac12-1)=-\tfrac18\) again.  Every entry of the score
table equals \(-1/8\), so \(d_1=d_2=-1/8\) and the objective is
\(-1/8\).  No choice of weights does better under the normalization: each
bracket in \eqref{eq:finitedualscore} is at least \(-1/2\), so every score
in a column is at least \(-1/2\) times the total weight in that column; the
best column for a row is at least as good as the lightest column, whose
total weight is at most \(1/4\) when four columns share a total of one; hence
no objective value can fall below \(-1/8\).  Thus \(\delta_x(3/2)=-1/8\), the
crossing inequality is here the most damaging observable inequality, and the
negative value certifies that \(b=3/2\) is excluded.  For \(|b|\leq1\), the
feasible table in Example~\ref{ex:runningtable} guarantees
\(\delta_x(b)=0\). \hfill\(\square\)
\end{example}

\paragraph{Combining the regressor paths.}
Let \(x^1,\ldots,x^M\) be the positive-probability paths, with
\(\pi_m=P^{\mathrm{obs}}(X=x^m)>0\).  Define the objective and its
nonnegative loss version by
\begin{equation}
  \delta(b)=\sum_{m=1}^M\pi_m\delta_{x^m}(b),
  \qquad
  \mathcal L(b)=-\delta(b)\geq0.
  \label{eq:globaldualvalue}
\end{equation}
Because every pathwise value is nonpositive and every weight is positive, a
violation in one path cannot be offset by another.  Thus
\begin{equation}
  \ThetaI=\{b\in\cB:\delta(b)=0\}
          =\{b\in\cB:\mathcal L(b)=0\}.
  \label{eq:globaldualzeroset}
\end{equation}
If the maintained model is compatible with the observed distribution, so that
\(\ThetaI\neq\varnothing\), the optimal value is zero and
\begin{equation}
  \boxed{\displaystyle
  \ThetaI
  =\operatorname*{arg\,max}_{b\in\cB}\delta(b)
  =\operatorname*{arg\,min}_{b\in\cB}\mathcal L(b).}
  \label{eq:globaldualargmax}
\end{equation}
The argmax and argmin here denote \emph{sets of coefficients}, not a choice
of one coefficient from those sets.  These equalities turn sharp-set
computation into optimization of a scalar function, in the same way that the
identified set in a moment inequality model is the zero set of a population
criterion.  The nonemptiness qualification matters: a negative global maximum
of \(\delta\), or a positive global minimum of \(\mathcal L\), does not
identify a best-fitting slope as structurally admissible.  It reports that the
model is incompatible with the observed distribution on the coefficient region
being searched.

\subsection{Optimizing the criterion and recovering its solution set}
\label{subsec:fullset}

Equation~\eqref{eq:globaldualargmax} separates two tasks.  An inner calculation
evaluates \(\delta(b)\) by solving the pathwise dual LPs.  An outer
optimization searches over \(b\) for the set of global maximizers.  No
additional primal feasibility check is needed at an exactly evaluated zero,
and no prespecified grid over the entire coefficient vector is required.
This formulation is especially useful with several coefficients: the
coordinates can be optimized jointly, or nuisance coordinates can be profiled
out when only a lower-dimensional target is of interest.

The outer problem is not itself an ordinary LP.  Changing \(b\) changes
residual comparisons, so the criterion need not be smooth or concave.  The
inner minimization over multipliers remains part of each objective evaluation;
it cannot be replaced by maximizing jointly over coefficients and multipliers.
A routine that returns one zero-valued coefficient has found one member of
the identified set, not necessarily all its members.  Recovering the whole
set therefore means recovering the entire zero-level set, including any
separate components.  A local optimum, or repeated starts alone, does not
certify that this has been done.

\paragraph{Optional restrictions on the search.}
The support-overlap and crossing bounds in Section~\ref{sec:crossing} can
restrict the coefficient region before the full criterion is optimized.
Under finite support, they use residual extrema and sums of the known
probabilities, without a latent-probability calculation.  For \(T>2\), the
two-period programs give additional necessary restrictions, always conditional
on the complete regressor path.  These are optional outer restrictions, not
required steps in evaluating \(\delta\).  Passing them is not sufficient:
effects that rationalize separate pairs need not be compatible with one
effect for the whole panel.  For instance, with \(T=3\) and \(\tau=1/2\),
the four residual triples \((-1,-3,1)\), \((3,1,2)\), \((-1,1,-3)\), and
\((-2,0,2)\), with probabilities \(1/8,3/8,3/8,1/8\), admit a feasible
two-period table for each of the three period pairs but no three-period
table; the three-period criterion equals \(-1/112\).  Appendix~\ref{subsec:outer}
records the outer sets.

\paragraph{Exact recovery from residual orderings in low dimensions.}
The key structural fact about the finite-support criterion is that it is
piecewise constant in \(b\).  Every coefficient of the primal and dual
programs is an indicator of a comparison between two candidate residuals,
\(z_{jt}(b)\leq z_{hs}(b)\) or \(z_{jt}(b)<z_{hs}(b)\), because the
candidate effects \(v_p\) are themselves residual values.  Such a comparison
depends on \(b\) only through the sign of
\((y_t^j-y_s^h)-(x_t-x_s)'b\).  As long as no sign changes, the programs are
literally the same programs, and so are their values.  A single objective
evaluation therefore applies to the entire region of coefficient space with
that pattern of signs.  The criterion changes only when some pair of
residuals ties, and it is at such ties that both the weak and the strict
events can change.  In particular, gradient-based local optimization alone is
not suitable for locating all of the zero-valued regions of a piecewise
constant function.

For a scalar slope, possible changes occur at
\[
  b=\frac{y_t^{mj}-y_s^{mh}}{x_t^m-x_s^m},
  \qquad x_t^m\neq x_s^m.
\]
Sort these values, evaluate \(\delta\) once on each intervening open
interval, and evaluate it at the breakpoints themselves.  Retaining all
zero-valued intervals and breakpoints recovers the complete maximizing set
when the model is compatible, without a numerical coefficient grid.  In
Example~\ref{ex:running}, the residual ties occur at \(b\in\{-1,0,1\}\), and
evaluating the criterion on the seven resulting faces gives \(\delta(b)=0\) on
the five faces making up \([-1,1]\) and \(\delta(b)=-1/8\) on the two
unbounded intervals; the identified set is read off directly.

For a vector slope, the corresponding hyperplanes are
\begin{equation}
  H_{mjt,hs}
  =\{b:(x_t^m-x_s^m)'b=y_t^{mj}-y_s^{mh}\}.
  \label{eq:tiehyperplane}
\end{equation}
They divide coefficient space into regions with fixed residual orderings and
boundary regions on which some residuals tie.  A relatively open face is one
such region, including a boundary region without its own lower-dimensional
boundary.

\begin{proposition}[Exact recovery of the zero-level set]
\label{prop:arrangement}
Suppose \(\cB\) is polyhedral and the observed distribution has finite support.
Let \(\mathcal F\) contain the relatively open faces of every dimension
induced on \(\cB\) by \eqref{eq:tiehyperplane}.  The primal feasibility
status and the values of \(\delta\) and \(\mathcal L\) are constant on each
\(F\in\mathcal F\).  For any choice of one point \(b_F\in F\),
\begin{equation}
  \ThetaI=\bigcup_{\substack{F\in\mathcal F:\delta(b_F)=0}}F.
  \label{eq:faceunion}
\end{equation}
\end{proposition}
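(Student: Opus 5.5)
The plan is to show that, on each relatively open face \(F\), every coefficient that enters the pathwise primal program \eqref{eq:lpnonnegative}--\eqref{eq:lpstrict} and the pathwise dual program \eqref{eq:dualcertificateobjective}--\eqref{eq:dualcertificatenormalize} is the same function of \(b\). Once that is established, constancy of feasibility and of \(\delta,\mathcal L\) is immediate. The identity \eqref{eq:faceunion} then follows from Theorem~\ref{thm:lp}, Proposition~\ref{prop:dualcertificate}, and \eqref{eq:globaldualzeroset}, together with the fact that the faces partition \(\cB\).

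\emph{Step 1: the program data as a function of \(b\).} Fix a path \(x^m\). The observed masses \(p_j\) do not depend on \(b\). The candidate effects are \(\mathcal A(b)=\{v_1,\ldots,v_P\}\), the distinct values among the residuals \(z_{jt}(b)\). To avoid a dependence on how residuals merge, I will index columns by the pairs \((h,s)\) and set \(v_{hs}=z_{hs}(b)\). Duplicate columns are harmless. In the primal, mass in duplicate columns can be merged or split, since the quantile constraints are linear and homogeneous in each column. In the dual, adding duplicate columns does not change the value either: the score depends on the column only through \(v\), so the \(\max_p\) in \(d_j\) is unaffected, and under the normalization one may always place weight on a single representative. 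I would verify this merging claim briefly. In the indexed form, every constraint coefficient is one of \(\one\{z_{jt}(b)\leq z_{hs}(b)\}\) or \(\one\{z_{jt}(b)<z_{hs}(b)\}\), or a constant such as \(\tau\), \(p_j\), or \(1\).

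\emph{Step 2: constancy on faces.} The comparison of \(z_{jt}(b)\) with \(z_{hs}(b)\) is the sign of \((y^{mj}_t-y^{mh}_s)-(x^m_t-x^m_s)'b\), in \(\{-,0,+\}\). Consider first the case \(x^m_t=x^m_s\). Then this sign is constant in \(b\). Note also that \(H_{mjt,hs}\) is then either empty or all of \(\R^k\), and I will treat such degenerate hyperplanes as imposing no cut. Otherwise the sign is constant on each relatively open face of the arrangement. This holds by definition: a face is a maximal set on which the sign vector relative to all the hyperplanes is fixed, intersected with the polyhedral \(\cB\). Both indicators are determined by this sign. Hence, for each \(m\), the indexed primal and dual programs are literally identical for all \(b\in F\). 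This gives constancy of feasibility and of \(\delta_{x^m}\), and hence of \(\delta=\sum_m\pi_m\delta_{x^m}\) and of \(\mathcal L\).

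\emph{Step 3: assembling the identity.} Every \(b\in\cB\) lies in exactly one face. This is the part that uses polyhedrality of \(\cB\), so that the intersections are well-defined relatively open cells covering \(\cB\). By \eqref{eq:globaldualzeroset}, \(b\in\ThetaI\) exactly when \(\delta(b)=0\). By Step 2, \(\delta(b)=\delta(b_F)\) for the face \(F\) containing \(b\). This yields \eqref{eq:faceunion} for any choice of representatives \(b_F\).

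The main obstacle I expect is Step 1. The sets \(\mathcal A(b)\) change cardinality across faces, so the programs as written are not obviously "the same." The fix is the relabeling by residual indices together with the check that duplicated columns change neither primal feasibility nor the dual value. Everything else is bookkeeping of sign patterns.
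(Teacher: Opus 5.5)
Your skeleton matches the paper's. Every coefficient of the pathwise programs is an indicator of a weak or strict comparison of \(z_{jt}(b)\) with \(z_{hs}(b)\). These comparisons are fixed on a relatively open face, and \eqref{eq:faceunion} then follows from Theorem~\ref{thm:lp}, Proposition~\ref{prop:dualcertificate} and \eqref{eq:globaldualzeroset}.

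The gap is in Step~1, in the claim that duplicating columns does not change the dual value. Your primal merging argument is correct. The dual claim, however, fails for the program \eqref{eq:dualcertificateobjective}--\eqref{eq:dualcertificatenormalize} that defines \(\delta\):
\begin{itemize}
\item Every column enters an epigraph constraint \(d_j\geq s_{jp}(b;\lambda)\), and a column with zero weight has score \(0\). Putting all weight on one representative therefore forces \(d_j\geq 0\).
\item To keep a duplicate column negative you must give it its own weight, and that uses up budget in \eqref{eq:dualcertificatenormalize}.
\item Example: take a single outcome vector with residuals \((0,1,1)\), \(T=3\), \(\tau=1/2\). The distinct-value program, with columns \(v=0,1\), has value \(-1/4\). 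Your indexed program, with columns \((1,1),(1,2),(1,3)\), has value \(-1/6\).
\item If you instead meant duplicate columns to share weights, you cannot also ``place weight on a single representative.'' The weight-sharing constraints and the normalization would then be determined by the tie pattern, which brings you back to the argument below.
\end{itemize}
So your indexed dual has the same zero set as \(\delta\) but not the same value. Step~2 proves constancy of that other quantity, and the claim that \(\delta\) and \(\mathcal L\) themselves are constant on faces is left unproved. The union identity does survive, because it only needs constancy of primal feasibility, which you do establish.

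The repair is to drop the relabeling, because the obstacle you anticipated does not arise inside a face. A relatively open face is determined by which tie hyperplanes \eqref{eq:tiehyperplane} contain \(b\) and on which side of the others \(b\) lies. Hence the whole pattern of ties and strict orderings among the residuals \(z_{jt}(b)\) is the same for every \(b\in F\). It follows that the number \(P\) of distinct values, their order, and the class of residual indices realizing each \(v_p\) are all fixed on \(F\); the cardinality of \(\mathcal A(b)\) changes only across faces. The programs as written, including the normalization summed over \(p=1,\ldots,P\), are therefore literally identical throughout \(F\). This is exactly the paper's step that ``the distinct anchor classes and all LP coefficients are constant on that face.'' With it, your Steps~2 and~3 go through as stated.
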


Tie faces must be retained in this calculation because ties change the weak
and strict quantile events; in Example~\ref{ex:running} the endpoints
\(b=\pm1\) are tie faces and belong to the identified set.  Enumeration gives
an exact full-set construction in low dimensions, but can be costly with many
coefficients.  It is one way to recover the optimizing set, not a prerequisite
for using the criterion.  Appendix~\ref{app:orderings} gives the proof and
enumeration details.

\paragraph{Profiling and bounds with several coefficients.}
Write \(b=(\psi,\eta)\), where \(\psi\) is the coefficient or subvector to
be reported and \(\eta\) contains the remaining coefficients.  Define
\begin{equation}
  \mathcal L_{\mathrm{prof}}(\psi)
  =\inf_{\eta:(\psi,\eta)\in\cB}\mathcal L(\psi,\eta),
  \qquad \inf\varnothing=+\infty.
  \label{eq:profiledualvalue}
\end{equation}
Under finite support the objective has only finitely many values, since there
are only finitely many residual orderings.  Thus the infimum in any nonempty
slice is attained.  It follows that the sharp projection is
\begin{equation}
  \Theta_{I,\psi}
  :=\{\psi:\text{some }(\psi,\eta)\in\ThetaI\}
  =\{\psi:\mathcal L_{\mathrm{prof}}(\psi)=0\}.
  \label{eq:profiledualzeroset}
\end{equation}
When \(\ThetaI\neq\varnothing\), this is also the set of global minimizers
of the profile criterion.  The nuisance coordinates are optimized over,
not placed on a grid.  If only the bounds on a linear functional \(c'b\)
are needed, they can instead be obtained from the two constrained problems
\begin{equation}
  \underline\psi_c
  =\inf_{b\in\cB:\mathcal L(b)=0}c'b,
  \qquad
  \overline\psi_c
  =\sup_{b\in\cB:\mathcal L(b)=0}c'b.
  \label{eq:dualcriterionbounds}
\end{equation}
Neither requires first listing the full joint set.  These are bounds, however:
componentwise endpoints do not describe the joint set or establish that every
point between endpoints belongs to the projection.  Recovering a projection
with gaps requires its zero-level set in \eqref{eq:profiledualzeroset}.

\subsection{Continuous outcomes and discrete regressors: a histogram criterion}
\label{subsec:histogramcriterion}

Suppose \(X\) has finitely many positive-probability paths
\(x^1,\ldots,x^M\), but \(Y\) is continuous.  The residual-value reduction
of Section~\ref{subsec:membership} is then unavailable, because there are
infinitely many residual values.  A natural idea is to coarsen the problem:
bin the candidate residuals and apply the finite-support machinery to the
joint histogram of the bins.  The reason this is legitimate is that model
(A) is preserved by monotone transformations applied simultaneously to the
residuals and to the effect.  If \(g\) is nondecreasing and \(A\) is a
conditional \(\tau\)-quantile of \(R_t(b)\) given \((X,A)\), then
\(\{R_t(b)\leq A\}\subseteq\{g(R_t(b))\leq g(A)\}\) and
\(\{g(R_t(b))<g(A)\}\subseteq\{R_t(b)<A\}\), so \(g(A)\) is a conditional
\(\tau\)-quantile of \(g(R_t(b))\) given \((X,A)\), and hence also given the
coarser information \((X,g(A))\) by iterated expectations.  A feasible
coupling for the original problem therefore induces a feasible coupling for
the binned problem.  Thus
the binned LP is a necessary condition, its zero set is an outer set at
every resolution, and, as we show, refining the bins recovers the sharp set.
The LP has the same constraint matrix at every coefficient value, which is
computationally convenient.

\paragraph{Form the joint residual histogram.}
Choose a fixed continuous strictly increasing bijection
\(H:\mathbb R\to(0,1)\), for example
\(H(r)=\tfrac12+\pi^{-1}\arctan r\). This only specifies bin cutpoints;
it imposes no distributional assumption and does not truncate the tails.
For \(Q=1,2,\ldots\), set \(K_Q=2^Q\) and define the ordered bin labels
\[
  C_{t,Q}(b)=\left\lfloor K_QH(Y_t-X_t'b)\right\rfloor,
  \qquad t=1,\ldots,T.
\]
Use the \emph{same} map for every period, so that a common effect can be
binned by the same rule as every residual.  For
\(j=(j_1,\ldots,j_T)\in\mathcal J_Q:=\{0,\ldots,K_Q-1\}^T\), let
\begin{equation}
  p_{mj,Q}(b)
  =P^{\mathrm{obs}}\!\left(
    (C_{1,Q}(b),\ldots,C_{T,Q}(b))=j\mid X=x^m
  \right).
  \label{eq:histogramprobabilities}
\end{equation}
These are joint bin probabilities, not products of marginal probabilities;
all within-panel dependence is retained at the chosen resolution.

\paragraph{Evaluate a fixed-matrix LP.}
Allow each label \(p\in\{0,\ldots,K_Q-1\}\) as a possible effect label.
For nonnegative weights \(\lambda_{pt}^-\) and \(\lambda_{pt}^+\), write
\[
  s_{jp,Q}(\lambda)
  =\sum_{t=1}^T\Big[
    \lambda_{pt}^-\{\one(j_t\leq p)-\tau\}
    +\lambda_{pt}^+\{\tau-\one(j_t<p)\}
  \Big].
\]
Apply Proposition~\ref{prop:dualcertificate} to these ordered labels:
\begin{equation}
\begin{aligned}
  \delta_{m,Q}^{\mathrm{bin}}(b)
  &=\min_{\lambda^\pm\geq0,\,d}
        \sum_{j\in\mathcal J_Q}p_{mj,Q}(b)d_j,\\
  &\text{subject to}\quad
       d_j\geq s_{jp,Q}(\lambda)\quad\text{for every }j,p,\\
  &\hspace{24mm}
       \sum_{p=0}^{K_Q-1}\sum_{t=1}^T
       (\lambda_{pt}^-+\lambda_{pt}^+)\leq1,
       \qquad d_j\in\mathbb R.
\end{aligned}
  \label{eq:histogramdual}
\end{equation}
Its value is nonpositive and equals zero exactly when the binned residual
law admits a common conditional quantile effect. All comparisons in
\eqref{eq:histogramdual} are between integers: only the objective
coefficients \(p_{mj,Q}(b)\) change with \(b\). The matrix can therefore be
constructed once per resolution and reused across paths and coefficient values.

\paragraph{Optimize and refine.}
With \(\pi_m=P^{\mathrm{obs}}(X=x^m)>0\), define
\begin{equation}
  \mathcal L_Q^{\mathrm{bin}}(b)
  =-\sum_{m=1}^M\pi_m\delta_{m,Q}^{\mathrm{bin}}(b)\geq0,
  \qquad
  \Theta_{I,Q}^{\mathrm{bin}}
  =\{b\in\cB:\mathcal L_Q^{\mathrm{bin}}(b)=0\}.
  \label{eq:histogramloss}
\end{equation}
When \(\ThetaI\neq\varnothing\),
\[
  \Theta_{I,Q}^{\mathrm{bin}}
  =\operatorname*{arg\,min}_{b\in\cB}\mathcal L_Q^{\mathrm{bin}}(b).
\]
Thus the loss can be supplied to an optimization over the coefficient vector,
or profiled for lower-dimensional targets, without a prespecified coefficient
grid. As above, the target is the \emph{set} of global minimizers, not one
reported solution; the outer optimization need not be convex.

The population sets satisfy
\begin{equation}
  \ThetaI\subseteq\Theta_{I,Q+1}^{\mathrm{bin}}
       \subseteq\Theta_{I,Q}^{\mathrm{bin}},
  \qquad
  \ThetaI=\bigcap_{Q\geq1}\Theta_{I,Q}^{\mathrm{bin}}.
  \label{eq:histogramrecovery}
\end{equation}
Outer validity is the monotone-transformation argument above: bin a feasible
\(A\) by the same rule as the residuals, and both conditional quantile
inequalities survive, conditional on the coarser information
\((X,C_Q(A))\) by iterated expectations.  Nesting follows because each coarser
label is a nondecreasing function of the finer label, so a coupling that is
feasible at resolution \(Q+1\) induces one at resolution \(Q\).  For sharp
recovery, rescale feasible labels and effects by \(K_Q\).  Their laws lie in
the compact cube \([0,1]^{T+1}\).  Any weak limit has residual marginal
\((H(R_1(b)),\ldots,H(R_T(b)))\) and retains the weak/strict quantile
inequalities.  Its effect has no mass at \(0\) or \(1\), since the residuals
lie in \((0,1)\) and \(0<\tau<1\).  Applying \(H^{-1}\) gives a feasible
original coupling in every path.

In implementation, compute the joint histogram, evaluate
\eqref{eq:histogramdual} within each path, and optimize
\(\mathcal L_Q^{\mathrm{bin}}\); then double \(K_Q\).
There are \(K_Q^T\) possible outcome bins, so this option is most attractive
for short panels and moderate resolutions, not uniformly faster than the
multiplier sieve. Replacing \eqref{eq:histogramprobabilities} by within-path
sample frequencies gives an empirical objective. The outer-set and recovery
claims above concern the population probabilities; empirical optimizing sets
are not automatically population bounds or confidence sets.

\subsection{Continuous support: a sequence of criterion functions}
\label{subsec:continuous}

Now maintain the compact-support conditions of Theorem~\ref{thm:observabledual}
and allow the regressor path to be continuously distributed.  There is now
no finite list of residual values on which the distribution of \(A\) can be
placed, and the histogram device would require partitioning the regressor
space as well as the outcomes.  We therefore change what is approximated.
Rather than discretizing the latent distribution, we approximate the
\emph{observable dual}: the critic of Section~\ref{sec:duality} is restricted
to a finite-dimensional class of weight functions, and the class is enlarged
step by step.

The logic is that of Theorem~\ref{thm:observabledual} read one direction at
a time.  Every inequality \(\Psi_b(\lambda)\geq0\) is a valid necessary
condition, whatever the class from which \(\lambda\) is drawn, so the set of
slopes passing all inequalities in a restricted class is an outer set.
Enlarging the class adds inequalities and shrinks the outer set.  And
because the excluding multiplier promised by the theorem is continuous, a
class rich enough to approximate continuous functions uniformly eventually
contains an excluding multiplier for every slope outside \(\ThetaI\).  The
approximation is to the \emph{tests}, not to the data, the residuals, or the
support of the individual effect.  This does not replace the conditioning
restriction in the model by a weaker one; at each resolution it checks a
subset of the model's observable implications.

A finite-dimensional class of weights makes this search tractable.  When
\(X\) has finite support, keep its exact paths and approximate only the
dependence of the weights on \(a\).  When \(X\) is continuous, also approximate
their dependence on the complete regressor path.

\paragraph{A concrete multiplier class.}
Let \(Q\) index approximation resolution, not the quantile \(\tau\).  At
resolution \(Q\), partition the support of \(X\) into \(G_Q\) cells
\(D_1^{(Q)},\ldots,D_{G_Q}^{(Q)}\), and select \(L_Q+1\) knots
\(\kappa_0<\cdots<\kappa_{L_Q}\) spanning \(K_b\).  Within each regressor
cell, a weight is specified by its nonnegative values at these knots and is
linearly interpolated between them.  In notation,
\begin{equation}
  \lambda_t^\pm(x,a)
  =\sum_{g=1}^{G_Q}\one\{x\in D_g^{(Q)}\}
     \sum_{\ell=0}^{L_Q}\theta_{tg\ell}^\pm B_\ell(a),
  \qquad \theta_{tg\ell}^\pm\geq0,
  \label{eq:cellhatbasis}
\end{equation}
where \(B_\ell\) is the piecewise-linear interpolation function equal to one
at knot \(\kappa_\ell\) and zero at the other knots.  These functions are
nonnegative and sum to one.  Thus \(\theta_{tg\ell}^\pm\) is simply the value
of the period-\(t\) weight at knot \(\ell\) in regressor cell \(g\).
If \(X\) has finite support, take its singleton paths as cells and refine only
the anchor knots.  Two properties make this class convenient: a weight is
nonnegative if and only if its knot values are nonnegative, so the cone
constraint is a sign constraint; and the score is piecewise affine in
\(a\), so its maximum over \(a\) can be computed exactly on a finite list of
points (Section~\ref{subsec:sieveevaluation}).

To repeat the point in operational terms: every residual remains
\(Y_t-X_t'b\), calculated from the actual \(Y\) and \(X\); there is no
partition of the outcomes and no replacement of a regressor by a cell
representative; and the maximization in \(\Psi_b\) still allows every
\(a\in K_b\), not only the knots.  Only the weights are restricted.

\paragraph{The objective and its optimizing set at resolution \texorpdfstring{\(Q\)}{Q}.}
Let \(\Lambda_{Q,b}\) be the cone of multipliers in \eqref{eq:cellhatbasis}.
Define the population outer set
\begin{equation}
  \Theta_{I,Q}^{\mathrm{dual}}
  =\{b\in\cB:\Psi_b(\lambda)\geq0
          \text{ for every }\lambda\in\Lambda_{Q,b}\}.
  \label{eq:dualsieveouter}
\end{equation}
To compute this condition, normalize the scale of the weights without
changing which slopes they exclude, and minimize the observable functional:
\begin{equation}
\begin{aligned}
  \Lambda_{Q,b}^{1}
  &=\left\{\lambda\in\Lambda_{Q,b}:
        \sum_{t=1}^T
        \bigl(\|\lambda_t^-\|_\infty+\|\lambda_t^+\|_\infty\bigr)
        \leq1\right\},\\
  V_Q(b)&=\inf_{\lambda\in\Lambda_{Q,b}^{1}}\Psi_b(\lambda).
\end{aligned}
  \label{eq:populationsievevalue}
\end{equation}
The zero weights are allowed, so \(V_Q(b)\leq0\).  Define
\(\mathcal L_Q(b)=-V_Q(b)\).  A negative value of \(V_Q\) excludes a slope;
a zero value retains it in the population outer set.  Consequently,
\begin{equation}
  \Theta_{I,Q}^{\mathrm{dual}}
  =\{b\in\cB:V_Q(b)=0\}
  =\{b\in\cB:\mathcal L_Q(b)=0\}.
  \label{eq:sievecriterionzeroset}
\end{equation}
When the maintained model is compatible, every such outer set is nonempty and
\begin{equation}
  \Theta_{I,Q}^{\mathrm{dual}}
  =\operatorname*{arg\,max}_{b\in\cB}V_Q(b)
  =\operatorname*{arg\,min}_{b\in\cB}\mathcal L_Q(b),
  \label{eq:sievecriterionargmax}
\end{equation}
with optimal value zero.  Thus one computes an entire outer set as the set
of optimal solutions, just as for finite support.  The difference is not the
use of optimization, but the identifying content of the objective: at finite
\(Q\), some infeasible slopes may still attain its optimum, because the
inequality that would exclude them is not yet in the class.

\begin{proposition}[Recovery by nested multiplier classes]
\label{prop:dualsieverecovery}
Maintain the compact-support conditions of Theorem~\ref{thm:observabledual}.
Suppose \(\Lambda_{Q,b}\subseteq\Lambda_{Q+1,b}\) for every \(Q\), and every
nonnegative continuous multiplier on \(\mathcal X\times K_b\) can be
approximated uniformly, component by component, by multipliers from these
classes.  Then
\begin{equation}
  \ThetaI\subseteq\Theta_{I,Q+1}^{\mathrm{dual}}
       \subseteq\Theta_{I,Q}^{\mathrm{dual}},
  \qquad
  \ThetaI=\bigcap_{Q\geq1}\Theta_{I,Q}^{\mathrm{dual}}.
  \label{eq:dualsieverecovery}
\end{equation}
\end{proposition}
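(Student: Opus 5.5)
The three claims are proved separately. The outer containment comes from the necessity direction of Theorem~\ref{thm:observabledual}. Nesting comes from inclusion of the multiplier classes. Sharp recovery comes from the strict certificate \eqref{eq:strictdualcertificate} combined with uniform approximation and a continuity estimate for \(\Psi_b\).

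\emph{Outer validity and nesting.} If \(b\in\ThetaI\), then \eqref{eq:dualnecessity} gives \(\Psi_b(\lambda)\geq0\) for every nonnegative multiplier for which the pointwise maximum is measurable. Remark~\ref{rem:boreldual} covers the cell-indicator-times-hat-function class \eqref{eq:cellhatbasis}: the score is piecewise affine in \(a\) and Borel in \(w\), so the maximum is measurable. Hence \(\ThetaI\subseteq\Theta_{I,Q}^{\mathrm{dual}}\) for every \(Q\). Since \(\Lambda_{Q,b}\subseteq\Lambda_{Q+1,b}\), passing every inequality indexed by \(\Lambda_{Q+1,b}\) implies passing every inequality indexed by \(\Lambda_{Q,b}\). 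This gives \(\Theta_{I,Q+1}^{\mathrm{dual}}\subseteq\Theta_{I,Q}^{\mathrm{dual}}\), and so \(\ThetaI\subseteq\bigcap_Q\Theta_{I,Q}^{\mathrm{dual}}\).

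\emph{Reverse inclusion.} Fix \(b\notin\ThetaI\). By Theorem~\ref{thm:observabledual} there is \(\lambda^\ast\in C_+(\mathcal X\times K_b)^{2T}\) with \(\Psi_b(\lambda^\ast)=-\eta<0\). The key estimate is a Lipschitz bound for \(\Psi_b\) in the sup norm. Each bracket in \eqref{eq:dualscore} lies in \([-1,1]\), so
\[
  \bigl|S_b(w,a;\lambda)-S_b(w,a;\mu)\bigr|\leq\sum_{t}\bigl(\|\lambda_t^--\mu_t^-\|_\infty+\|\lambda_t^+-\mu_t^+\|_\infty\bigr)
\]
for all \((w,a)\). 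The same bound then holds for the maximum over \(a\in K_b\), and, after taking expectations, for \(|\Psi_b(\lambda)-\Psi_b(\mu)|\). By hypothesis we can choose \(Q\) and \(\lambda_Q\in\Lambda_{Q,b}\) with \(\sum_t\bigl(\|\lambda_{Q,t}^--\lambda_t^{\ast-}\|_\infty+\|\lambda_{Q,t}^+-\lambda_t^{\ast+}\|_\infty\bigr)<\eta/2\). Then \(\Psi_b(\lambda_Q)<-\eta/2<0\), so \(b\notin\Theta_{I,Q}^{\mathrm{dual}}\). Because the sets are nested, \(b\) is also excluded at every larger resolution, which gives \(\bigcap_Q\Theta_{I,Q}^{\mathrm{dual}}\subseteq\ThetaI\).

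\emph{Main obstacle.} The delicate points are measurability and the meaning of the approximation hypothesis, not the inequalities themselves. First, approximants that are discontinuous in \(x\) (cell indicators) must still yield a well-defined \(\Psi_b\). Since the Lipschitz bound is pointwise in \(w\), I would argue measurability directly: for fixed \(w\) the maximum over \(a\) is attained on a finite set of kinks and residual values, and that set depends measurably on \(w\). Second, uniform approximation must be understood with cell multipliers allowed to be discontinuous in \(x\). The hypothesis is stated as uniform approximation on \(\mathcal X\times K_b\), so no extra continuity is needed. Third, \(K_b\) depends on \(b\), but \(b\) is fixed throughout the argument, so this causes no difficulty.
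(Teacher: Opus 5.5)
Your proposal is correct and follows the paper's own proof essentially step for step. You get outer validity from the necessity direction of Theorem~\ref{thm:observabledual} together with Remark~\ref{rem:boreldual}, nesting from inclusion of the multiplier classes, and recovery from the strict continuous certificate combined with the sup-norm Lipschitz bound (constant one) on \(\Psi_b\). Your measurability argument for the cell-indicator multipliers matches the breakpoint argument the paper gives in Appendix~\ref{app:breakpoints}.
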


The proof, in Appendix~\ref{app:sieveproof}, is short.  The containments are
the one-direction reading of the duality theorem given above.  For the
intersection, take \(b\notin\ThetaI\); Theorem~\ref{thm:observabledual}
supplies a continuous multiplier with \(\Psi_b(\lambda)=-\epsilon<0\); the
functional \(\Psi_b\) is Lipschitz in \(\lambda\) under the supremum norm,
with constant one for the sum over the \(2T\) components; so any class element
within \(\epsilon/2\) of \(\lambda\) in that norm also has a negative value and
excludes \(b\).  For \eqref{eq:cellhatbasis}, it is enough to subdivide the
regressor cells, retain existing knots when adding new ones, and let both the
largest cell diameter and the largest knot spacing tend to zero.  If \(X\) has
finite support, its cells stay fixed.  When the maintained model is
compatible, the same result can be written as
\begin{equation}
  \ThetaI
  =\bigcap_{Q\geq1}\operatorname*{arg\,max}_{b\in\cB}V_Q(b)
  =\bigcap_{Q\geq1}\operatorname*{arg\,min}_{b\in\cB}\mathcal L_Q(b).
  \label{eq:sieveargmaxrecovery}
\end{equation}
This is recovery by nested sets of optimal solutions.  Every infeasible slope
is excluded at some finite resolution; no common finite resolution or boundary
convergence rate is asserted.  Selecting only one optimizer at each resolution
would not in general recover the identified set.

\subsection{Evaluating the continuous-support procedure}
\label{subsec:sieveevaluation}

At a fixed resolution, there are finitely many multiplier coefficients, but
\(\Psi_b\) still contains a population expectation.  To evaluate it
numerically, suppose that the expectation is represented or approximated using
weighted points
\[
  P_N=\sum_{i=1}^N\omega_i\delta_{W_i},
  \qquad W_i=(Y_i,X_i),\quad
  \omega_i\geq0,\quad\sum_{i=1}^N\omega_i=1.
\]
The points might be numerical integration nodes or sample observations.  The
empirical distribution uses \(N=n\) and \(\omega_i=1/n\).  Their
interpretation matters for what can be concluded from the calculation.

\paragraph{The maximization over \(a\) is still exact.}
For point \(i\), compute \(r_{it}(b)=Y_{it}-X_{it}'b\) and form
\begin{equation}
  \mathcal A_{i,Q}(b)
  =\{\kappa_0,\ldots,\kappa_{L_Q}\}
      \cup\bigl(\{r_{i1}(b),\ldots,r_{iT}(b)\}\cap K_b\bigr).
  \label{eq:sievebreakpoints}
\end{equation}
\begin{samepage}
For weights in \eqref{eq:cellhatbasis}, the score is affine between consecutive
knots and residual values.  The weak/strict quantile formulation ensures that
its value at a residual is no smaller than either one-sided limit.  Hence the
maximum over \(K_b\) is attained on the finite list
\(\mathcal A_{i,Q}(b)\).  Appendix~\ref{app:breakpoints} proves this claim.
\par\end{samepage}

\begin{samepage}
\paragraph{The resulting linear program.}
Introduce one unrestricted variable \(d_i\) to bound the maximized score at
each point.  The other unknowns are the nonnegative knot values
\(\theta_{tg\ell}^\pm\) and bounds \(z_t^\pm\) used to normalize their scale.
The program is
\begin{equation}
  \widehat V_Q(b)=\min_{d,z,\theta}\sum_{i=1}^N\omega_i d_i
  \label{eq:empiricaldualsieve}
\end{equation}
subject to
\begin{align}
  d_i&\geq S_b(W_i,a;\lambda),
  &&i=1,\ldots,N,\quad a\in\mathcal A_{i,Q}(b),
  \label{eq:dualsieveepigraph}\\*
  0\leq\theta_{tg\ell}^\pm&\leq z_t^\pm,
  &&t=1,\ldots,T,\quad g=1,\ldots,G_Q,\quad \ell=0,\ldots,L_Q,
  \notag\\*
  \sum_{t=1}^T(z_t^-+z_t^+)&\leq1.
  \label{eq:dualsievenormalization}
\end{align}
\par\end{samepage}
The score is linear in the knot values once \(b\) is fixed.  Minimizing over
\(d_i\) implements its maximum at each point, and the last two lines implement
the normalization in \(\Lambda_{Q,b}^1\).  Thus this program evaluates the
chosen multiplier class exactly for the weighted law \(P_N\).  It has the same
structure as the finite-support dual \eqref{eq:dualcertificateobjective}: one
epigraph variable per observation, one sign-constrained weight per test, and
one normalization.

When \(X\) has finite support, the same calculation can be run separately
in each path.  Use the path's conditional outcome distribution, weights summing
to one within that path, and its residual-support interval \(K_{m,b}\).
Only the dependence on \(a\) is approximated.  For a pathwise implementation,
combine the nonpositive path values with strictly positive weights as in
\eqref{eq:globaldualvalue}; the aggregate is zero exactly when every path
value is zero.  With continuously distributed \(X\), use the regressor-cell
multipliers in \eqref{eq:cellhatbasis}; treating
each distinct sample path as a separate population conditioning cell would not
recover the conditional outcome distribution.  In either case both quantile
inequalities remain necessary, even when the observed outcomes are continuous,
since conditioning further on \(A\) can produce atoms.

\paragraph{Using the LP as an objective evaluator.}
For fixed \(Q\) and weighted law \(P_N\), supply
\(\widehat V_Q(b)\) to an outer optimization over \(b\), or minimize
\(\widehat{\mathcal L}_Q(b)=-\widehat V_Q(b)\).  At each requested
coefficient value, the inner LP evaluates the objective; a separate primal
feasibility calculation is unnecessary.  The numerical target is the set of
global optimizers, not just the first optimizer returned.  The same objective
can be used in constrained optimization when only particular coefficient
bounds are required.  As in finite support, evaluating an LP exactly does not
make the outer problem an LP or establish that a numerical search has found
all optima.

\paragraph{Population conclusions and numerical conclusions.}
With the population expectation, \(V_Q(b)<0\) excludes the candidate and
\(V_Q(b)=0\) retains it in an outer set.  Increasing \(Q\) recovers the sharp
set as in Proposition~\ref{prop:dualsieverecovery}.  When the expectation is
approximated by quadrature, integration error must be controlled before a
computed negative value can be interpreted as a population exclusion.  With
sample averages, \(\widehat V_Q(b)\) is a sample criterion, not a population
certificate or a confidence procedure.  Its optimal value need not be zero,
and its optimizing set is not automatically the population outer set in
\eqref{eq:sievecriterionargmax}.  Statistical inference is outside the scope
of this paper.

The same distinction applies to the interval \(K_b\): population calculations
use a support bound valid for the population; sample residual extrema give an
exact bound only for the empirical law.  A finite truncation of an unbounded
population is not covered by the compact-support result.  Finally, the
coefficient-space optimization is separate from refinement of the multiplier
class.  A mesh can be used for a numerical display, but is not required by the
criterion formulation and does not certify the unexamined coefficients.
Report the regressor partition, knot resolution, anchor interval, expectation
used, optimization method, and numerical tolerances, distinguishing objective
evaluation accuracy from recovery of the full optimizing set.

\paragraph{Summary.}
In every setting the computation has the same shape.  An inner linear program
plays the critic's role: it searches a class of nonnegative weights for the
observable inequality that a candidate slope violates most, after each
observation has been given the most favorable value of its individual
effect.  Its optimal value is nonpositive, and it is zero exactly when no
inequality in the class excludes the slope.  An outer optimization over the
slope then recovers the zero-level set of this value.  With finite support
the class is complete and the zero-level set is \(\ThetaI\); with continuous
support the class is finite-dimensional, the zero-level set is an outer set,
and nested classes drive it down to \(\ThetaI\).

\section{Covariate support, $T$, and point identification}
\label{sec:longpanel}

This section relates the fixed-\(T\) sharp sets to identification as the number
of periods grows.  Adding a period always weakly contracts the sharp set, but
nesting alone does not imply convergence to a singleton under model
\textup{(A)}.  We first establish nesting and isolate the exact
nonidentification created by time-invariant regressor directions.  We then give
the necessary-and-sufficient long-panel characterization furnished by observable
dual separation.  Finally, we derive a primitive support-spread bound.  The key
quantity in the new bound measures spread in a fixed coefficient direction
\emph{across regressor paths}.  It is strictly more informative than requiring
every realized path to have full directional rank and, in particular, can imply
point identification with only two periods.

\subsection{Projective setup and nesting}
\label{subsec:projectivenesting}

Let
\[
  (Y^{(\infty)},X^{(\infty)})
  =
  \{(Y_t,X_t):t\geq1\}
\]
have population law \(\Pobs_\infty\).  For each finite \(T\), write
\[
  Y^{(T)}=(Y_1,\ldots,Y_T)',
  \qquad
  X^{(T)}=(X_1,\ldots,X_T),
\]
and let \(\Pobs_T\) be the corresponding marginal law.  The
finite-dimensional observed laws are projectively consistent by construction.

For \(b\in\cB\), let \(\cC_{T,b}(\Pobs_T)\) be the collection of laws of
\((Y^{(T)},X^{(T)},A)\) having observed marginal \(\Pobs_T\) and satisfying,
for every \(t\leq T\),
\begin{align}
  P\{Y_t-X_t'b\leq A\mid X^{(T)},A\}&\geq\tau,
  \label{eq:Tcoupleweak}\\
  P\{Y_t-X_t'b<A\mid X^{(T)},A\}&\leq\tau
  \label{eq:Tcouplestrict}
\end{align}
almost surely.  Define the \(T\)-period sharp set by
\begin{equation}
  \ThetaI^{(T)}
  =
  \{b\in\cB:\cC_{T,b}(\Pobs_T)\neq\varnothing\}.
  \label{eq:Tsharpdef}
\end{equation}

\begin{proposition}[Monotonicity in the number of periods]
\label{prop:Tnesting}
For every \(T\geq1\),
\begin{equation}
  \ThetaI^{(T+1)}
  \subseteq
  \ThetaI^{(T)}.
  \label{eq:Tnesting}
\end{equation}
No stationarity, independence, exchangeability, continuity, or
common-distribution condition across periods is required.
\end{proposition}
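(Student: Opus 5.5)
The plan is to take a feasible $(T+1)$-period coupling and marginalize it, then show that the $T$-period conditional restrictions survive coarsening the conditioning field from $(X^{(T+1)},A)$ to $(X^{(T)},A)$. Fix $b\in\ThetaI^{(T+1)}$ and choose $\pi\in\cC_{T+1,b}(\Pobs_{T+1})$. Let $\pi_T$ denote the law of $(Y^{(T)},X^{(T)},A)$ under $\pi$, obtained by dropping $(Y_{T+1},X_{T+1})$. Projective consistency of the observed laws gives that the $(Y^{(T)},X^{(T)})$-marginal of $\pi_T$ is the corresponding marginal of $\Pobs_{T+1}$, which is $\Pobs_T$. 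So $\pi_T$ has the correct observed marginal, and it remains only to verify \eqref{eq:Tcoupleweak}--\eqref{eq:Tcouplestrict} for $t\leq T$.

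For $t\leq T$, the events $\{Y_t-X_t'b\leq A\}$ and $\{Y_t-X_t'b<A\}$ are measurable with respect to $(Y^{(T)},X^{(T)},A)$. Their conditional probabilities under $\pi_T$ given $(X^{(T)},A)$ therefore coincide almost surely with the corresponding conditional probabilities under $\pi$ given $\sigma(X^{(T)},A)$. Since $\sigma(X^{(T)},A)\subseteq\sigma(X^{(T+1)},A)$, the tower property yields
\[
  \pi\{R_t(b)\leq A\mid X^{(T)},A\}
  =E_\pi\bigl[\pi\{R_t(b)\leq A\mid X^{(T+1)},A\}\mid X^{(T)},A\bigr]\geq\tau,
\]
and the analogous identity for the strict event yields $\leq\tau$. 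The argument is the same one used in Remark~\ref{rem:conditioninginformation} and in the predeterminedness discussion. The restriction for period $T+1$ is simply discarded. Hence $\pi_T\in\cC_{T,b}(\Pobs_T)$ and $b\in\ThetaI^{(T)}$.

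Only one step needs care, and it is mild: the inequalities are almost-sure statements about regular conditional probabilities, and averaging an a.s.\ inequality over a coarser field preserves it. Regular versions exist because all spaces are standard Borel, as in Remark~\ref{rem:kernels}. Nothing in the argument relates the law of period $T+1$ to earlier periods, so no stationarity, independence, or continuity is used. This is the claim made at the end of the statement. By Theorem~\ref{thm:sharp}, the coupling-based sets coincide with the structural sharp sets, so the nesting also holds for $\ThetaI^{\mathrm{str}}$ at each horizon.
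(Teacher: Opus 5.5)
Your proposal is correct and follows essentially the same route as the paper's proof: you marginalize a feasible $(T+1)$-period coupling, use projective consistency to recover the observed marginal $\Pobs_T$, and apply iterated expectations to coarsen the conditioning from $(X^{(T+1)},A)$ to $(X^{(T)},A)$ for both the weak and the strict events. Your added remarks on regular conditional versions and on the link to $\ThetaI^{\mathrm{str}}$ via Theorem~\ref{thm:sharp} are harmless extras that the paper does not spell out.
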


\begin{proof}
Take \(b\in\ThetaI^{(T+1)}\), and choose
\(\pi_{T+1}\in\cC_{T+1,b}(\Pobs_{T+1})\).  Let \(\pi_T\) be its marginal law
of \((Y^{(T)},X^{(T)},A)\).  Projective consistency implies that the
\((Y^{(T)},X^{(T)})\)-marginal of \(\pi_T\) is \(\Pobs_T\).

For every \(t\leq T\), iterated expectations give
\begin{align*}
  &P_{\pi_T}\{Y_t-X_t'b\leq A\mid X^{(T)},A\}\\
  &\quad=
  E_{\pi_{T+1}}\!\left[
    P_{\pi_{T+1}}\{Y_t-X_t'b\leq A\mid X^{(T+1)},A\}
    \,\middle|\,X^{(T)},A
  \right]
  \geq\tau.
\end{align*}
The same argument for the strict event gives
\[
  P_{\pi_T}\{Y_t-X_t'b<A\mid X^{(T)},A\}
  \leq\tau.
\]
Thus \(\pi_T\in\cC_{T,b}(\Pobs_T)\), and
\(b\in\ThetaI^{(T)}\).
\end{proof}

\begin{remark}[The coupling need not be projectively consistent]
The definition of \(\ThetaI^{(T)}\) permits the rationalizing distribution of
\(A\) to change with \(T\).  Proposition \ref{prop:Tnesting} does not require a
researcher to select mutually consistent couplings at all horizons.  It uses only
the fact that any one feasible \((T+1)\)-period coupling can be marginalized to
produce a feasible \(T\)-period coupling.
\end{remark}

\subsection{Time-invariant directions and failure of automatic contraction}
\label{subsec:Tcounterexample}

A decreasing sequence of sets need not have a singleton intersection.  Under
model \textup{(A)}, the intersection can equal the entire parameter space.

\begin{example}[No contraction with time-invariant regressors]
\label{ex:TinvariantX}
Suppose \(X_t=Z\) almost surely for every \(t\), and suppose the observed process
is generated by
\begin{equation}
  Y_t=Z'\beta_0+A_0+U_t,
  \qquad t\geq1,
  \label{eq:TinvariantDGP}
\end{equation}
where, at every finite horizon, \(U_t\) satisfies model \textup{(A)}
conditional on \((X^{(T)},A_0)\).  For any \(b\in\cB\), define
\begin{equation}
  A^{(b)}
  =
  A_0+Z'(\beta_0-b),
  \qquad
  U_t^{(b)}=U_t.
  \label{eq:Tinvariantreparam}
\end{equation}
Then
\[
  Y_t=Z'b+A^{(b)}+U_t^{(b)}.
\]
Conditional on \(X^{(T)}\), the transformation between \(A_0\) and \(A^{(b)}\)
is one-to-one, so
\[
  \sigma(X^{(T)},A^{(b)})
  =
  \sigma(X^{(T)},A_0).
\]
The conditional quantile restrictions are therefore unchanged.  Every
\(b\in\cB\) is feasible at every horizon:
\begin{equation}
  \ThetaI^{(T)}=\cB
  \quad\text{for every }T,
  \qquad
  \bigcap_{T\geq1}\ThetaI^{(T)}=\cB.
  \label{eq:Tnoidentification}
\end{equation}
\end{example}

More generally, suppose there is a nonzero \(d\in\R^k\) and a measurable scalar
\(c(X^{(\infty)})\) such that
\begin{equation}
  X_t'd=c(X^{(\infty)})
  \qquad\text{for every }t.
  \label{eq:Tabsorbdirection}
\end{equation}
Then the change from \(\beta_0\) to \(\beta_0+d\) can be offset by subtracting
\(c(X^{(\infty)})\) from the individual effect.  Thus a necessary condition for
excluding every false slope is that no nonzero coefficient direction be constant
over time almost surely.  At a finite horizon this is the aggregate rank
condition \(\mathcal D_T=\{0\}\) from Section \ref{subsec:rankA}, equivalently
\(\operatorname{rank}(G_T)=k\) under square integrability.  Importantly, this
condition concerns a direction that is constant almost surely across the
population of regressor paths.  It does not require every realized path to
affinely span \(\R^k\), and it is not sufficient for singleton identification
under model \textup{(A)}.

\subsection{Exact long-panel characterization by dual separation}
\label{subsec:Tdual}

For each \(T\), suppose the compact-support conditions of Theorem
\ref{thm:observabledual} hold, and let \(\Psi_{T,b}(\lambda)\) denote the
observable dual functional formed from the first \(T\) periods.  Let
\(\Lambda_{T,b}\) be the corresponding cone of nonnegative continuous
multipliers.

\begin{proposition}[Finite-horizon separation of every false slope]
\label{prop:Tdualseparation}
Suppose \(\beta_0\in\ThetaI^{(T)}\) for every \(T\).  Then
\begin{equation}
  \bigcap_{T\geq1}\ThetaI^{(T)}=\{\beta_0\}
  \label{eq:Tsingleton}
\end{equation}
if and only if, for every \(b\in\cB\setminus\{\beta_0\}\), there are a finite
horizon \(T(b)\) and a multiplier
\(\lambda_b\in\Lambda_{T(b),b}\) such that
\begin{equation}
  \Psi_{T(b),b}(\lambda_b)<0.
  \label{eq:Tdualnegative}
\end{equation}
\end{proposition}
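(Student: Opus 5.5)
The plan is to reduce the statement to Theorem~\ref{thm:observabledual} applied horizon by horizon, using the hypothesis \(\beta_0\in\ThetaI^{(T)}\) for every \(T\) to handle the point \(\beta_0\) itself. The key observation is that, for each fixed \(T\), the compact-support conditions let us read Theorem~\ref{thm:observabledual} for the \(T\)-period model as: \(b\notin\ThetaI^{(T)}\) if and only if there is \(\lambda\in\Lambda_{T,b}\) with \(\Psi_{T,b}(\lambda)<0\). Here the necessity direction is \eqref{eq:dualnecessity} and the existence of a strict certificate is \eqref{eq:strictdualcertificate}. Before using this, I will check that the \(T\)-period coupling set \(\cC_{T,b}(\Pobs_T)\) in \eqref{eq:Tsharpdef} is exactly the coupling set of Definition~\ref{def:coupling} for the panel of length \(T\). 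Theorem~\ref{thm:sharp} and Theorem~\ref{thm:observabledual} then apply verbatim with \(\Pobs_T\) in place of \(\Pobs\).

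For the direction "\(\Leftarrow\)": since \(\beta_0\) lies in every \(\ThetaI^{(T)}\), the intersection contains \(\beta_0\). Take any \(b\neq\beta_0\). By hypothesis there are \(T(b)\) and \(\lambda_b\in\Lambda_{T(b),b}\) with \(\Psi_{T(b),b}(\lambda_b)<0\). By the necessity half of the duality, \eqref{eq:dualnecessity} at horizon \(T(b)\), this gives \(b\notin\ThetaI^{(T(b))}\), so \(b\) is not in the intersection. Hence the intersection equals \(\{\beta_0\}\). This direction uses only the easy half of the duality and needs no continuity of \(\lambda_b\) beyond membership in the admissible class.

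For the direction "\(\Rightarrow\)": suppose the intersection is \(\{\beta_0\}\) and take \(b\neq\beta_0\). Then \(b\notin\bigcap_T\ThetaI^{(T)}\), so some finite \(T(b)\) has \(b\notin\ThetaI^{(T(b))}\). The sufficiency half of Theorem~\ref{thm:observabledual} at that horizon yields a nonnegative continuous \(\lambda_b\in C_+(\mathcal X_{T(b)}\times K_{T(b),b})^{2T(b)}=\Lambda_{T(b),b}\) with \(\Psi_{T(b),b}(\lambda_b)<0\).

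I do not need Proposition~\ref{prop:Tnesting} for either direction, though it shows that once \(b\) is excluded it stays excluded at all later horizons. The only step requiring care, and the one I expect to be the main obstacle, is bookkeeping rather than mathematics. The support \(\mathcal X_T\), the envelope \(K_{T,b}\), and hence the multiplier cone \(\Lambda_{T,b}\) all depend on \(T\). I must verify that the compact-support hypothesis is assumed at each horizon so that Theorem~\ref{thm:observabledual} is available there, and that the definition of \(\Lambda_{T,b}\) coincides with the cone used in that theorem. Given the phrasing "for each \(T\), suppose the compact-support conditions hold", both facts are immediate.
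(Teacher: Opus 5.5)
Your proposal is correct and is essentially the paper's proof. For the ``only if'' direction you find a finite horizon at which \(b\) is excluded and invoke the strict certificate \eqref{eq:strictdualcertificate} of Theorem~\ref{thm:observabledual} there; for the ``if'' direction you use the necessity half \eqref{eq:dualnecessity} at horizon \(T(b)\) together with \(\beta_0\in\ThetaI^{(T)}\) for every \(T\). Your bookkeeping — identifying \(\cC_{T,b}(\Pobs_T)\) with the \(T\)-period coupling set and \(\Lambda_{T,b}\) with that theorem's continuous cone under compact support at each horizon — is left implicit in the paper but is what justifies applying the theorem.
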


\begin{proof}
Suppose \eqref{eq:Tsingleton} holds, and fix \(b\neq\beta_0\).  Then
\(b\notin\ThetaI^{(T(b))}\) for some finite \(T(b)\).  The strict-separation
statement in Theorem \ref{thm:observabledual} supplies a multiplier satisfying
\eqref{eq:Tdualnegative}.

Conversely, suppose every false \(b\) satisfies \eqref{eq:Tdualnegative}.
Theorem \ref{thm:observabledual} implies
\(b\notin\ThetaI^{(T(b))}\), so \(b\) cannot belong to the intersection.  Since
\(\beta_0\) belongs to every finite-horizon set, the intersection is exactly
\(\{\beta_0\}\).
\end{proof}

Proposition \ref{prop:Tdualseparation} is exact but not primitive: it says that
every false slope must eventually be rejected by an observable finite-horizon
dual inequality.  The next result gives a primitive sufficient condition in
terms of composite-residual support and directional regressor spread.

\subsection{Necessity, sufficiency, and the relevant notion of rank}
\label{subsec:Tlogic}

Three logically different requirements must be distinguished.
\begin{enumerate}[label=(\roman*)]
\item The condition \(\mathcal D_T=\{0\}\) is necessary to eliminate exact
      absorption by the unrestricted fixed effect.  A time-invariant regressor,
      including an intercept, violates it.  Additional periods do not identify
      its coefficient without a normalization or restriction on \(A\).
\item Finite-horizon dual rejection of every false slope, as in Proposition
      \ref{prop:Tdualseparation}, is necessary and sufficient for
      \(\bigcap_T\Theta_I^{(T)}=\{\beta_0\}\), under the compactness conditions
      supporting sharp observable duality.
\item The support-envelope and directional-spread conditions below are primitive
      sufficient conditions.  They deliver explicit bounds and rates but are not
      claimed to be necessary.
\end{enumerate}

There are also two different quantitative notions of directional rank.  A
\emph{pathwise} minimum asks whether every realized regressor path spans all
coefficient directions.  A \emph{cross-path} minimum instead fixes a coefficient
direction first and then asks how widely its index varies over the population of
regressor paths.  The latter is the relevant and less restrictive notion for the
support argument below.

\subsection{A cross-path directional-spread bound}
\label{subsec:Tsupportrate}

Define the true composite residual
\begin{equation}
  V_t
  =
  Y_t-X_t'\beta_0
  =
  A_0+U_t.
  \label{eq:Tcompositeresidual}
\end{equation}
Assume that there are deterministic constants \(C_T<\infty\) and measurable
functions \(L_T(X^{(T)})\), \(U_T(X^{(T)})\) such that, for every \(t\leq T\),
\begin{align}
  P\!\left\{
    L_T(X^{(T)})\leq V_t\leq U_T(X^{(T)})
    \,\middle|\,X^{(T)}
  \right\}&=1,
  \label{eq:Tenvelope}\\
  U_T(X^{(T)})-L_T(X^{(T)})&\leq C_T
  \quad\text{almost surely}.
  \label{eq:Tenvelopewidth}
\end{align}
The conditional location of the common envelope may depend arbitrarily on the
regressor path.  Its width must be uniformly bounded across paths and across the
periods observed at horizon \(T\).

For \(d\in\R^k\) and a regressor path
\(x^{(T)}=(x_1,\ldots,x_T)\), define the directional range
\begin{equation}
  \omega_T(d;x^{(T)})
  =
  \max_{1\leq t\leq T}x_t'd
  -
  \min_{1\leq t\leq T}x_t'd.
  \label{eq:directionalrange}
\end{equation}
This function is nonnegative and positively homogeneous:
\(\omega_T(cd;x)=|c|\omega_T(d;x)\).

For comparison with the earlier pathwise formulation, define
\begin{align}
  \kappa_T^{\mathrm{pw}}(x^{(T)})
  &=
  \inf_{\|d\|=1}\omega_T(d;x^{(T)}),
  \label{eq:kappaTpath}\\
  \underline\kappa_T^{\mathrm{pw}}
  &=
  \essinf
  \left\{
    \kappa_T^{\mathrm{pw}}(X^{(T)})
  \right\}.
  \label{eq:kappaT}
\end{align}
Positivity of \(\underline\kappa_T^{\mathrm{pw}}\) requires almost every
realized path to span \(\R^k\), with a uniform quantitative margin.  This is a
strong condition.  For example, it is zero whenever \(T=2\) and \(k>1\).

The cross-path modulus instead reverses the order of the direction and
path operations:
\begin{equation}
  \overline\kappa_T
  =
  \inf_{\|d\|=1}
  \esssup
  \left\{
    \omega_T(d;X^{(T)})
  \right\}
  \in[0,\infty].
  \label{eq:kappaTcross}
\end{equation}
Here the direction \(d\) is fixed before variation over regressor paths is
measured.  Always
\begin{equation}
  \underline\kappa_T^{\mathrm{pw}}
  \leq
  \overline\kappa_T.
  \label{eq:kappacomparison}
\end{equation}
Indeed, for every unit \(d\),
\[
  \essinf\inf_{\|v\|=1}\omega_T(v;X^{(T)})
  \leq
  \esssup\omega_T(d;X^{(T)}),
\]
and taking the infimum over \(d\) gives
\eqref{eq:kappacomparison}.  The inequality can be strict, and
\(\overline\kappa_T\) can be infinite even when the pathwise modulus is zero.

\begin{theorem}[Sharp-set contraction under cross-path directional spread]
\label{thm:Tcontraction}
Suppose \(\beta_0\in\ThetaI^{(T)}\) and
\eqref{eq:Tenvelope}--\eqref{eq:Tenvelopewidth} hold.  Then every
\(b\in\ThetaI^{(T)}\), with \(d=b-\beta_0\), satisfies the pathwise restriction
\begin{equation}
  \omega_T(d;X^{(T)})\leq C_T
  \qquad\text{almost surely}.
  \label{eq:directionalnecessary}
\end{equation}
Equivalently,
\begin{equation}
  \ThetaI^{(T)}
  \subseteq
  \left\{
    \beta_0+d:
    \esssup\omega_T(d;X^{(T)})\leq C_T
  \right\}.
  \label{eq:directionalouterset}
\end{equation}
If \(0<\overline\kappa_T<\infty\), then
\begin{equation}
  \ThetaI^{(T)}
  \subseteq
  \left\{
    b\in\cB:
    \|b-\beta_0\|
    \leq
    \frac{C_T}{\overline\kappa_T}
  \right\}.
  \label{eq:Tcontractionbound}
\end{equation}
If \(\overline\kappa_T=\infty\), then
\begin{equation}
  \ThetaI^{(T)}=\{\beta_0\}.
  \label{eq:finiteTpointid}
\end{equation}
Finally, if \(\beta_0\in\ThetaI^{(T)}\) for every \(T\) and
\begin{equation}
  \frac{C_T}{\overline\kappa_T}\longrightarrow0,
  \qquad
  \text{with }C_T/\infty:=0,
  \label{eq:Tcontractioncondition}
\end{equation}
then
\begin{equation}
  \bigcap_{T\geq1}\ThetaI^{(T)}
  =
  \{\beta_0\},
  \label{eq:Tcontractionintersection}
\end{equation}
and
\begin{equation}
  d_H\!\left(\ThetaI^{(T)},\{\beta_0\}\right)
  =
  \sup_{b\in\ThetaI^{(T)}}\|b-\beta_0\|
  \leq
  \frac{C_T}{\overline\kappa_T}
  \longrightarrow0.
  \label{eq:Tradialrate}
\end{equation}
\end{theorem}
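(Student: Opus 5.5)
The plan is to reduce everything to the support-overlap necessary condition of Lemma~\ref{lem:overlap}, applied at horizon \(T\), and then convert the pathwise inequality into norm bounds using the definition of \(\overline\kappa_T\).

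\textbf{Step 1: the pathwise inequality.} Fix \(b\in\ThetaI^{(T)}\) and set \(d=b-\beta_0\). The candidate residual is
\[
  R_t(b)=Y_t-X_t'b=V_t-X_t'd .
\]
Conditional on \(X^{(T)}=x\), the envelope condition \eqref{eq:Tenvelope} places the essential support of \(R_t(b)\) inside
\[
  [\,L_T(x)-x_t'd,\;U_T(x)-x_t'd\,].
\]
Hence
\[
  \ell_t(x,b)\geq L_T(x)-x_t'd,
  \qquad
  u_t(x,b)\leq U_T(x)-x_t'd .
\]
Lemma~\ref{lem:overlap} (valid at horizon \(T\) by Theorem~\ref{thm:sharp} applied to \(\Pobs_T\)) gives \(\max_t\ell_t\leq\min_s u_s\) for almost every \(x\). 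Taking \(t\) to minimize \(x_t'd\) and \(s\) to maximize \(x_s'd\) yields
\[
  L_T(x)-\min_t x_t'd\;\leq\;U_T(x)-\max_s x_s'd .
\]
Therefore \(\omega_T(d;x)\leq U_T(x)-L_T(x)\leq C_T\) almost surely, which is \eqref{eq:directionalnecessary}. Taking the essential supremum gives \eqref{eq:directionalouterset}. The only care needed is that the null sets, finitely many in \(t\), are unioned, and that the envelope bounds transfer to essential infima and suprema of the conditional law.

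\textbf{Step 2: the norm bound and point identification.} For \(d\neq0\), positive homogeneity gives
\[
  \esssup\omega_T(d;X^{(T)})=\|d\|\,\esssup\omega_T(d/\|d\|;X^{(T)})\geq\|d\|\,\overline\kappa_T .
\]
Combining with Step 1, \(\|d\|\,\overline\kappa_T\leq C_T\), which is \eqref{eq:Tcontractionbound} when \(0<\overline\kappa_T<\infty\). If \(\overline\kappa_T=\infty\), every unit direction has infinite essential directional range. That contradicts \(\esssup\omega_T\leq C_T<\infty\) unless \(d=0\), so \(\ThetaI^{(T)}\subseteq\{\beta_0\}\). Since \(\beta_0\in\ThetaI^{(T)}\) by hypothesis, \eqref{eq:finiteTpointid} follows.

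\textbf{Step 3: the long-panel limit.} With the convention \(C_T/\infty=0\), Step 2 gives
\[
  \sup_{b\in\ThetaI^{(T)}}\|b-\beta_0\|\leq C_T/\overline\kappa_T
\]
whenever \(\overline\kappa_T>0\). The hypothesis \(C_T/\overline\kappa_T\to0\) implicitly requires \(\overline\kappa_T>0\) for large \(T\), and I would note that explicitly. Because \(\beta_0\) lies in every set, the Hausdorff distance to \(\{\beta_0\}\) equals this supremum, which gives \eqref{eq:Tradialrate}.

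Any \(b\) in the intersection \(\bigcap_T\ThetaI^{(T)}\) satisfies \(\|b-\beta_0\|\leq C_T/\overline\kappa_T\) for all large \(T\), so \(b=\beta_0\). This gives \eqref{eq:Tcontractionintersection}. Proposition~\ref{prop:Tnesting} is not strictly needed here, but it is consistent with the conclusion.

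\textbf{Main obstacle.} The only delicate step is Step 1's measure-theoretic bookkeeping. The envelope functions are defined conditionally on \(X^{(T)}\), and Lemma~\ref{lem:overlap} is stated through essential bounds of conditional laws. I must check that the inequalities hold for a common \(P_X\)-full set of paths, so that the pathwise statement is genuinely almost sure, and that the case \(\overline\kappa_T=\infty\) is handled through the essential supremum rather than a pointwise supremum.
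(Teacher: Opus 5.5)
Your proposal is correct and follows essentially the same route as the paper. Lemma~\ref{lem:overlap} at horizon \(T\), applied to the envelope-shifted intervals \([L_T(x)-x_t'd,\,U_T(x)-x_t'd]\), gives \(\omega_T(d;x)\leq C_T\); positive homogeneity and the definition of \(\overline\kappa_T\) then give the norm bound, point identification, and the long-panel conclusions. The only cosmetic difference is that you compare \(\max_t\ell_t\) with \(\min_s u_s\) through the extremal periods, whereas the paper picks a common point \(a(x)\) and reads off \(L_T(x)-a\leq x_t'd\leq U_T(x)-a\).
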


\begin{proof}
Fix \(T\), take \(b\in\ThetaI^{(T)}\), and write \(d=b-\beta_0\).
The candidate residual satisfies
\begin{equation}
  R_t(b)
  =
  Y_t-X_t'b
  =
  V_t-X_t'd.
  \label{eq:Tcandidatefromtrue}
\end{equation}
Conditional on \(X^{(T)}=x^{(T)}\), conditions
\eqref{eq:Tenvelope}--\eqref{eq:Tenvelopewidth} imply that the conditional
support of \(R_t(b)\) is contained in
\begin{equation}
  [L_T(x^{(T)})-x_t'd,\,
   U_T(x^{(T)})-x_t'd].
  \label{eq:Tcandidateenvelope}
\end{equation}

Because \(b\) is feasible, Lemma \ref{lem:overlap}, applied to the first \(T\)
periods, implies that the actual conditional residual support intervals have a
common point for almost every \(x^{(T)}\).  Since those intervals are contained
in \eqref{eq:Tcandidateenvelope}, the larger envelope intervals must also have a
common point.  Hence, for almost every \(x^{(T)}\), there is an
\(a=a(x^{(T)})\) such that
\[
  L_T(x^{(T)})-x_t'd
  \leq a
  \leq
  U_T(x^{(T)})-x_t'd
  \qquad\text{for every }t\leq T.
\]
Equivalently,
\[
  L_T(x^{(T)})-a
  \leq x_t'd
  \leq
  U_T(x^{(T)})-a
  \qquad\text{for every }t\leq T.
\]
All \(T\) values \(x_t'd\) therefore lie in an interval of length at most
\(C_T\), which proves \eqref{eq:directionalnecessary}.

If \(d\neq0\), let \(v=d/\|d\|\).  Positive homogeneity and
\eqref{eq:directionalnecessary} give
\[
  \|d\|\omega_T(v;X^{(T)})\leq C_T
  \qquad\text{almost surely}.
\]
Taking the essential supremum over regressor paths yields
\[
  \|d\|
  \esssup\omega_T(v;X^{(T)})
  \leq C_T.
\]
Since
\[
  \esssup\omega_T(v;X^{(T)})
  \geq\overline\kappa_T,
\]
we obtain
\(\|b-\beta_0\|\overline\kappa_T\leq C_T\).  This proves
\eqref{eq:Tcontractionbound} when the modulus is finite and positive.  If
\(\overline\kappa_T=\infty\), the same inequality is impossible for any
\(d\neq0\); feasibility of \(\beta_0\) then gives
\eqref{eq:finiteTpointid}.  The intersection and Hausdorff conclusions follow
from \eqref{eq:Tcontractionbound}, the convention \(C_T/\infty=0\), and
\(\beta_0\in\ThetaI^{(T)}\).
\end{proof}

The theorem formalizes the reading of Example~\ref{ex:running}.  There the
composite residual \(V_t=Y_t\) takes values in \(\{0,1\}\), so \(C_2=1\); the
single path \((0,1)\) gives \(\omega_2(d;x)=|d|\); and
\eqref{eq:directionalnecessary} yields \(|b|\leq1\), which is exactly the
sharp set.  In general the bound is only an outer set, because it uses only
the support of the composite residual and not the shape of its distribution;
the examples in Section~\ref{subsec:Tgeometric} show that it is nevertheless
exact in several natural designs.

\subsection{Geometric and high-quantile refinements}
\label{subsec:Tgeometric}

The Euclidean bound in Theorem \ref{thm:Tcontraction} summarizes the
directional restriction \eqref{eq:directionalnecessary} by its least favorable
direction.  The underlying restriction is generally anisotropic and can be
represented more sharply by a convex set generated by the within-panel
regressor differences.

For each \(T\), define the closed symmetric convex difference set
\begin{equation}
  \mathcal K_T
  =
  \overline{\operatorname{co}}
  \left(
    \bigcup_{1\leq s,t\leq T}
    \operatorname*{ess\,ran}(X_t-X_s)
  \right)
  \subseteq\mathbb R^k,
  \label{eq:KTdifferencebody}
\end{equation}
where the essential ranges are taken under the observed distribution of
\(X^{(T)}\).  Because the union includes both \(X_t-X_s\) and \(X_s-X_t\),
the set \(\mathcal K_T\) is symmetric about zero.  It need not be bounded.

Let
\begin{equation}
  h_{\mathcal K_T}(d)
  =
  \sup_{g\in\mathcal K_T}g'd
  \in[0,\infty]
  \label{eq:KTsupportfunction}
\end{equation}
denote its extended support function.  Define also its polar set
\begin{equation}
  \mathcal K_T^\circ
  =
  \left\{
    d\in\mathbb R^k:
    h_{\mathcal K_T}(d)\leq1
  \right\}.
  \label{eq:KTpolar}
\end{equation}

\begin{proposition}[Geometric directional outer bound]
\label{prop:Tgeometricbound}
For every \(d\in\mathbb R^k\),
\begin{equation}
  h_{\mathcal K_T}(d)
  =
  \operatorname*{ess\,sup}
  \omega_T(d;X^{(T)}).
  \label{eq:supportequalsrange}
\end{equation}
Consequently, under the assumptions of Theorem \ref{thm:Tcontraction},
\begin{equation}
  \Theta_I^{(T)}
  \subseteq
  \beta_0+\mathcal E_T(C_T),
  \qquad
  \mathcal E_T(C_T)
  =
  \left\{
    d\in\mathbb R^k:
    h_{\mathcal K_T}(d)\leq C_T
  \right\}.
  \label{eq:anisotropicouterset}
\end{equation}
If \(C_T>0\), then
\begin{equation}
  \mathcal E_T(C_T)=C_T\mathcal K_T^\circ.
  \label{eq:scaledpolar}
\end{equation}
Moreover,
\begin{equation}
  \left\{
    d:h_{\mathcal K_T}(d)=0
  \right\}
  =
  \mathcal D_T.
  \label{eq:KTkernel}
\end{equation}
Thus the kernel of the geometric bound is exactly the absorption space of
time-invariant coefficient directions.
\end{proposition}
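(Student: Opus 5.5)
The argument has four steps: (a) identify the support function with the essential directional range, (b) read off the outer set from Theorem~\ref{thm:Tcontraction}, (c) rewrite the outer set as a scaled polar, and (d) characterize the kernel.

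\emph{Step (a): the identity \eqref{eq:supportequalsrange}.} The support function of a closed convex hull equals the supremum of the linear functional over the generating set. Hence
\[
h_{\mathcal K_T}(d)=\sup_{s,t\leq T}\ \sup\{g'd:g\in\operatorname*{ess\,ran}(X_t-X_s)\}.
\]
For a continuous function \(g\mapsto g'd\), the supremum over the essential range of a random vector \(G\) equals \(\esssup G'd\). To see this, let \(c\) be the essential supremum. Points of the essential range have every neighborhood charged, so by continuity \(g'd\leq c\) on the range. Conversely, \(G\) lies in its essential range almost surely, which gives the reverse inequality. Therefore
\[
h_{\mathcal K_T}(d)=\max_{s,t}\esssup (X_t-X_s)'d.
\]
There are finitely many pairs, so this maximum equals
\[
\esssup\max_{s,t}(X_t-X_s)'d=\esssup\omega_T(d;X^{(T)}).
\]
The only delicate point is interchanging esssup with a finite max. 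This holds because a finite union of null sets is null.

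\emph{Step (b): the outer set \eqref{eq:anisotropicouterset}.} This follows immediately from \eqref{eq:directionalouterset} in Theorem~\ref{thm:Tcontraction} combined with step (a).

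\emph{Step (c): the scaled polar \eqref{eq:scaledpolar}.} Step (a) shows that \(h_{\mathcal K_T}\) is positively homogeneous of degree one. This is also immediate from the definition of a support function, including when its value is \(+\infty\). For \(C_T>0\),
\[
h(d)\leq C_T\iff h(d/C_T)\leq1\iff d\in C_T\mathcal K_T^\circ .
\]

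\emph{Step (d): the kernel \eqref{eq:KTkernel}.} Suppose \(h(d)=0\). By step (a), \(\esssup (X_t-X_s)'d\leq0\) for every ordered pair. Applying this to both \((s,t)\) and \((t,s)\) yields \((X_t-X_s)'d=0\) almost surely, so \(d\in\mathcal D_T\). Conversely, if \(d\in\mathcal D_T\), then \(\omega_T(d;X^{(T)})=0\) almost surely, and \(h(d)=0\) by step (a). Symmetry of \(\mathcal K_T\) also gives \(h\geq0\), which is consistent with this.

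I expect the main obstacle to be the measure-theoretic identity in step (a), namely that the supremum over the essential range equals the essential supremum. I would prove it using separability of \(\R^k\), which guarantees that the complement of the essential range is a countable union of null open balls and is therefore null. The remaining steps are formal consequences.
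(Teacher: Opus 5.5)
Your proof is correct and matches the paper's argument essentially step for step. The paper also combines the finite-max/esssup interchange with invariance of the support function under closed convex hulls, uses positive homogeneity for the scaled polar, and uses both orientations $(s,t)$ and $(t,s)$, i.e.\ symmetry of $\mathcal K_T$, for the kernel. Your explicit argument that the supremum of $g'd$ over the essential range of $G$ equals the essential supremum of $G'd$, via separability of $\mathbb R^k$, supplies a detail the paper leaves implicit.
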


\begin{proof}
For any regressor path \(x^{(T)}\),
\[
  \omega_T(d;x^{(T)})
  =
  \max_{1\leq s,t\leq T}(x_t-x_s)'d.
\]
Because the maximum is over finitely many pairs,
\[
  \operatorname*{ess\,sup}
  \omega_T(d;X^{(T)})
  =
  \max_{1\leq s,t\leq T}
  \operatorname*{ess\,sup}(X_t-X_s)'d.
\]
Taking the closed convex hull does not change the supremum of a linear
functional.  This proves \eqref{eq:supportequalsrange}.  The inclusion
\eqref{eq:anisotropicouterset} then follows directly from
\eqref{eq:directionalnecessary}, and \eqref{eq:scaledpolar} follows from
positive homogeneity of the support function.

Finally, symmetry of \(\mathcal K_T\) implies
\(h_{\mathcal K_T}(d)=0\) if and only if
\[
  (X_t-X_s)'d=0
  \quad\text{almost surely for every }s,t\leq T.
\]
By definition, this is equivalent to \(d\in\mathcal D_T\).
\end{proof}

\begin{remark}[The geometric bound is sharper than its radial summary]
\label{rem:Tgeometricradial}
The set \(\mathcal E_T(C_T)\) retains the direction-specific variation in the
regressors.  By contrast, the Euclidean bound in
\eqref{eq:Tcontractionbound} replaces the entire support function by
\[
  \overline\kappa_T
  =
  \inf_{\|d\|=1}h_{\mathcal K_T}(d).
\]
Hence
\begin{equation}
  \mathcal E_T(C_T)
  \subseteq
  \left\{
    d:\|d\|\leq C_T/\overline\kappa_T
  \right\}
  \label{eq:geometricversusradial}
\end{equation}
whenever \(0<\overline\kappa_T<\infty\).  The inclusion can be strict when
regressor variation differs substantially across coefficient directions.
Accordingly, \eqref{eq:anisotropicouterset} is preferable for reporting
directional or componentwise bounds, while
\eqref{eq:Tcontractionbound} remains a convenient scalar summary.
\end{remark}

The essential supremum in \(\overline\kappa_T\) can be determined by extremely
rare regressor paths.  This is appropriate for population identification, but
it can be difficult to estimate and can obscure the amount of identifying
variation present over economically relevant parts of the regressor
distribution.  The following high-quantile modulus provides an intermediate
measure.

For a nonnegative random variable \(Z\), define its left-continuous
\(u\)-quantile by
\begin{equation}
  q_u(Z)
  =
  \inf\left\{
    z\in\mathbb R:
    P(Z\leq z)\geq u
  \right\}.
  \label{eq:quantiledefinitionT}
\end{equation}
For \(\eta\in(0,1)\), define
\begin{equation}
  \overline\kappa_{T,\eta}
  =
  \inf_{\|d\|=1}
  q_{1-\eta}
  \left(
    \omega_T(d;X^{(T)})
  \right).
  \label{eq:quantilespreadmodulus}
\end{equation}

\begin{proposition}[High-quantile directional-spread bound]
\label{prop:Tquantilebound}
Under the assumptions of Theorem \ref{thm:Tcontraction}, every
\(b\in\Theta_I^{(T)}\) satisfies, for every \(\eta\in(0,1)\),
\begin{equation}
  q_{1-\eta}
  \left(
    \omega_T(b-\beta_0;X^{(T)})
  \right)
  \leq C_T.
  \label{eq:quantilerangenecessary}
\end{equation}
Consequently, if \(\overline\kappa_{T,\eta}>0\), then
\begin{equation}
  \Theta_I^{(T)}
  \subseteq
  \left\{
    b\in\mathcal B:
    \|b-\beta_0\|
    \leq
    \frac{C_T}{\overline\kappa_{T,\eta}}
  \right\}.
  \label{eq:Tquantilecontraction}
\end{equation}

The exact geometric outer set also admits the representation
\begin{equation}
  \mathcal E_T(C_T)
  =
  \bigcap_{\eta\in(0,1)}
  \left\{
    d:
    q_{1-\eta}
    \left(
      \omega_T(d;X^{(T)})
    \right)
    \leq C_T
  \right\}.
  \label{eq:geometricquantileintersection}
\end{equation}
\end{proposition}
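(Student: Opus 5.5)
The plan is to derive everything from the almost-sure pathwise restriction \eqref{eq:directionalnecessary} of Theorem~\ref{thm:Tcontraction}, combined with the identity \eqref{eq:supportequalsrange} of Proposition~\ref{prop:Tgeometricbound}.

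\emph{Step 1: the quantile restriction.} Fix \(b\in\Theta_I^{(T)}\) and set \(d=b-\beta_0\). Theorem~\ref{thm:Tcontraction} gives \(\omega_T(d;X^{(T)})\leq C_T\) almost surely, so \(P\{\omega_T(d;X^{(T)})\leq C_T\}=1\geq 1-\eta\). By the definition \eqref{eq:quantiledefinitionT}, \(C_T\) is then one of the values in the set over which the infimum is taken. Hence \(q_{1-\eta}(\omega_T(d;X^{(T)}))\leq C_T\), which is \eqref{eq:quantilerangenecessary}.

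\emph{Step 2: the radial bound.} For \(d\neq0\), let \(v=d/\|d\|\). Positive homogeneity of \(\omega_T\) in \(d\) passes through the quantile map: for \(c>0\), \(q_u(cZ)=c\,q_u(Z)\), because \(\{z:P(cZ\leq z)\geq u\}=c\{z:P(Z\leq z)\geq u\}\). It follows that
\[
\|d\|\,q_{1-\eta}(\omega_T(v;X^{(T)}))\leq C_T .
\]
The quantile on the left is at least \(\overline\kappa_{T,\eta}\), so dividing by \(\overline\kappa_{T,\eta}>0\) gives \eqref{eq:Tquantilecontraction}.

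\emph{Step 3: the representation \eqref{eq:geometricquantileintersection}.} I will show that, for a fixed \(d\), the condition \(h_{\mathcal K_T}(d)\leq C_T\) holds if and only if \(q_{1-\eta}(\omega_T(d;X^{(T)}))\leq C_T\) for every \(\eta\in(0,1)\). By \eqref{eq:supportequalsrange}, it suffices to prove that \(\sup_{\eta\in(0,1)}q_{1-\eta}(Z)=\operatorname{ess\,sup}Z\) for a nonnegative random variable \(Z\), here \(Z=\omega_T(d;X^{(T)})\).

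The inequality "\(\leq\)" is Step 1's argument applied with \(C_T\) replaced by \(\operatorname{ess\,sup}Z\).

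For "\(\geq\)", suppose \(\operatorname{ess\,sup}Z>c\), including the case where it is infinite. Then \(P(Z>c)=\epsilon>0\). Choose \(\eta<\epsilon\). Every \(z\leq c\) satisfies \(P(Z\leq z)\leq 1-\epsilon<1-\eta\), so no such \(z\) lies in the defining set, and \(q_{1-\eta}(Z)\geq c\). Letting \(c\) increase to \(\operatorname{ess\,sup}Z\) gives the reverse inequality.

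Combining the two directions, \(d\in\mathcal E_T(C_T)\) exactly when every high quantile is at most \(C_T\), which is the stated intersection.

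\emph{Main obstacle.} The only delicate point is the "\(\geq\)" direction in Step 3. It must handle an infinite essential supremum and the convention for the left-continuous quantile. I expect the strict-inequality argument above to handle both cases. The rest follows directly from Theorem~\ref{thm:Tcontraction}.
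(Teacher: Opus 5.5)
Your proposal is correct and follows the paper's proof step for step: the almost-sure bound \eqref{eq:directionalnecessary} yields the quantile inequality, positive homogeneity of \(\omega_T\) and of nonnegative quantiles yields the radial bound, and the equivalence \(\esssup Z\leq C_T \iff q_{1-\eta}(Z)\leq C_T\) for every \(\eta\in(0,1)\), combined with \eqref{eq:supportequalsrange}, yields the representation. The only difference is that you prove this equivalence, including the case of an infinite essential supremum, whereas the paper simply asserts it.
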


\begin{proof}
Equation \eqref{eq:directionalnecessary} implies
\[
  \omega_T(b-\beta_0;X^{(T)})\leq C_T
  \quad\text{almost surely}.
\]
Every quantile of this random variable is therefore no larger than \(C_T\),
which proves \eqref{eq:quantilerangenecessary}.

If \(b\neq\beta_0\), write
\[
  b-\beta_0=\|b-\beta_0\|v,
  \qquad \|v\|=1.
\]
Positive homogeneity of \(\omega_T\) and of nonnegative quantiles gives
\[
  q_{1-\eta}
  \left(
    \omega_T(b-\beta_0;X^{(T)})
  \right)
  =
  \|b-\beta_0\|
  q_{1-\eta}
  \left(
    \omega_T(v;X^{(T)})
  \right).
\]
The final quantile is at least
\(\overline\kappa_{T,\eta}\), proving
\eqref{eq:Tquantilecontraction}.

Finally, for any nonnegative random variable \(Z\),
\[
  \operatorname*{ess\,sup}Z\leq C_T
  \quad\Longleftrightarrow\quad
  q_{1-\eta}(Z)\leq C_T
  \quad\text{for every }\eta\in(0,1).
\]
Combining this equivalence with
\eqref{eq:supportequalsrange} proves
\eqref{eq:geometricquantileintersection}.
\end{proof}

\begin{remark}[Population identification and statistical stability]
\label{rem:Tquantilestatistical}
The modulus \(\overline\kappa_T\) uses the extreme support of the regressor
process.  Consequently, point identification based on
\(\overline\kappa_T=\infty\) can rely on regressor paths having arbitrarily
small probability.  For a fixed \(\eta>0\),
\(\overline\kappa_{T,\eta}\) instead measures directional variation over all
but an \(\eta\)-fraction of the population and is therefore more stable under
tail truncation.

Given independent sampled regressor paths
\(X_1^{(T)},\ldots,X_n^{(T)}\), a natural sample analogue is
\begin{equation}
  \widehat{\overline\kappa}_{T,\eta}
  =
  \inf_{\|d\|=1}
  \widehat q_{1-\eta}
  \left(
    \left\{
      \omega_T(d;X_i^{(T)})
    \right\}_{i=1}^n
  \right).
  \label{eq:samplequantilemodulus}
\end{equation}
This quantity is useful as a design-strength diagnostic and can be reported
over several values of \(\eta\).  It is not, by itself, a confidence bound:
formal inference requires uniform control over directions \(d\), sampling
error in the empirical quantiles, and any estimation of \(C_T\).

A finite grid of directions generally gives an upper approximation to the
infimum in \eqref{eq:samplequantilemodulus} and can therefore overstate
identifying strength.  Certified lower bounds require either global
optimization or an explicit covering-error correction.
\end{remark}

\ \ 

\subsubsection{Four Examples of Identified Sets}
We provide 4 examples for designs where we analytically compute the identified set. These examples are interesting because they highlight the importance of the support of the regressors in shaping the volume of the identified set. 
\begin{example}[{\bf Set Identification with Binary $X$}]
\label{ex:Tbinarynonpoint}
Let \(T=2\) and \(k=2\). Write
\[
  X_t=(D_{1t},D_{2t})',
\]
where the four variables \(\{D_{jt}:j,t\in\{1,2\}\}\) are mutually
independent Bernoulli\((1/2)\). Let \(\varepsilon\) be independent of
\(X^{(2)}\) and satisfy, for some \(\sigma>0\),
\[
  P(\varepsilon=-\sigma)=\tau,
  \qquad
  P(\varepsilon=\sigma)=1-\tau.
\]
Suppose \(\beta_0\) is an interior point of \(\cB\), and generate the observed
outcomes by
\[
  Y_t=X_t'\beta_0+\varepsilon,
  \qquad t=1,2.
\]
This is a special case of model \textup{(A)} with \(A_0=0\) and
\(U_1=U_2=\varepsilon\): zero is a conditional \(\tau\)-quantile of each
\(U_t\), including under the paper's weak/strict convention. The sharp set is
\begin{equation}
  \ThetaI^{(2)}
  =
  \cB\cap
  \left\{
    \beta_0+d:
    |d_1|+|d_2|\leq 2\sigma
  \right\}.
  \label{eq:Tbinarysharpdiamond}
\end{equation}
Because \(\beta_0\) is an interior point of \(\cB\) and \(\sigma>0\), this set
contains slopes other than \(\beta_0\).

To see necessity, fix \(b=\beta_0+d\). Conditional on a regressor path
\((X_1,X_2)=(x_1,x_2)\), the support interval of the candidate residual
\(R_t(b)=Y_t-X_t'b\) is
\[
  [-\sigma-x_t'd,\,\sigma-x_t'd].
\]
Lemma \ref{lem:overlap} therefore requires
\[
  |(x_2-x_1)'d|\leq 2\sigma.
\]
Every pair \(x_1,x_2\in\{0,1\}^2\) has positive probability, and hence
\[
  \sup_{x_1,x_2\in\{0,1\}^2}|(x_2-x_1)'d|
  =|d_1|+|d_2|.
\]
Thus feasibility implies the restriction in
\eqref{eq:Tbinarysharpdiamond}.

For sufficiency, suppose \(|d_1|+|d_2|\leq2\sigma\), let
\(\iota=(1,1)'\), and define the candidate fixed effect and disturbances by
\[
  A^{(b)}=-\frac{1}{2}\iota'd,
  \qquad
  U_t^{(b)}
  =Y_t-X_t'b-A^{(b)}
  =\varepsilon+\left(\frac{1}{2}\iota-X_t\right)'d.
\]
For every \(x_t\in\{0,1\}^2\),
\[
  \left|
    \left(\frac{1}{2}\iota-x_t\right)'d
  \right|
  \leq \frac{1}{2}(|d_1|+|d_2|)
  \leq\sigma.
\]
Consequently, conditional on \(X^{(2)}\), the two support points of
\(U_t^{(b)}\) straddle zero, with probability \(\tau\) on the lower point and
probability \(1-\tau\) on the upper point. Zero is therefore a conditional
\(\tau\)-quantile of \(U_t^{(b)}\) for both periods. This constructs an
admissible latent structure for every \(b\) in the right-hand side of
\eqref{eq:Tbinarysharpdiamond}, proving sharpness.

This failure of point identification is not caused by an absorbed
time-invariant direction. Indeed,
\[
  G_2
  =E[(X_2-X_1)(X_2-X_1)']
  =\frac{1}{2}I_2\succ0,
  \qquad
  \mathcal D_2=\{0\}.
\]
Instead, the binary design has bounded directional spread. The true composite
residual \(V_t=\varepsilon\) admits the envelope
\(L_2=-\sigma\), \(U_2=\sigma\), and hence
\[
  C_2=2\sigma,
  \qquad
  \esssup\omega_2(d;X^{(2)})=|d_1|+|d_2|,
  \qquad
  \overline\kappa_2
  =\inf_{\|d\|=1}(|d_1|+|d_2|)=1.
\]
Thus the pathwise restriction in Theorem \ref{thm:Tcontraction} is sharp in
this example but leaves the nondegenerate diamond
\eqref{eq:Tbinarysharpdiamond}.  \hfill $\square$
\end{example}
\ \ 

\begin{example}[{\bf Partial Identification with Uniform Regressors}]
\label{ex:Tuniformnonpoint}
Let \(T=2\), \(k=2\), and write
\[
  X_t=(Z_t,W_t)',
\]
where the four variables \(\{Z_t,W_t:t\in\{1,2\}\}\) are mutually independent
and
\[
  Z_t\stackrel{\mathrm{iid}}{\sim}\operatorname{Unif}[0,1],
  \qquad
  W_t\stackrel{\mathrm{iid}}{\sim}\operatorname{Unif}[0,1].
\]
Let \(\varepsilon\) be independent of \(X^{(2)}\) and satisfy, for some
\(\sigma>0\),
\[
  P(\varepsilon=-\sigma)=\tau,
  \qquad
  P(\varepsilon=\sigma)=1-\tau.
\]
Suppose \(\beta_0\) is an interior point of \(\cB\), and generate
\[
  Y_t=X_t'\beta_0+\varepsilon,
  \qquad t=1,2.
\]
The true structure takes \(A_0=0\) and \(U_1=U_2=\varepsilon\), so zero is a
conditional \(\tau\)-quantile of each disturbance under model \textup{(A)}.

Fix a candidate \(b=\beta_0+d\), where \(d=(d_Z,d_W)'\). Conditional on a
regressor path \(X^{(2)}=(x_1,x_2)\), the support interval of the candidate
residual \(R_t(b)=Y_t-X_t'b\) is
\[
  [-\sigma-x_t'd,\,\sigma-x_t'd].
\]
Lemma \ref{lem:overlap} therefore requires
\begin{equation}
  \left|
    (Z_2-Z_1)d_Z+(W_2-W_1)d_W
  \right|
  \leq 2\sigma
  \qquad\text{almost surely}.
  \label{eq:Tuniformoverlap}
\end{equation}
The vector \((Z_2-Z_1,W_2-W_1)'\) has essential support \([-1,1]^2\).
Consequently,
\[
  \esssup
  \left|
    (Z_2-Z_1)d_Z+(W_2-W_1)d_W
  \right|
  =|d_Z|+|d_W|,
\]
and \eqref{eq:Tuniformoverlap} implies
\[
  |d_Z|+|d_W|\leq2\sigma.
\]

This condition is also sufficient. Let \(\iota=(1,1)'\), and, for any
\(d\) satisfying \(|d_Z|+|d_W|\leq2\sigma\), define
\[
  A^{(b)}=-\frac{1}{2}\iota'd,
  \qquad
  U_t^{(b)}
  =Y_t-X_t'b-A^{(b)}
  =\varepsilon+\left(\frac{1}{2}\iota-X_t\right)'d.
\]
For every \(x_t\in[0,1]^2\),
\[
  \left|
    \left(\frac{1}{2}\iota-x_t\right)'d
  \right|
  \leq\frac{1}{2}(|d_Z|+|d_W|)
  \leq\sigma.
\]
Thus the two support points of \(U_t^{(b)}\) lie weakly on opposite sides of
zero, with probability \(\tau\) on the lower point and probability
\(1-\tau\) on the upper point. Zero is therefore a conditional
\(\tau\)-quantile of \(U_t^{(b)}\) in both periods, including when one support
point equals zero under the paper's weak/strict quantile convention. This
constructs an admissible latent structure for every such candidate and proves
that
\begin{equation}
  \ThetaI^{(2)}
  =
  \cB\cap
  \left\{
    \beta_0+d:|d_Z|+|d_W|\leq2\sigma
  \right\}.
  \label{eq:Tuniformsharpdiamond}
\end{equation}
Because \(\beta_0\) is an interior point of \(\cB\) and \(\sigma>0\), this is
a nondegenerate two-dimensional set: both slope coefficients can vary locally
while remaining observationally equivalent to \(\beta_0\).

Moreover,
\[
  G_2
  =E[(X_2-X_1)(X_2-X_1)']
  =\frac{1}{6}I_2
  \succ0.
\]
The failure of point identification is therefore not caused by a
time-invariant direction or a deficient aggregate rank condition. It arises
because the regressor path has bounded directional spread. In particular,
\[
  \esssup\omega_2(d;X^{(2)})=|d_Z|+|d_W|<\infty
\]
for every fixed \(d\). \hfill $\square$
\end{example}
\ \ 

\begin{remark}[Identical sharp sets]
\label{rem:Tbinaryuniform}
With the same \(\sigma\) and \(\cB\), the preceding examples produce exactly
the same sharp set. To see why, let
\(d=(d_1,d_2)'\). In the binary design, the support of each \(X_t\) is
\(\{0,1\}^2\); in the uniform design, its support is \([0,1]^2\). These two
supports have the same convex hull and the same directional width. Consequently,
in both designs,
\[
  \esssup\left|(X_2-X_1)'d\right|
  =
  \sup_{v\in[-1,1]^2}|v'd|
  =|d_1|+|d_2|.
\]
For the uniform design, the extreme differences are approached rather than
attained with positive probability, which is why the relevant object is the
essential supremum. Because the common-midpoint construction in each example
also establishes sufficiency, both designs have the sharp set
\[
  \ThetaI^{(2)}
  =
  \cB\cap
  \left\{
    \beta_0+d:|d_1|+|d_2|\leq2\sigma
  \right\}.
\]
If \(\cB\) contains the entire diamond, its area is \(8\sigma^2\); otherwise,
the same diamond is truncated by the same parameter space in both examples.

The equality of the sharp sets is informative. The two designs have different
second moments,
\[
  G_2=\frac{1}{2}I_2
  \quad\text{in the binary design},
  \qquad
  G_2=\frac{1}{6}I_2
  \quad\text{in the uniform design},
\]
but identical directional support widths. {\it Thus neither continuity of the
regressors nor the magnitude of their within-path variance determines the size
of the sharp set in these examples. What matters is the directional extent of
the support.} The uniform example therefore shows that filling in all the
interior points of a bounded support need not sharpen identification. This is
precisely what distinguishes both examples from the Gaussian design, whose
directional support is unbounded in every nonzero direction and hence yields
point identification.
\end{remark}

\ \ 

\begin{example}[{\bf Mixed Point and Set Identified Case}]
\label{ex:Tcontinuousnonpoint}
Non-point identification can arise even when both regressors have continuous
support. Let \(T=2\), \(k=2\), and write
\[
  X_t=(Z_t,W_t)',
\]
where
\[
  Z_t\stackrel{\mathrm{iid}}{\sim}N(0,1),
  \qquad
  W_t\stackrel{\mathrm{iid}}{\sim}\operatorname{Unif}[0,1],
\]
with all variables mutually independent. Let \(\varepsilon\) be independent
of \(X^{(2)}\) and satisfy, for some \(\sigma>0\),
\[
  P(\varepsilon=-\sigma)=\tau,
  \qquad
  P(\varepsilon=\sigma)=1-\tau.
\]
Suppose \(\beta_0\) is an interior point of \(\cB\), and generate
\[
  Y_t=X_t'\beta_0+\varepsilon,
  \qquad t=1,2.
\]
The true structure takes \(A_0=0\) and \(U_1=U_2=\varepsilon\), so zero is a
conditional \(\tau\)-quantile of each disturbance under model \textup{(A)}.

Fix a candidate \(b=\beta_0+d\), where \(d=(d_Z,d_W)'\). Conditional on a
regressor path \(X^{(2)}=(x_1,x_2)\), the support interval of the candidate
residual \(R_t(b)=Y_t-X_t'b\) is
\[
  [-\sigma-x_t'd,\,\sigma-x_t'd].
\]
Lemma \ref{lem:overlap} therefore requires
\begin{equation}
  \left|
    (Z_2-Z_1)d_Z+(W_2-W_1)d_W
  \right|
  \leq 2\sigma
  \qquad\text{almost surely}.
  \label{eq:Tcontinuousoverlap}
\end{equation}
Because \(Z_2-Z_1\sim N(0,2)\) has unbounded support and is independent of
\(W_2-W_1\), condition \eqref{eq:Tcontinuousoverlap} is impossible when
\(d_Z\neq0\). Hence feasibility requires \(d_Z=0\). Once \(d_Z=0\),
\[
  \esssup |(W_2-W_1)d_W|=|d_W|,
\]
so \eqref{eq:Tcontinuousoverlap} further requires
\[
  |d_W|\leq2\sigma.
\]

This condition is also sufficient. For \(b=\beta_0+(0,c)'\), with
\(|c|\leq2\sigma\), define
\[
  A^{(b)}=-\frac{c}{2},
  \qquad
  U_t^{(b)}
  =Y_t-X_t'b-A^{(b)}
  =\varepsilon+\left(\frac12-W_t\right)c.
\]
Because
\[
  \left|\left(\frac12-W_t\right)c\right|
  \leq\frac{|c|}{2}
  \leq\sigma,
\]
the two support points of \(U_t^{(b)}\) straddle zero, with probability
\(\tau\) on the lower point and probability \(1-\tau\) on the upper point.
Thus zero is a conditional \(\tau\)-quantile of \(U_t^{(b)}\) in both
periods. This constructs an admissible latent structure for every such
candidate and proves that
\begin{equation}
  \ThetaI^{(2)}
  =
  \cB\cap
  \left\{
    \beta_0+(0,c)':|c|\leq2\sigma
  \right\}.
  \label{eq:Tcontinuoussharpsegment}
\end{equation}
This means that the coefficient on $Z$ is point identified while the one on $W$ is not. 
Because \(\beta_0\) is an interior point of \(\cB\) and \(\sigma>0\), the
identified set contains slopes other than \(\beta_0\).

Moreover,
\[
  G_2
  =
  E[(X_2-X_1)(X_2-X_1)']
  =
  \begin{pmatrix}
    2 & 0\\
    0 & 1/6
  \end{pmatrix}
  \succ0.
\]
The failure of point identification is therefore not caused by a
time-invariant direction or a deficient aggregate rank condition. The
coefficient on the normally distributed regressor is point identified, while
the coefficient on the bounded continuously distributed regressor remains set
identified. Thus continuous regressors and positive-definite covariance are
not sufficient for point identification. \hfill $\square$
\end{example}

\begin{example}[{\bf Gaussian regressors and point identification}]
\label{ex:Tgaussianpoint}
Let \(T=2\), \(k=2\), and write \(X_t=(Z_t,W_t)'\), where
\[
  Z_1,Z_2,W_1,W_2\stackrel{\mathrm{iid}}{\sim}N(0,1).
\]
Let \(\varepsilon\) be independent of \(X^{(2)}\) and satisfy, for some
\(\sigma>0\),
\[
  P(\varepsilon=-\sigma)=\tau,
  \qquad
  P(\varepsilon=\sigma)=1-\tau.
\]
For \(\beta_0\in\cB\), generate
\[
  Y_t=X_t'\beta_0+\varepsilon,
  \qquad t=1,2.
\]
This satisfies model \textup{(A)} with \(A_0=0\) and
\(U_1=U_2=\varepsilon\).

Fix \(b=\beta_0+d\). Conditional on \(X^{(2)}\), the candidate residual
\(R_t(b)=Y_t-X_t'b\) has support interval
\[
  [-\sigma-X_t'd,\,\sigma-X_t'd].
\]
Consequently, Lemma \ref{lem:overlap} requires
\begin{equation}
  |(X_2-X_1)'d|\leq2\sigma
  \qquad\text{almost surely}.
  \label{eq:Tgaussianoverlap}
\end{equation}
If \(d\neq0\), however,
\[
  (X_2-X_1)'d\sim N(0,2\|d\|^2),
\]
which has unbounded support. Hence \eqref{eq:Tgaussianoverlap} cannot hold
for any \(d\neq0\). Since \(\beta_0\) is feasible under the displayed
data-generating process,
\[
  \ThetaI^{(2)}=\{\beta_0\}.
\]

Thus the contrast with Examples \ref{ex:Tbinarynonpoint} and
\ref{ex:Tcontinuousnonpoint} is not continuous versus discrete regressors.
In the preceding continuous example, the uniform regressor leaves one bounded
coefficient direction. Here every nonzero coefficient direction has
unbounded within-panel support, so no nonzero slope perturbation can be
accommodated within the finite residual width.

The two-point distribution for \(\varepsilon\) is used only to make the
example explicit. The same argument applies whenever the true composite
residuals admit a common conditional envelope of finite width \(C_2\), provided
\(X_2-X_1\) is Gaussian with positive-definite covariance. \hfill $\square$
\end{example}

\begin{remark}[The fixed-tail Gaussian bound]
\label{rem:Tgaussianfiniteeta}
For the design in Example \ref{ex:Tgaussianpoint},
\[
  q_{1-\eta}\!\left(|(X_2-X_1)'d|\right)
  =
  \sqrt{2}\,\|d\|\Phi^{-1}(1-\eta/2).
\]
Proposition \ref{prop:Tquantilebound} therefore implies, for every fixed
\(\eta\in(0,1)\),
\begin{equation}
  \ThetaI^{(2)}
  \subseteq
  \left\{
    b\in\cB:
    \|b-\beta_0\|
    \leq
    \frac{\sqrt{2}\sigma}{\Phi^{-1}(1-\eta/2)}
  \right\}.
  \label{eq:gaussianfinitequantilebound}
\end{equation}
This ball is not a second characterization of the identified set. It is a
weaker outer bound that summarizes the identifying variation below a fixed
tail probability. The inclusions hold simultaneously for every
\(\eta\in(0,1)\), and their radii converge to zero as \(\eta\downarrow0\).
Their intersection is therefore the singleton \(\{\beta_0\}\), in agreement
with the exact conclusion of Example \ref{ex:Tgaussianpoint}.
\end{remark}

\begin{remark}[What the Gaussian result does and does not use]
Gaussianity is not essential.  For scalar \(X_t\), any design for which
\(|X_2-X_1|\) is essentially unbounded gives
\(\overline\kappa_2=\infty\).  In \(k\) dimensions, it is enough that
\((X_2-X_1)'d\) be essentially unbounded for every nonzero \(d\).  Conversely,
unbounded regressor support alone is not sufficient: the finite-width composite
residual envelope is also essential to this argument.  If \(C_2=\infty\), the
corollary gives no conclusion.  If \(\Omega\) is singular, every direction in
its null space has zero time variation and is subject to the absorption
nonidentification described above.
\end{remark}

\begin{remark}[Why the old pathwise modulus misses the Gaussian result]
When \(k=1\) and \(T=2\),
\[
  \kappa_2^{\mathrm{pw}}(X^{(2)})
  =
  |X_2-X_1|.
\]
Under the jointly Gaussian design of the Example \ref{ex:Tgaussianpoint}
\[
  \underline\kappa_2^{\mathrm{pw}}
  =
  \essinf|X_2-X_1|
  =
  0,
\]
because the two regressors can be arbitrarily close with positive probability.
When \(k>1\) and \(T=2\), the pathwise modulus is identically zero because for
each realized difference \(X_2-X_1\) there is a unit direction orthogonal to it.
Neither fact prevents identification.  Candidate feasibility must hold for
almost every regressor path for one \emph{fixed} coefficient difference
\(b-\beta_0\).  That logical order is captured by
\(\overline\kappa_T\), which fixes the direction before taking the essential
supremum across paths.
\end{remark}

\begin{corollary}[Explicit cross-path contraction rates]
\label{cor:Trates}
Under the assumptions of Theorem \ref{thm:Tcontraction}, suppose that, for all
sufficiently large \(T\),
\[
  C_T\leq \overline C T^\zeta,
  \qquad
  \overline\kappa_T\geq \underline c T^\alpha
\]
for constants \(\overline C<\infty\), \(\underline c>0\), and
\(\alpha>\zeta\).  Then
\[
  d_H\!\left(\Theta_I^{(T)},\{\beta_0\}\right)
  \leq
  \frac{\overline C}{\underline c}
  T^{-(\alpha-\zeta)}
\]
and
\[
  \operatorname{diam}\!\left(\Theta_I^{(T)}\right)
  \leq
  2\frac{\overline C}{\underline c}
  T^{-(\alpha-\zeta)}.
\]
If \(C_T\leq\overline C\), any divergence
\(\overline\kappa_T\to\infty\) gives the rate
\(O(\overline\kappa_T^{-1})\).
\end{corollary}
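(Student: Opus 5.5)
The plan is to read Corollary~\ref{cor:Trates} straight off the radial bound \eqref{eq:Tcontractionbound} in Theorem~\ref{thm:Tcontraction}. No new identification argument is needed; the work is checking which case of that theorem applies and then doing elementary arithmetic with the growth conditions. I will fix \(T\) large enough that both \(C_T\leq\overline C T^\zeta\) and \(\overline\kappa_T\geq\underline c T^\alpha\) hold. Then \(\overline\kappa_T>0\), so one of two cases of Theorem~\ref{thm:Tcontraction} applies.

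\emph{Case \(\overline\kappa_T=\infty\).} By \eqref{eq:finiteTpointid}, \(\Theta_I^{(T)}=\{\beta_0\}\). The Hausdorff distance and the diameter are then zero, and both displayed bounds hold trivially.

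\emph{Case \(0<\overline\kappa_T<\infty\).} By \eqref{eq:Tcontractionbound}, every \(b\in\Theta_I^{(T)}\) satisfies
\[
  \|b-\beta_0\|\leq \frac{C_T}{\overline\kappa_T}\leq \frac{\overline C T^\zeta}{\underline c T^\alpha}=\frac{\overline C}{\underline c}T^{-(\alpha-\zeta)} .
\]
Since \(\beta_0\in\Theta_I^{(T)}\), the Hausdorff distance to \(\{\beta_0\}\) equals \(\sup_{b\in\Theta_I^{(T)}}\|b-\beta_0\|\), as recorded in \eqref{eq:Tradialrate}. This gives the first bound.

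The diameter bound follows from the triangle inequality through \(\beta_0\): for \(b,b'\in\Theta_I^{(T)}\), \(\|b-b'\|\leq\|b-\beta_0\|+\|\beta_0-b'\|\), which at most doubles the radius. The condition \(\alpha>\zeta\) is used only to make the bound tend to zero; it plays no role in the inequality itself.

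For the final claim, I take \(C_T\leq\overline C\) and \(\overline\kappa_T\to\infty\). Then \(\overline\kappa_T\) is eventually positive, and the same two cases give \(d_H(\Theta_I^{(T)},\{\beta_0\})\leq \overline C/\overline\kappa_T\) when \(\overline\kappa_T\) is finite and \(0\) when it is infinite. In both cases this is \(O(\overline\kappa_T^{-1})\).

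There is no real obstacle here. The one point needing care is the case split on whether \(\overline\kappa_T\) is finite, with the convention \(C_T/\infty=0\) from \eqref{eq:Tcontractioncondition}, together with the use of \(\beta_0\in\Theta_I^{(T)}\), which is assumed in Theorem~\ref{thm:Tcontraction}, to identify the Hausdorff distance with the radius.
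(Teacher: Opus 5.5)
Your proposal is correct and follows the paper's proof: substitute the growth bounds into \eqref{eq:Tcontractionbound}, then use the triangle inequality through \(\beta_0\) for the diameter. Your separate treatment of the case \(\overline\kappa_T=\infty\) via \eqref{eq:finiteTpointid} just makes explicit what the paper leaves to the convention \(C_T/\infty=0\).
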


\begin{proof}
Substitute the two displayed bounds into
\eqref{eq:Tcontractionbound}.  The diameter inequality follows from the
triangle inequality around \(\beta_0\).
\end{proof}

\begin{remark}[Why a substantive support condition is needed]
Model \textup{(A)} imposes no stationarity, independence, exchangeability, or
common distribution across \(U_t\).  Consequently, increasing \(T\), or giving
the regressors unbounded support, does not by itself create identification.  The
support-envelope condition prevents the composite residual distribution from
expanding enough to absorb arbitrarily large false index shifts.  More generally,
point identification may follow from the exact dual-separation condition in
Proposition \ref{prop:Tdualseparation} even when the bounded-envelope argument
does not apply.  Elimination of incidental-parameter bias for a particular
large-\(T\) estimator remains distinct from singleton identification under model
\textup{(A)} alone.
\end{remark}

% The quantile-varying factor-loading extension follows the complete
% identification analysis of the baseline model.
\section{A quantile-varying factor loading}
\label{sec:loadingduality}

This section changes the maintained cross-quantile structure rather than merely
applying the baseline analysis at two quantile indices.  The baseline permits a
different latent anchor and latent completion at every quantile.  Here, by
contrast, the median and target-quantile restrictions must hold simultaneously
under one probability law for \((Y,X,A)\), with the same scalar \(A\).
Quantile-specific heterogeneity is restricted to the separable form
\(A_q=\rho(q)A\).

Fix
\[
  q_0=\frac12
  \qquad\text{and}\qquad
  q_1=\tau\neq q_0.
\]
At these two quantile indices, the joint maintained model is
\begin{equation}
  Q_{q_\ell}(Y_t\mid X,A)
  =
  X_t'\beta(q_\ell)+\rho(q_\ell)A,
  \qquad
  \ell\in\{0,1\},\quad t=1,\ldots,T.
  \label{eq:hlmodel}
\end{equation}
As elsewhere in the paper, equality in \eqref{eq:hlmodel} is interpreted in the
possibly set-valued sense: a number \(c\) is an admissible conditional
\(q\)-quantile if
\[
  P(Y_t\leq c\mid X,A)\geq q,
  \qquad
  P(Y_t<c\mid X,A)\leq q.
\]
No continuity or uniqueness of the conditional quantile is imposed.

The word ``joint'' is important: one coupling must satisfy all \(2T\)
restrictions in \eqref{eq:hlmodel}.  At a single quantile \(q\), the product
\(\rho(q)A\) could simply be relabeled as an unrestricted anchor \(A_q\), so
\(\rho(q)\) would have no separate identifying content.  It is the common-
\(A\) restriction together with the normalization \(\rho(1/2)=1\) that makes
\(\rho(\tau)\) a meaningful relative loading.

The mechanism is the same sorting argument as before, applied to a longer
list of residuals.  Dividing the \(\tau\)-quantile equation by \(\rho(\tau)>0\)
shows that the same number \(A\) must be a conditional median of the \(T\)
median residuals \(Y_t-X_t'\beta(1/2)\) and a conditional \(\tau\)-quantile of
the \(T\) rescaled residuals \((Y_t-X_t'\beta(\tau))/\rho(\tau)\).  A
candidate value of the loading changes the scale of the second family relative
to the first, and a wrong scale misaligns the two families across individuals
in a way that no common sorting can repair.  This is why cross-quantile
comparisons carry information about \(\rho(\tau)\) that neither quantile
restriction carries on its own, and why every tool of the baseline analysis
applies after the \(T\) residual coordinates are replaced by \(2T\).

\subsection{Normalizations, relative loadings, and the common-anchor
reparameterization}
\label{subsec:hlnormalization}

The representation in \eqref{eq:hlmodel} has a scale indeterminacy.  For every
nonzero constant \(c\),
\[
  A^\ast=cA,
  \qquad
  \rho^\ast(q)=\frac{\rho(q)}{c}
\]
generates the same conditional quantile indices.  We therefore impose the reference
normalization
\begin{equation}
  \rho(q_0)=\rho(1/2)=1.
  \label{eq:hlscalenormalization}
\end{equation}
This fixes both the scale and the orientation of \(A\).  The remaining coefficient
\(\rho(\tau)\) is a relative loading and is economically meaningful; it should not
be normalized to one.

The relevant location issue is not whether each regressor visibly changes over
time, but whether some linear combination of the regressor vector is constant.
To remove that ambiguity, we impose the same no-absorption rank condition as in
the baseline model:
\begin{equation}
  \mathcal D_T
  :=
  \left\{
    c\in\mathbb R^k:
    X_t'c=X_s'c\ \text{almost surely for every }s,t
  \right\}
  =
  \{0\}.
  \label{eq:hlnoabsorptivedirection}
\end{equation}
This slightly stronger rank formulation also excludes a time-invariant linear
combination of regressors whose individual coordinates happen to vary over time.
If a nonzero \(c\in\mathcal D_T\) existed, the common random variable
\(X_t'c\) could be absorbed into the fixed effect through
\[
  A^\ast=A+X_t'c,
  \qquad
  \beta^\ast(q)=\beta(q)-\rho(q)c,
\]
leaving every conditional quantile index unchanged.  Condition
\eqref{eq:hlnoabsorptivedirection} eliminates this invariance.

Consequently, no additional location normalization is required.  In particular,
one should not impose \(E[A]=0\), a median-zero restriction on \(A\), or an
arbitrary zero restriction on a component of \(\beta(1/2)\).  Shifting \(A\) by a
constant cannot be offset by an intercept because no intercept is included, and
the observable outcome levels determine which locations of the latent anchor are
feasible.

For the main results, impose the common-orientation restriction
\begin{equation}
  r:=\rho(\tau)>0.
  \label{eq:hlpositive}
\end{equation}
Positivity is a substantive shape restriction, not a consequence of
\eqref{eq:hlscalenormalization}.  In applications it is convenient to restrict
\(r\) to a compact interval
\([\underline r,\overline r]\subset(0,\infty)\), although the population
duality theorem below is candidate-specific and requires only \(r>0\).

\begin{proposition}[Absorption equivalence in the loading model]
\label{prop:hlabsorption}
Let \(c\in\mathcal D_T\), where \(\mathcal D_T\) is defined in
\eqref{eq:absorption-space}.  Then the joint candidates
\[
  (b_0,b_1,r)
  \quad\text{and}\quad
  (b_0-c,b_1-rc,r)
\]
are observationally equivalent whenever both belong to the parameter space.
Thus \(\mathcal D_T=\{0\}\), equivalently
\(\operatorname{rank}(G_T)=k\) under square integrability, is necessary to remove
the fixed-effect absorption ambiguity for the two coefficient vectors.  It is not
sufficient for point identification of \((b_0,b_1,r)\).
\end{proposition}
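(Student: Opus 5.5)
The plan mirrors the proof of Proposition~\ref{prop:absorptionA}, adapted to the joint two-quantile model with loadings. Fix \(c\in\mathcal D_T\). By definition of \(\mathcal D_T\), there is a measurable scalar \(\gamma(X)\) with \(X_t'c=\gamma(X)\) for every \(t\), outside a single common null set. Take any law of \((Y,X,A)\) rationalizing \((b_0,b_1,r)\): it has the observed \((Y,X)\)-marginal, and for every \(t\) the number \(X_t'b_0+A\) is an admissible conditional median of \(Y_t\), while \(X_t'b_1+rA\) is an admissible conditional \(\tau\)-quantile, both given \((X,A)\). Define the new anchor \(A^\ast=A+\gamma(X)\) on the same probability space. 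Then, almost surely,
\[
  X_t'(b_0-c)+A^\ast=X_t'b_0+A,
  \qquad
  X_t'(b_1-rc)+rA^\ast=X_t'b_1+rA .
\]
So the two candidate quantile indices are unchanged pointwise.

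Next, the conditioning field must be preserved. Given \(X\), the map \(A\mapsto A^\ast\) is a measurable bijection, so \(\sigma(X,A^\ast)=\sigma(X,A)\). Hence the conditional probabilities of \(\{Y_t\le\cdot\}\) and \(\{Y_t<\cdot\}\) at these indices are identical versions under the new conditioning. This gives all \(2T\) weak/strict inequalities at \(q_0\) and \(q_1\) simultaneously, under one common coupling. The observed marginal is untouched, the normalization \(\rho(1/2)=1\) is kept, and \(r>0\) is unchanged, so \((b_0-c,b_1-rc,r)\) is feasible. For the converse, apply the same argument with \(-c\), which also lies in \(\mathcal D_T\), starting from \((b_0-c,b_1-rc,r)\); this returns to \((b_0,b_1,r)\). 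If one prefers to argue through couplings, the argument can be phrased via the joint analogue of Theorem~\ref{thm:sharp}, but the direct structural argument suffices.

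Necessity of \(\mathcal D_T=\{0\}\) then follows. Any nonzero \(c\in\mathcal D_T\) gives, for an interior true triple, a distinct observationally equivalent triple. The equivalence with \(\operatorname{rank}(G_T)=k\) is the remark after \eqref{eq:within-rank-matrix}.

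Non-sufficiency needs a counterexample, and this is the main obstacle: the baseline examples must be lifted to the joint model with a common \(A\). I would take the design of Example~\ref{ex:Tbinarynonpoint} or Example~\ref{ex:running}, both of which have \(\mathcal D_T=\{0\}\). Let \(A_0=0\), and let \(U_t\) be a discrete error whose median and \(\tau\)-quantile intervals both have positive width; for instance, a two-point error with mass exactly \(\tau\) and \(1/2\) at suitable atoms, so that the quantile sets are intervals. The aim is to exhibit a perturbation \(b_0=\beta_0(1/2)+d\), \(b_1=\beta_0(\tau)+rd\) with \(r\) fixed and the constant anchor \(A=-\tfrac12\iota'd\). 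The common-midpoint construction of Example~\ref{ex:Tbinarynonpoint} should then show that both quantile restrictions hold for small \(d\ne0\). The delicate point is checking the atoms, so that the shifted indices stay inside both set-valued quantiles at once. If that fails, I would fall back on a single-path example in the style of Example~\ref{ex:running}, verified directly via the finite-support LP of Theorem~\ref{thm:lp} with \(2T\) residual coordinates.
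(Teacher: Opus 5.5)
Your proof of the equivalence is the paper's proof. It uses the same reparameterization \(A^\ast=A+X_t'c\), the same two index identities, the observation that \(\sigma(X,A^\ast)=\sigma(X,A)\), and the converse via \(-c\).

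The paper's proof stops there and never argues the last sentence, that \(\mathcal D_T=\{0\}\) does not give point identification. Your last paragraph therefore goes beyond the paper, but as written it is a plan, and its primary route runs into exactly the difficulty you flag.

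First, a two-point error cannot make both the median set and the \(\tau\)-quantile set nondegenerate when \(\tau\neq 1/2\). Its distribution function has only one interior level, so you need at least three atoms.

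Second, and more seriously, take the binary design with \(A_0=0\) and \(U_t\) independent of \(X\). The path value \(X_t=0\) has positive probability and forces \(0\) into both quantile sets of \(U_t\). By Lemma~\ref{lem:hlquantileorder}, \(0\) is then their common endpoint. Because \(r>0\), the perturbed indices move by \((X_t-\tfrac12\iota)'d\) and \(r(X_t-\tfrac12\iota)'d\), which have the same sign. So at \(X_t=0\) one of them leaves its quantile set unless \(\iota'd=0\). If, in addition, the median and \(\tau\) coefficients are both taken equal to \(\beta_0\), the same one-sidedness at \(X_t=e_1,e_2\) forces \(d=0\). The common-midpoint construction of Example~\ref{ex:Tbinarynonpoint} works only after the quantile coefficients are moved away from \(\beta_0\). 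At that point the identified set is already non-singleton and the perturbation adds nothing.

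The claim is much easier than you make it, and your own choice \(A_0=0\) already settles it. If the observed law admits a feasible joint coupling with \(A\equiv 0\) at some \((b_0,b_1,r_0)\), then the loading multiplies zero in every restriction. Hence \((b_0,b_1,r)\in\Theta_I^{0,\tau}\) for every \(r>0\), and \(\mathcal D_T=\{0\}\) does nothing to prevent this. For instance, \(Y_t=X_t'\beta\) under any regressor design with \(\mathcal D_T=\{0\}\) gives \(\{(\beta,\beta,r):r>0\}\subseteq\Theta_I^{0,\tau}\) at every \(\tau\).

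Your fallback also works immediately. Take the single path \(x=(0,1)\) and the outcomes of Example~\ref{ex:running}, so \(\mathcal D_2=\{0\}\). For \(\tau<1/2\), each \(Y_t\) has median set \([0,1]\) and \(\tau\)-quantile set \(\{0\}\). The coupling \(A\equiv 0\) then shows \(\{(b_0,0,r):b_0\in[0,1],\ r>0\}\subseteq\Theta_I^{0,\tau}\).
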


\begin{proof}
Because \(c\in\mathcal D_T\), the random variable \(X_t'c\) is common across
periods.  Put \(A^\ast=A+X_t'c\).  For \(q_0=1/2\), where
\(\rho(q_0)=1\),
\[
  X_t'(b_0-c)+A^\ast=X_t'b_0+A.
\]
At \(q_1=\tau\),
\[
  X_t'(b_1-rc)+rA^\ast=X_t'b_1+rA.
\]
The transformation of \(A\) is one-to-one conditional on \(X\), so the
conditioning sigma-fields and both generalized-quantile restrictions are
unchanged.  Replacing \(c\) by \(-c\) proves equivalence.
\end{proof}

Write
\[
  b_0=\beta(q_0),
  \qquad
  b_1=\beta(\tau),
\]
and define
\begin{equation}
  \gamma=\frac1r,
  \qquad
  d=\frac{b_1}{r}=\gamma b_1.
  \label{eq:hlprojectiveparameters}
\end{equation}
For a candidate
\(\theta=(b_0,b_1,r)\), or equivalently
\(\vartheta=(b_0,d,\gamma)\), define the two families of transformed residuals
\begin{align}
  Z_{0t}(\theta)
  &=
  Y_t-X_t'b_0,
  \label{eq:hlmedianresidual}\\
  Z_{1t}(\theta)
  &=
  \frac{Y_t-X_t'b_1}{r}
  =
  \gamma Y_t-X_t'd.
  \label{eq:hltauresidual}
\end{align}
Because \(r>0\), model \eqref{eq:hlmodel} at \(q_0\) and \(q_1\) is equivalent
to requiring that the \emph{same} latent number \(A\) be a conditional
\(q_\ell\)-quantile of every \(Z_{\ell t}(\theta)\):
\begin{align}
  P\{Z_{\ell t}(\theta)\leq A\mid X,A\}
  &\geq q_\ell,
  \label{eq:hlcoupleweak}\\
  P\{Z_{\ell t}(\theta)<A\mid X,A\}
  &\leq q_\ell,
  \label{eq:hlcouplestrict}
\end{align}
for \(\ell=0,1\) and \(t=1,\ldots,T\).  The transformation
\eqref{eq:hlprojectiveparameters} is useful computationally: both residual
families are affine in \((b_0,d,\gamma)\).

\begin{remark}[Why the median restriction must be imposed jointly]
\label{rem:hljointmedian}
At the single quantile \(\tau\), the loading \(r\) can be absorbed into a
redefinition of the unrestricted latent effect.  The normalization
\(\rho(1/2)=1\) has identifying content for \(r\) only when the median and
\(\tau\) restrictions are imposed on the same \(A\) in the same coupling.
Accordingly, the sharp set below is a joint median--\(\tau\) set.  Treating
\(b_0\) as a nuisance parameter and projecting it out remains sharp; discarding
the median restriction does not.
\end{remark}

\subsection{The joint sharp set and its target-quantile projection}
\label{subsec:hlsharpset}

Let \(\Pi(\Pobs)\) be the collection of Borel probability laws of
\((Y,X,A)\) having observed \((Y,X)\)-marginal \(\Pobs\).  For
\(\theta=(b_0,b_1,r)\), let
\(\mathcal C_\theta^{0,\tau}(\Pobs)\) contain every
\(\pi\in\Pi(\Pobs)\) satisfying
\eqref{eq:hlcoupleweak}--\eqref{eq:hlcouplestrict} under \(\pi\), simultaneously
for both \(\ell=0,1\) and all \(t\).  Define
\begin{equation}
  \Theta_I^{0,\tau}
  =
  \left\{
  (b_0,b_1,r):
  r>0,\ 
  \mathcal C_\theta^{0,\tau}(\Pobs)\neq\varnothing
  \right\}.
  \label{eq:hljointset}
\end{equation}
No component of \(b_0\) or \(b_1\) is normalized to zero.  Apart from the usual
parameter-space restrictions, both coefficient vectors are allowed to vary freely.

The target-quantile projected sharp set is
\begin{equation}
  \Theta_{I,\tau}^{\mathrm{pt}}
  =
  \left\{
  (b_1,r):
  \text{there exists }b_0\in\mathcal B_0
  \text{ such that }(b_0,b_1,r)\in\Theta_I^{0,\tau}
  \right\}.
  \label{eq:hlpointwiseprojection}
\end{equation}
Here the superscript ``pt'' refers only to the coordinates reported after
projection onto \((\beta(\tau),\rho(\tau))\).  Feasibility is not pointwise: it
is determined jointly from the restrictions at \(1/2\) and \(\tau\).
If \(b_0\) is known or separately restricted to a set
\(\mathcal B_0^I\), replace \(b_0\in\mathcal B_0\) in
\eqref{eq:hlpointwiseprojection} by \(b_0\in\mathcal B_0^I\).  This is a slice or
a restricted projection of the same joint sharp set; no plug-in approximation is
required.

\subsection{Automatic cross-quantile ordering}
\label{subsec:hlordering}

The joint formulation automatically imposes the noncrossing restriction that is
appropriate for a common latent effect.

\begin{lemma}[Order of generalized quantile selections]
\label{lem:hlquantileorder}
Let \(q<q'\).  If \(v\) satisfies
\[
  P(Y\leq v)\geq q,
  \qquad P(Y<v)\leq q,
\]
and \(v'\) satisfies the analogous two inequalities at \(q'\), then
\(v\leq v'\).
\end{lemma}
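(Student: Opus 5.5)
The plan is a direct contradiction argument that uses only the two defining inequalities at each quantile index. Suppose, contrary to the claim, that \(v'<v\). We combine the strict-event inequality at level \(q\) for \(v\) with the weak-event inequality at level \(q'\) for \(v'\), and chain them through monotonicity of the distribution function.

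First, since \(v'<v\), the event \(\{Y\leq v'\}\) is contained in \(\{Y<v\}\). Hence
\[
  P(Y\leq v')\leq P(Y<v).
\]
Second, the hypothesis on \(v'\) gives \(P(Y\leq v')\geq q'\). The hypothesis on \(v\) gives \(P(Y<v)\leq q\). Chaining these three facts yields
\[
  q'\leq P(Y\leq v')\leq P(Y<v)\leq q,
\]
that is, \(q'\leq q\). This contradicts \(q<q'\), so \(v\leq v'\).

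There is no real obstacle; the only point requiring care is the choice of which of the two inequalities to use at each index. The weak event must be taken at the higher level \(q'\) and the strict event at the lower level \(q\). The strict inclusion \(\{Y\leq v'\}\subseteq\{Y<v\}\) is exactly what the assumed ordering \(v'<v\) supplies, and this is where the weak/strict convention matters when atoms are present. The same argument applies verbatim conditionally on \((X,A)\), almost surely. This is the form in which it will be used for the cross-quantile ordering of \(X_t'\beta(q_\ell)+\rho(q_\ell)A\).
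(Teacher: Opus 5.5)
Your proof is correct and is essentially identical to the paper's: assuming \(v'<v\), you use \(\{Y\leq v'\}\subseteq\{Y<v\}\) to chain \(q'\leq P(Y\leq v')\leq P(Y<v)\leq q\), which contradicts \(q<q'\). Your remark that the argument carries over conditionally on \((X,A)\) almost surely is also how the lemma is then applied to obtain the cross-quantile ordering in the loading model.
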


\begin{proof}
If \(v>v'\), then
\(\{Y\leq v'\}\subseteq\{Y<v\}\), and therefore
\[
  q'
  \leq P(Y\leq v')
  \leq P(Y<v)
  \leq q,
\]
a contradiction.
\end{proof}

Set
\[
  s_\tau=\operatorname{sgn}(\tau-\tfrac12).
\]
Lemma \ref{lem:hlquantileorder} implies that every feasible coupling satisfies
\begin{equation}
  s_\tau
  \left[
    X_t'(b_1-b_0)+(r-1)A
  \right]
  \geq0,
  \qquad t=1,\ldots,T,
  \label{eq:hlautomaticordering}
\end{equation}
almost surely.  This condition is not an additional maintained assumption; it is
an implication of imposing the two quantile restrictions on one conditional
distribution.
\subsection{Observable cross-quantile crossing inequalities}
\label{subsec:hlcrossing}

The common-anchor representation also produces a simple observable outer bound
analogous to Proposition \ref{prop:crossing}.  Unlike the within-quantile crossing
inequalities, its cross-quantile members can restrict the relative loading
\(r=\rho(\tau)\).

For \(\ell,m\in\{0,1\}\), define the directional crossing constant
\begin{equation}
  c_{\ell m}
  =
  \max\{q_\ell,1-q_m\}.
  \label{eq:hlcrossingconstant}
\end{equation}

\begin{proposition}[Generalized crossing inequalities]
\label{prop:hlcrossing}
If \(\theta=(b_0,b_1,r)\in\Theta_I^{0,\tau}\), then, for every
\(\ell,m\in\{0,1\}\), every \(s,t\in\{1,\ldots,T\}\), every
\(v\in\mathbb R\), and \(P_X\)-almost every \(x\),
\begin{equation}
  P^{\mathrm{obs}}
  \left\{
    Z_{\ell s}(\theta)\leq v<Z_{mt}(\theta)
    \,\middle|\,X=x
  \right\}
  \leq
  c_{\ell m}.
  \label{eq:hlcrossing}
\end{equation}
When \(\ell=m\), the restriction is nontrivial only for \(s\neq t\).
When \(\ell\neq m\), it also applies with \(s=t\).
\end{proposition}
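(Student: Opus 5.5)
The argument follows the proof of Proposition~\ref{prop:crossing}, now using the two residual families. Fix \(\theta=(b_0,b_1,r)\in\Theta_I^{0,\tau}\) and choose a coupling \(\pi\in\mathcal C_\theta^{0,\tau}(\Pobs)\). Under \(\pi\), the common \(A\) satisfies \eqref{eq:hlcoupleweak}--\eqref{eq:hlcouplestrict} for both \(\ell=0,1\) and every \(t\). Because these restrictions are conditional on the same full path \(X\), we can work conditionally on \((X,A)\). Write
\[
  E=\{Z_{\ell s}(\theta)\leq v<Z_{mt}(\theta)\},
\]
and split according to whether \(A\leq v\) or \(A>v\).

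\emph{Case \(A\leq v\).} Here \(E\subseteq\{Z_{mt}(\theta)>v\}\subseteq\{Z_{mt}(\theta)>A\}\). The weak inequality \eqref{eq:hlcoupleweak} at level \(q_m\) gives
\[
  \pi(Z_{mt}>A\mid X,A)\leq 1-q_m .
\]

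\emph{Case \(A>v\).} Here \(E\subseteq\{Z_{\ell s}(\theta)\leq v\}\subseteq\{Z_{\ell s}(\theta)<A\}\). The strict inequality \eqref{eq:hlcouplestrict} at level \(q_\ell\) gives
\[
  \pi(Z_{\ell s}<A\mid X,A)\leq q_\ell .
\]

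Combining the two cases, almost surely
\[
  \pi(E\mid X,A)\leq (1-q_m)\one\{A\leq v\}+q_\ell\one\{A>v\}.
\]
Now integrate over \(A\) given \(X=x\). The bound on the right is a convex combination of \(1-q_m\) and \(q_\ell\), so it is at most \(\max\{q_\ell,1-q_m\}=c_{\ell m}\). Since the \((Y,X)\)-marginal of \(\pi\) is \(\Pobs\) and \(E\) depends only on \((Y,X)\) and \(\theta\), the left-hand side is the observed conditional probability. This gives \eqref{eq:hlcrossing} for \(P_X\)-almost every \(x\). As in Proposition~\ref{prop:crossing}, the argument uses a single marginal tail event in each case and never combines events across periods or quantiles. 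So no independence across \(\ell\), \(s\), or \(t\) is needed, and the argument is unchanged when \(s=t\).

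Next come the two remarks on triviality.
\begin{itemize}
\item If \(\ell=m\) and \(s=t\), the event \(E\) is \(\{Z\leq v<Z\}\), which is empty, so the inequality holds trivially.
\item If \(\ell\neq m\), the event \(\{Z_{\ell s}\leq v<Z_{ms}\}\) compares two different transformed residuals from the same period. These depend on \(b_0\), \(b_1\) and \(r\) in different ways, so the event need not be empty, and the bound \(c_{\ell m}\) can bind. Since \(c_{\ell m}<1\) for \(q_\ell,q_m\in(0,1)\), it can restrict \(r\).
\end{itemize}

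The only point requiring care is measure-theoretic, namely passing from the conditional-on-\((X,A)\) bound to the conditional-on-\(X\) statement off a \(P_X\)-null set. This is handled by regular conditional distributions on standard Borel spaces (Remark~\ref{rem:kernels}) and the tower property, exactly as in Proposition~\ref{prop:crossing}.
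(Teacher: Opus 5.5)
Your proof is correct and matches the paper's argument essentially step for step. You split on $\{A\leq v\}$ versus $\{A>v\}$, bound the crossing event by $\{Z_{mt}(\theta)>A\}$ using the weak $q_m$ restriction and by $\{Z_{\ell s}(\theta)<A\}$ using the strict $q_\ell$ restriction, and average over $A$ given $X=x$ to obtain $\max\{q_\ell,1-q_m\}$; your handling of the $s=t$ cases (empty event when $\ell=m$, unchanged argument when $\ell\neq m$) is also right.
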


\begin{proof}
Fix a feasible coupling
\(\pi\in\mathcal C_\theta^{0,\tau}(\Pobs)\), and let
\[
  E_{\ell s,mt}(v;\theta)
  =
  \{Z_{\ell s}(\theta)\leq v<Z_{mt}(\theta)\}.
\]
Conditional on \((X,A)\), split according to the position of \(A\).  If
\(A\leq v\), then
\[
  E_{\ell s,mt}(v;\theta)
  \subseteq
  \{Z_{mt}(\theta)>v\}
  \subseteq
  \{Z_{mt}(\theta)>A\}.
\]
The weak \(q_m\)-quantile restriction in
\eqref{eq:hlcoupleweak} therefore gives
\[
  \pi\{E_{\ell s,mt}(v;\theta)\mid X,A\}
  \leq 1-q_m
  \qquad\text{on }\{A\leq v\}.
\]
If \(A>v\), then
\[
  E_{\ell s,mt}(v;\theta)
  \subseteq
  \{Z_{\ell s}(\theta)\leq v\}
  \subseteq
  \{Z_{\ell s}(\theta)<A\},
\]
so the strict \(q_\ell\)-quantile restriction in
\eqref{eq:hlcouplestrict} gives
\[
  \pi\{E_{\ell s,mt}(v;\theta)\mid X,A\}
  \leq q_\ell
  \qquad\text{on }\{A>v\}.
\]
Combining the two cases,
\begin{align}
  \pi\{E_{\ell s,mt}(v;\theta)\mid X,A\}
  \leq
  &(1-q_m)\one\{A\leq v\}
  \notag\\
  &+
  q_\ell\one\{A>v\}.
  \label{eq:hlconditionalcrossing}
\end{align}
Taking expectations conditional on \(X=x\) yields
\begin{align*}
  &P^{\mathrm{obs}}\{E_{\ell s,mt}(v;\theta)\mid X=x\}\\
  &\quad\leq
  (1-q_m)P_\pi(A\leq v\mid X=x)
  +
  q_\ell P_\pi(A>v\mid X=x)\\
  &\quad\leq
  \max\{q_\ell,1-q_m\},
\end{align*}
which proves \eqref{eq:hlcrossing}.
\end{proof}

Because \(q_0=1/2\) and \(q_1=\tau\), Proposition
\ref{prop:hlcrossing} contains four families:
\begin{align}
  P^{\mathrm{obs}}\{Z_{0s}\leq v<Z_{0t}\mid X=x\}
  &\leq \frac12,
  \label{eq:hlcross00}\\
  P^{\mathrm{obs}}\{Z_{1s}\leq v<Z_{1t}\mid X=x\}
  &\leq c_\tau:=\max\{\tau,1-\tau\},
  \label{eq:hlcross11}\\
  P^{\mathrm{obs}}\{Z_{0s}\leq v<Z_{1t}\mid X=x\}
  &\leq \max\left\{\frac12,1-\tau\right\},
  \label{eq:hlcross01}\\
  P^{\mathrm{obs}}\{Z_{1s}\leq v<Z_{0t}\mid X=x\}
  &\leq \max\left\{\tau,\frac12\right\}.
  \label{eq:hlcross10}
\end{align}
Here and below the dependence of the transformed residuals on \(\theta\) is
suppressed to simplify notation.  If \(Z_L\) denotes the transformed residual
associated with the lower of \(1/2\) and \(\tau\), and \(Z_H\) the one associated
with the higher quantile index, the two cross-quantile bounds can be written more
compactly as
\begin{align}
  P^{\mathrm{obs}}\{Z_{Ls}\leq v<Z_{Ht}\mid X=x\}
  &\leq\frac12,
  \label{eq:hlcrosslowhigh}\\
  P^{\mathrm{obs}}\{Z_{Hs}\leq v<Z_{Lt}\mid X=x\}
  &\leq c_\tau.
  \label{eq:hlcrosshighlow}
\end{align}

\begin{corollary}[Observable cross-quantile moments]
\label{cor:hlcrossingmoments}
Every \(\theta\in\Theta_I^{0,\tau}\) satisfies
\begin{equation}
  \E_{\mathrm{obs}}
  \left[
    h(X)
    \left\{
      \one\bigl(
        Z_{\ell s}(\theta)\leq v<Z_{mt}(\theta)
      \bigr)
      -
      c_{\ell m}
    \right\}
  \right]
  \leq0
  \label{eq:hlcrossingmoment}
\end{equation}
for every bounded nonnegative Borel function \(h\), every
\(\ell,m\in\{0,1\}\), every \(s,t\), and every \(v\in\mathbb R\).
\end{corollary}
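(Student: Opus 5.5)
The plan is to derive Corollary~\ref{cor:hlcrossingmoments} from the conditional bound \eqref{eq:hlcrossing} of Proposition~\ref{prop:hlcrossing}, in exactly the way Corollary~\ref{cor:crossingmoments} was derived from Proposition~\ref{prop:crossing}.

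Fix \(\theta\in\Theta_I^{0,\tau}\), indices \(\ell,m\in\{0,1\}\), periods \(s,t\), a threshold \(v\in\mathbb R\), and a bounded nonnegative Borel function \(h\). The crossing event
\[
  E=\{Z_{\ell s}(\theta)\leq v<Z_{mt}(\theta)\}
\]
is a Borel event in \((Y,X)\), because \(Z_{\ell s}\) and \(Z_{mt}\) are affine in \(Y\) given \(X\). Proposition~\ref{prop:hlcrossing} gives
\[
  P^{\mathrm{obs}}(E\mid X)\leq c_{\ell m}
  \qquad P_X\text{-almost surely}.
\]

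First, I would write the target expectation using iterated expectations, conditioning on \(X\):
\[
  \E_{\mathrm{obs}}\bigl[h(X)\{\one(E)-c_{\ell m}\}\bigr]
  =
  \E_{\mathrm{obs}}\bigl[h(X)\{P^{\mathrm{obs}}(E\mid X)-c_{\ell m}\}\bigr].
\]
This step is legitimate because \(h(X)\) is \(X\)-measurable and the integrand is bounded, so all expectations are finite.

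Second, I would note that \(h\geq0\) and that the bracket is nonpositive almost surely by \eqref{eq:hlcrossing}. The integrand is therefore nonpositive almost surely, and so is its expectation. This proves \eqref{eq:hlcrossingmoment}.

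There is no genuine obstacle. The only points needing care are measurability of the event, which allows a regular version of the conditional probability (all spaces are standard Borel), and the fact that the almost-sure qualifier in \eqref{eq:hlcrossing} is harmless under integration against \(P_X\).
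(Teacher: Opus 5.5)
Your proposal is correct and follows the paper's own argument: the paper's proof also multiplies the conditional bound \eqref{eq:hlcrossing} by \(h(X)\geq0\) and applies iterated expectations. Your remarks on measurability and on the \(P_X\)-almost-sure qualifier simply spell out details the paper leaves implicit.
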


\begin{proof}
Multiply \eqref{eq:hlcrossing} by \(h(X)\geq0\) and apply iterated
expectations.
\end{proof}

\paragraph{Identifying content for the relative loading.}
The within-\(\tau\) family \eqref{eq:hlcross11} does not restrict
\(r=\rho(\tau)\).  Indeed, because \(r>0\), the change of threshold
\(u=rv\) gives
\[
\begin{aligned}
  &\sup_{v\in\mathbb R}
  P^{\mathrm{obs}}
  \left\{
    \frac{Y_s-X_s'b_1}{r}
    \leq v<
    \frac{Y_t-X_t'b_1}{r}
    \,\middle|\,X=x
  \right\}\\
  &\qquad=
  \sup_{u\in\mathbb R}
  P^{\mathrm{obs}}
  \left\{
    Y_s-X_s'b_1
    \leq u<
    Y_t-X_t'b_1
    \,\middle|\,X=x
  \right\}.
\end{aligned}
\]
By contrast, the cross-quantile inequalities are
\begin{align}
  P^{\mathrm{obs}}
  \left\{
    Y_s-X_s'b_0\leq v,\
    Y_t-X_t'b_1>rv
    \,\middle|\,X=x
  \right\}
  &\leq
  \max\left\{\frac12,1-\tau\right\},
  \label{eq:hlcross01original}\\
  P^{\mathrm{obs}}
  \left\{
    Y_s-X_s'b_1\leq rv,\
    Y_t-X_t'b_0>v
    \,\middle|\,X=x
  \right\}
  &\leq
  \max\left\{\tau,\frac12\right\}.
  \label{eq:hlcross10original}
\end{align}
The relative scale \(r\) cannot be removed from these two events by relabeling a
single threshold.  They therefore provide directly observable outer restrictions
on \(\rho(\tau)\).

Define the joint crossing outer set
\begin{equation}
  \Theta_{I,\mathrm{cross}}^{0,\tau}
  =
  \left\{
    \theta:
    \sup_{v\in\mathbb R}
    P^{\mathrm{obs}}
    \left\{
      Z_{\ell s}(\theta)\leq v<Z_{mt}(\theta)
      \,\middle|\,X=x
    \right\}
    \leq c_{\ell m}
    \text{ for all }\ell,m,s,t
    \text{ and a.e. }x
  \right\}.
  \label{eq:hlcrossingouterset}
\end{equation}
Then
\begin{equation}
  \Theta_I^{0,\tau}
  \subseteq
  \Theta_{I,\mathrm{cross}}^{0,\tau},
  \qquad
  \Theta_{I,\tau}^{\mathrm{pt}}
  \subseteq
  \left\{
    (b_1,r):
    \text{some }b_0\in\mathcal B_0
    \text{ satisfies }
    (b_0,b_1,r)\in
    \Theta_{I,\mathrm{cross}}^{0,\tau}
  \right\}.
  \label{eq:hlcrossingcontainment}
\end{equation}
These containments are generally strict: as in the single-quantile model, crossing
inequalities use only particular consequences of common-anchor compatibility and
do not replace the full coupling or dual characterization.

\paragraph{Computation.}
With finite conditional support, for a fixed candidate it is enough to check
thresholds among the distinct combined values
\[
  \left\{
    z_{j\ell t}(\theta):
    j=1,\ldots,J,\ \ell=0,1,\ t=1,\ldots,T
  \right\}.
\]
Thus the entire crossing screen requires only probability summation and no
optimization over latent masses.  With continuously distributed outcomes or
regressors, \eqref{eq:hlcrossingmoment} and a chosen nonnegative instrument class
give the corresponding observable outer restrictions.

\subsection{Sharp joint coupling and pointwise projection}
\label{subsec:hlcouplingsharp}

\begin{proposition}[Exact joint coupling characterization]
\label{prop:hlcouplingsharp}
The set in \eqref{eq:hljointset} is the sharp identified set for
\((\beta(1/2),\beta(\tau),\rho(\tau))\) under the two restrictions in
\eqref{eq:hlmodel}, the scale normalization
\eqref{eq:hlscalenormalization}, and the common-orientation restriction
\eqref{eq:hlpositive}.
\end{proposition}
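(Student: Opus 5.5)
The plan is to mirror the proof of Theorem~\ref{thm:sharp}, with the $T$ residual coordinates replaced by the $2T$ transformed residuals $Z_{\ell t}(\theta)$. First I would define the structural object. An admissible structural law for $\theta=(b_0,b_1,r)$ with $r>0$ is a Borel law of $(Y,X,A)$ with three properties: its $(Y,X)$-marginal is $\Pobs$; for each $t$ and each $\ell\in\{0,1\}$, the number $X_t'\beta(q_\ell)+\rho(q_\ell)A$ is an admissible conditional $q_\ell$-quantile of $Y_t$ given $(X,A)$ in the weak/strict sense; and $\rho(1/2)=1$, $\rho(\tau)=r$. If one prefers to carry disturbances, one may also include $U_{t,\ell}=Y_t-X_t'\beta(q_\ell)-\rho(q_\ell)A$. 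The structural sharp set is then the set of $\theta$ with $r>0$ for which such a law exists. The claim to prove is that this set equals $\Theta_I^{0,\tau}$ in \eqref{eq:hljointset}.

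For necessity, take an admissible structural law and let $\pi$ be its $(Y,X,A)$-marginal, so $\pi\in\Pi(\Pobs)$. For $\ell=0$ the events $\{Y_t\le X_t'b_0+A\}$ and $\{Z_{0t}(\theta)\le A\}$ coincide, and the same holds for the strict events. For $\ell=1$, positivity of $r$ gives
\[
\{Y_t\le X_t'b_1+rA\}=\{(Y_t-X_t'b_1)/r\le A\}=\{Z_{1t}(\theta)\le A\},
\]
and similarly for the strict inequality. The structural quantile inequalities therefore translate verbatim into \eqref{eq:hlcoupleweak}--\eqref{eq:hlcouplestrict} under $\pi$, jointly for both $\ell$ and all $t$, which gives $\pi\in\mathcal C_\theta^{0,\tau}(\Pobs)$.

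For sufficiency, take $\pi\in\mathcal C_\theta^{0,\tau}(\Pobs)$ and reverse the event equalities. Under $\pi$, the median index $X_t'b_0+A$ and the $\tau$-index $X_t'b_1+rA$ are admissible conditional quantiles of $Y_t$ given $(X,A)$. Defining $U_{t,\ell}$ deterministically as above produces the structural law, and its observed marginal is $\Pobs$ by construction. Remark~\ref{rem:kernels} supplies the kernel representation of this law.

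The only steps needing care, which I expect to be the main (mild) obstacle, concern the normalization and orientation. The first is that the equivalence uses $r>0$ in an essential way: for $r<0$ the weak and strict events swap and the quantile index flips, so the restriction $r>0$ in \eqref{eq:hljointset} is what makes the reparameterization exact. The second is that no other normalization is silently used. I would check that a structural law satisfying \eqref{eq:hlscalenormalization} and \eqref{eq:hlpositive} fixes $A$'s scale and orientation, so the coupling's $A$ is exactly the structural $A$, with no quotient by the rescalings $A^\ast=cA$. Finally, I would note that the absorption directions of Proposition~\ref{prop:hlabsorption} are automatically included rather than excluded, which is consistent with sharpness because sharpness is a statement about observational equivalence, not point identification. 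Together, necessity and sufficiency give both halves of sharpness: every included $\theta$ is attained by a complete structure, and no excluded $\theta$ admits one.
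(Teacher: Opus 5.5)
Your proposal is correct and follows the paper's proof. In both directions the argument rests on the event identities $\{Y_t\le X_t'b_0+A\}=\{Z_{0t}(\theta)\le A\}$ and, using $r>0$, $\{Y_t\le X_t'b_1+rA\}=\{Z_{1t}(\theta)\le A\}$ (and their strict analogues), which carry any admissible structural law into $\mathcal C_\theta^{0,\tau}(\Pobs)$ and back; your additional remarks on normalization, orientation, and absorption directions are consistent with the paper but not needed for the argument.
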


\begin{proof}
If a structural law satisfies \eqref{eq:hlmodel}, its marginal law of
\((Y,X,A)\) belongs to \(\Pi(\Pobs)\).  At the median,
\[
  Y_t\leq X_t'b_0+A
  \quad\Longleftrightarrow\quad
  Z_{0t}(\theta)\leq A,
\]
and similarly for the strict event.  At \(\tau\), positivity of \(r\) gives
\[
  Y_t\leq X_t'b_1+rA
  \quad\Longleftrightarrow\quad
  Z_{1t}(\theta)\leq A,
\]
again with the same equivalence for strict inequalities.  Thus every admissible
structural law induces an element of
\(\mathcal C_\theta^{0,\tau}(\Pobs)\).

Conversely, take
\(\pi\in\mathcal C_\theta^{0,\tau}(\Pobs)\).  Under the conditional law of
\(Y\) given \((X,A)\) generated by \(\pi\), the equivalences just displayed turn
\eqref{eq:hlcoupleweak}--\eqref{eq:hlcouplestrict} into the generalized-quantile
conditions in \eqref{eq:hlmodel}, at both quantile indices and every period.
The observed marginal is \(\Pobs\) by construction.  Hence every candidate in
\eqref{eq:hljointset} is attained by a complete latent-variable structure, while
the first argument shows that no candidate outside the set can be attained.
\end{proof}

\subsection{Population sharp observable duality}
\label{subsec:hlpopulationdual}

Write \(W=(Y,X)\), and suppose its support \(\mathcal W\) is compact.  Let
\(\mathcal X\) be the projection of \(\mathcal W\) onto the support of \(X\).
For a fixed candidate \(\theta\), define
\begin{align}
  \underline z_\theta
  &=
  \min_{\substack{w\in\mathcal W\\
                  \ell\in\{0,1\},\,t\leq T}}
  Z_{\ell t}(\theta;w),
  \notag\\
  \overline z_\theta
  &=
  \max_{\substack{w\in\mathcal W\\
                  \ell\in\{0,1\},\,t\leq T}}
  Z_{\ell t}(\theta;w),
  \qquad
  K_\theta=[\underline z_\theta,\overline z_\theta].
  \label{eq:hlanchorinterval}
\end{align}

\begin{lemma}[Exact compactification of the common anchor]
\label{lem:hlanchorcompact}
If \(\mathcal C_\theta^{0,\tau}(\Pobs)\neq\varnothing\), every feasible coupling
satisfies \(P(A\in K_\theta)=1\).
\end{lemma}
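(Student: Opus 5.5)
The argument mirrors Lemma~\ref{lem:anchorcompact}, now applied to the enlarged list of $2T$ transformed residuals. Fix $\pi\in\mathcal C_\theta^{0,\tau}(\Pobs)$. Since the $(Y,X)$-marginal of $\pi$ is $\Pobs$, which is supported on the compact set $\mathcal W$, we have $W\in\mathcal W$ $\pi$-almost surely. Each $Z_{\ell t}(\theta;w)$ is continuous in $w$; note $Z_{1t}(\theta;w)=(y_t-x_t'b_1)/r$ with $r>0$. Hence the minimum and maximum in \eqref{eq:hlanchorinterval} are attained, and $\pi$-almost surely $\underline z_\theta\leq Z_{\ell t}(\theta)\leq\overline z_\theta$ for every $\ell\in\{0,1\}$ and $t\leq T$.

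Next I would treat the event $\{A<\underline z_\theta\}$. On this event, $\{Z_{\ell t}(\theta)\leq A\}$ is $\pi$-null for every $(\ell,t)$. Conditioning on $(X,A)$, with $\{A<\underline z_\theta\}\in\sigma(X,A)$, gives $\pi\{Z_{0 1}(\theta)\leq A\mid X,A\}=0$ almost surely on that event. It suffices to use one restriction, say $\ell=0$, $t=1$. The weak inequality \eqref{eq:hlcoupleweak} requires this conditional probability to be at least $q_0=1/2>0$. The two statements are compatible only if $\pi(A<\underline z_\theta)=0$.

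Symmetrically, on $\{A>\overline z_\theta\}$ the event $\{Z_{01}(\theta)<A\}$ has conditional probability one. The strict inequality \eqref{eq:hlcouplestrict} caps it at $q_0=1/2<1$, so $\pi(A>\overline z_\theta)=0$. The same reasoning would work equally with $q_1=\tau\in(0,1)$. Combining the two cases gives $P_\pi(A\in K_\theta)=1$.

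The only delicate step is the passage from ``almost surely $Z\geq\underline z_\theta$'' to a conditional statement given $(X,A)$. To handle it, I would write
\[
E_\pi\bigl[\one\{A<\underline z_\theta\}\,\pi\{Z_{01}\leq A\mid X,A\}\bigr]=\pi\{Z_{01}\leq A<\underline z_\theta\}=0,
\]
while the left side is at least $\tfrac12\,\pi(A<\underline z_\theta)$. This avoids any need for regular conditional distributions. The upper case is handled identically, using the complement.
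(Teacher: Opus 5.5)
Your proof is correct and takes the same route as the paper's: an anchor below $\underline z_\theta$ gives the weak event conditional probability zero, and an anchor above $\overline z_\theta$ gives the strict event conditional probability one, so either contradicts feasibility. Your extra details (attainment of the extrema on the compact support, one coordinate being enough, and the iterated-expectation identity in place of pointwise conditional statements) make rigorous what the paper states in one line.
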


\begin{proof}
If \(A<\underline z_\theta\), then
\[
  P\{Z_{\ell t}(\theta)\leq A\mid X,A\}=0<q_\ell
\]
for every \(\ell,t\).  If \(A>\overline z_\theta\), then
\[
  P\{Z_{\ell t}(\theta)<A\mid X,A\}=1>q_\ell
\]
for every \(\ell,t\).  Either event is inconsistent with feasibility.
\end{proof}

A dual multiplier is an array
\[
  \lambda
  =
  \{\lambda_{\ell t}^-,\lambda_{\ell t}^+:
  \ell=0,1,\ t=1,\ldots,T\}
  \in C_+(\mathcal X\times K_\theta)^{4T}.
\]
For \(w=(y,x)\), \(a\in K_\theta\), and candidate \(\theta\), define
\begin{align}
  S_\theta(w,a;\lambda)
  =
  \sum_{\ell=0}^1\sum_{t=1}^T
  &\lambda_{\ell t}^-(x,a)
  \left\{
    \one\bigl(Z_{\ell t}(\theta;w)\leq a\bigr)-q_\ell
  \right\}
  \notag\\
  &+
  \lambda_{\ell t}^+(x,a)
  \left\{
    q_\ell-\one\bigl(Z_{\ell t}(\theta;w)<a\bigr)
  \right\}.
  \label{eq:hldualscore}
\end{align}
Its observable dual functional is
\begin{equation}
  \Psi_\theta^{0,\tau}(\lambda)
  =
  \E_{\mathrm{obs}}
  \left[
    \max_{a\in K_\theta}S_\theta(W,a;\lambda)
  \right].
  \label{eq:hldualfunctional}
\end{equation}
The latent common anchor has disappeared from
\eqref{eq:hldualfunctional}; for fixed \(\theta\) and \(\lambda\), the functional
depends only on the observed law.

\begin{theorem}[Sharp observable duality for a quantile-varying loading]
\label{thm:hlobservabledual}
Suppose \(\mathcal W\) is compact and \(r>0\).  Then
\begin{equation}
  \theta\in\Theta_I^{0,\tau}
  \quad\Longleftrightarrow\quad
  \Psi_\theta^{0,\tau}(\lambda)\geq0
  \quad\text{for every }
  \lambda\in C_+(\mathcal X\times K_\theta)^{4T}.
  \label{eq:hlsharpdualcriterion}
\end{equation}
If \(\theta\notin\Theta_I^{0,\tau}\), there is a nonnegative continuous
multiplier \(\lambda\) such that
\begin{equation}
  \Psi_\theta^{0,\tau}(\lambda)<0.
  \label{eq:hlstrictcertificate}
\end{equation}
Consequently, the pointwise sharp set
\(\Theta_{I,\tau}^{\mathrm{pt}}\) is obtained by applying
\eqref{eq:hlsharpdualcriterion} to the joint candidates and then projecting as in
\eqref{eq:hlpointwiseprojection}.
\end{theorem}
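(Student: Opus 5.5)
The plan is to reduce Theorem~\ref{thm:hlobservabledual} to the same four-step argument used for Theorem~\ref{thm:observabledual}, with the $T$ residual coordinates $R_t(b)$ replaced by the $2T$ transformed coordinates $Z_{\ell t}(\theta)$ and the single quantile index $\tau$ replaced by the pair $(q_0,q_1)$. Fix $\theta=(b_0,b_1,r)$ with $r>0$. Because $r>0$, each $Z_{\ell t}(\theta;w)$ is a continuous affine function of $y$ (with coefficient $1$ or $1/r$) and of $x$. Compactness of $\mathcal W$ therefore makes $K_\theta$ a compact interval. By Lemma~\ref{lem:hlanchorcompact}, every feasible coupling lives in the convex set $\Gamma_\theta$ of laws of $(W,A)$ with $W$-marginal $\Pobs$ and $A\in K_\theta$. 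This set is weakly compact, since its $W$-marginal is fixed, and so tight, and the support is compact.

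Necessity is immediate, as in \eqref{eq:dualnecessity}. Under a feasible coupling, each of the $4T$ bracketed terms in \eqref{eq:hldualscore} has nonnegative conditional mean given $(X,A)$, by \eqref{eq:hlcoupleweak}--\eqref{eq:hlcouplestrict}. Hence $0\le E_\pi[S_\theta(W,A;\lambda)]\le\Psi^{0,\tau}_\theta(\lambda)$.

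For sufficiency I will follow the same steps in order.

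\emph{Step 1.} For a fixed $\pi\in\Gamma_\theta$, feasibility is equivalent to $E_\pi[S_\theta]\ge0$ for every $\lambda\in C_+(\mathcal X\times K_\theta)^{4T}$. Each restriction corresponds to a finite signed measure on the compact metric space $\mathcal X\times K_\theta$, for example $B\mapsto E_\pi[\one\{(X,A)\in B\}(\one(Z_{\ell t}\le A)-q_\ell)]$. Such a measure is nonnegative iff it integrates every nonnegative continuous function nonnegatively, and nonnegativity of these measures is exactly the a.s.\ conditional inequalities.

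\emph{Step 2.} Normalize $\lambda$ to the class $\Lambda^1_\theta$, where the sup-norms sum to at most one. Set $V_\theta=\sup_{\pi\in\Gamma_\theta}\inf_{\lambda\in\Lambda^1_\theta}E_\pi[S_\theta]$.

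\emph{Step 3.} Apply Sion's minimax theorem to exchange sup and inf. The payoff is affine in $\pi$ and linear in $\lambda$, and $\Gamma_\theta$ is compact convex.

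\emph{Step 4.} Compute the inner supremum for fixed $\lambda$: $\sup_{\pi\in\Gamma_\theta}E_\pi[S_\theta]=\Psi^{0,\tau}_\theta(\lambda)$. This follows by choosing a measurable maximizer $a^*(w)\in\arg\max_{a\in K_\theta}S_\theta(w,a;\lambda)$, which exists by measurable selection once the score is upper semicontinuous in $a$.

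The step I expect to be the main obstacle is the semicontinuity needed in Steps 2--4, since the map $\pi\mapsto E_\pi[S_\theta(W,A;\lambda)]$ involves indicators and is not weakly continuous. Here the weak/strict pairing does the work, as in the baseline. The function $a\mapsto\one(z\le a)-q$ is upper semicontinuous, and $a\mapsto q-\one(z<a)$ is also upper semicontinuous. More to the point, viewed jointly in $(z,a)$, the sets $\{z\le a\}$ and $\{z\ge a\}$ are closed. Multiplying by nonnegative continuous $\lambda$ preserves upper semicontinuity. Composing with the continuous maps $w\mapsto Z_{\ell t}(\theta;w)$, which is where $r>0$ is used, makes $S_\theta$ jointly upper semicontinuous and bounded on $\mathcal W\times K_\theta$. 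Portmanteau then gives upper semicontinuity of $\pi\mapsto E_\pi[S_\theta]$, which is exactly what Sion requires on the compact side. It also guarantees that the outer supremum is attained, so $V_\theta=0$ iff a feasible coupling exists. Joint upper semicontinuity in $(w,a)$ also yields a Borel selection $a^*(w)$, via the measurable maximum theorem.

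With these facts the argument closes as in Theorem~\ref{thm:observabledual}. We get $\theta\in\Theta_I^{0,\tau}$ iff $\inf_{\lambda\in\Lambda^1_\theta}\Psi^{0,\tau}_\theta(\lambda)=0$. Positive homogeneity removes the normalization and gives \eqref{eq:hlsharpdualcriterion}. When $\theta\notin\Theta_I^{0,\tau}$, the infimum is negative, so some $\lambda$ in the normalized class attains a negative value, which is \eqref{eq:hlstrictcertificate}.

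The projection statement then follows directly from the definition \eqref{eq:hlpointwiseprojection} together with Proposition~\ref{prop:hlcouplingsharp}. Sharpness of the joint set for each fixed $\theta$ transfers to the projection, because projection is just an existential quantifier over $b_0$.
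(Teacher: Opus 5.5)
Your proposal is correct and follows the paper's proof essentially verbatim. The paper likewise indexes the $2T$ transformed coordinates by $(\ell,t)$ with constants $q_\ell$, observes that the signed-measure, normalized-cone, Sion minimax, and observation-by-observation anchor-maximization steps of Theorem~\ref{thm:observabledual} never use equality of the quantile constants, and reads off the projection from the joint set. Your explicit check of joint upper semicontinuity, via the closed sets $\{Z_{\ell t}\le a\}$ and $\{Z_{\ell t}\ge a\}$ and the portmanteau bound, fills in details that the paper leaves to the baseline proof.
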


\begin{proof}
Index the \(2T\) transformed residual coordinates by \(i=(\ell,t)\), and attach
quantile constant \(q_i=q_\ell\) to coordinate \(i\).  For a coupling \(\pi\), form
the \(4T\) signed measures obtained by integrating
\[
  \one\{Z_{\ell t}(\theta)\leq A\}-q_\ell,
  \qquad
  q_\ell-\one\{Z_{\ell t}(\theta)<A\}
\]
over Borel subsets of \((X,A)\).  The coupling is feasible if and only if all these
measures are nonnegative.  This is exactly the signed-measure formulation used in
the proof of Theorem \ref{thm:observabledual}; that proof does not require the
constants attached to different coordinates to be equal.

Apply the same normalized-cone and Sion minimax argument on
\(\mathcal W\times K_\theta\).  Maximization over couplings with observed marginal
\(\Pobs\) is attained by selecting, observation by observation, an anchor maximizing
\(S_\theta(W,a;\lambda)\).  The resulting value is
\(\Psi_\theta^{0,\tau}(\lambda)\).  Feasibility is therefore equivalent to its
nonnegativity for every multiplier.  Compactness and upper semicontinuity give a
strictly negative minimizing multiplier whenever no feasible coupling exists.
\end{proof}

\paragraph{An exact order-reduced dual.}
The automatic ordering condition can reduce the scalar maximization in
\eqref{eq:hldualfunctional}.  Define the observable, candidate-specific
correspondence
\begin{equation}
  \mathcal K_\theta^{\mathrm{nc}}(x)
  =
  \left\{
  a\in K_\theta:
  s_\tau
  \left[x_t'(b_1-b_0)+(r-1)a\right]\geq0
  \text{ for every }t
  \right\}.
  \label{eq:hlncanchor}
\end{equation}
If this set is empty on a set of regressor paths having positive probability,
\(\theta\) is infeasible.  If it is nonempty on the support of \(X\), define
\begin{equation}
  \Psi_\theta^{\mathrm{nc}}(\lambda)
  =
  \E_{\mathrm{obs}}
  \left[
    \max_{a\in\mathcal K_\theta^{\mathrm{nc}}(X)}
    S_\theta(W,a;\lambda)
  \right].
  \label{eq:hlncdual}
\end{equation}
Then
\begin{equation}
  \theta\in\Theta_I^{0,\tau}
  \quad\Longleftrightarrow\quad
  \Psi_\theta^{\mathrm{nc}}(\lambda)\geq0
  \quad\text{for every }\lambda\geq0.
  \label{eq:hlnccriterion}
\end{equation}
Indeed, Lemma \ref{lem:hlquantileorder} shows that restricting couplings to the
closed graph of \(\mathcal K_\theta^{\mathrm{nc}}\) removes no feasible coupling.
Repeating the proof of Theorem \ref{thm:hlobservabledual} on that compact graph
gives \eqref{eq:hlnccriterion}.  Thus the order reduction is exact, rather than an
outer approximation.

\subsection{Exact finite-support primal and dual programs}
\label{subsec:hlfinitedual}

Fix a positive-probability regressor path \(X=x\), and suppose
\begin{equation}
  P^{\mathrm{obs}}(Y=y^j\mid X=x)=p_j>0,
  \qquad
  j=1,\ldots,J.
  \label{eq:hlfinitesupport}
\end{equation}
For a candidate \(\theta\), set
\begin{align}
  z_{j0t}(\theta)
  &=
  y_t^j-x_t'b_0,
  \notag\\
  z_{j1t}(\theta)
  &=
  \gamma y_t^j-x_t'd.
  \label{eq:hlfiniteresiduals}
\end{align}
Let \(v_1<\cdots<v_P\) be the distinct values among the \(2TJ\) numbers in
\eqref{eq:hlfiniteresiduals}; hence \(P\leq2TJ\).  Introduce
\[
  q_{jp}
  =
  P^\ast(Y=y^j,A=v_p\mid X=x).
\]
Consider the linear feasibility problem
\begin{align}
  q_{jp}&\geq0,
  &&j\leq J,\quad p\leq P,
  \label{eq:hlprimalnonnegative}\\
  \sum_{p=1}^Pq_{jp}&=p_j,
  &&j\leq J,
  \label{eq:hlprimalmarginal}\\
  \sum_{j=1}^Jq_{jp}
  \left\{
    \one(z_{j\ell t}(\theta)\leq v_p)-q_\ell
  \right\}&\geq0,
  &&\ell=0,1,\quad t\leq T,\quad p\leq P,
  \label{eq:hlprimalweak}\\
  \sum_{j=1}^Jq_{jp}
  \left\{
    q_\ell-\one(z_{j\ell t}(\theta)<v_p)
  \right\}&\geq0,
  &&\ell=0,1,\quad t\leq T,\quad p\leq P.
  \label{eq:hlprimalstrict}
\end{align}

\begin{theorem}[Exact finite-support joint-quantile LP]
\label{thm:hlprimal}
For the path \(X=x\), a joint median--\(\tau\) coupling exists for candidate
\(\theta\) if and only if
\eqref{eq:hlprimalnonnegative}--\eqref{eq:hlprimalstrict} is feasible.
Thus, when \(X\) has finite support, \(\theta\in\Theta_I^{0,\tau}\) if and only
if the LP is feasible in every positive-probability \(X\)-cell.
\end{theorem}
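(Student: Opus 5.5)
The plan is to mirror the proof of Theorem~\ref{thm:lp}, with the $T$ residual coordinates replaced by the $2T$ transformed coordinates $z_{j\ell t}(\theta)$ and the common constant $\tau$ replaced by the coordinate-specific constants $q_\ell$. First, by Proposition~\ref{prop:hlcouplingsharp} and the pathwise separation of Remark~\ref{rem:globalpathwise}, feasibility of $\theta$ is equivalent to the existence, in each positive-probability cell $X=x$, of a conditional law of $(Y,A)$ with $Y$-marginal $(p_j)$ satisfying \eqref{eq:hlcoupleweak}--\eqref{eq:hlcouplestrict} for $\ell=0,1$ and all $t$. This is because the conditional law of $A$ given $X$ is unrestricted across cells, so cellwise couplings can be glued into a global one. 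That gives the ``thus'' part of the theorem once the cellwise equivalence is proved.

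\emph{Sufficiency within a cell.} Given a feasible table $(q_{jp})$, define $P^\ast(Y=y^j,A=v_p\mid X=x)=q_{jp}$. The row constraints \eqref{eq:hlprimalmarginal} reproduce the observed marginal. For each column with $w_p=\sum_j q_{jp}>0$, dividing \eqref{eq:hlprimalweak}--\eqref{eq:hlprimalstrict} by $w_p$ gives exactly the two conditional $q_\ell$-quantile inequalities at $A=v_p$. Columns with $w_p=0$ are null events. Hence the table is a feasible coupling.

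\emph{Necessity within a cell: the residual-value reduction.} Take any feasible conditional coupling, with anchor law $\mu$ on $\mathbb R$. By Lemma~\ref{lem:hlanchorcompact}, applied within the cell, $\mu$ is concentrated on $[v_1,v_P]$. Define the map $\phi(a)=\max\{v_p:v_p\le a\}$, and move all mass at $a$ to $\phi(a)$. Because no residual value lies in $(\phi(a),a]$, each weak event $\{z_{j\ell t}\le a\}$ is unchanged, while each strict event $\{z_{j\ell t}<a\}$ can only shrink. So at every $a$ the conditional law of $Y$ given $A=a$ still satisfies both inequalities at $\phi(a)$.

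The main obstacle is the aggregation step: several values $a$ are sent to the same $v_p$, so the new conditional law at $v_p$ is a $\mu$-mixture of the conditional laws at those values. I would handle this with integrated moment constraints rather than conditional statements. Set
\[
q_{jp}=\int_{\phi^{-1}(v_p)}P(Y=y^j\mid A=a)\,\mu(da).
\]
Each constraint \eqref{eq:hlprimalweak}--\eqref{eq:hlprimalstrict} for column $p$ is then the integral over $\phi^{-1}(v_p)$ of the corresponding pointwise nonnegative expressions $\sum_j P(Y=y^j\mid A=a)\{\one(z_{j\ell t}\le v_p)-q_\ell\}\ge0$, and likewise for the strict event. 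Linearity preserves nonnegativity, so these constraints hold. Measurability of $a\mapsto P(Y=y^j\mid A=a)$ comes from a regular conditional distribution, as in Remark~\ref{rem:kernels}. Row sums equal $p_j$ because $\phi^{-1}(v_p)$, $p=1,\dots,P$, partition $[v_1,v_P]$, which carries all of $\mu$'s mass. Nonnegativity of the $q_{jp}$ is immediate, so the table is feasible.

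Positivity of $r$ (here $\gamma>0$) plays no role beyond what Proposition~\ref{prop:hlcouplingsharp} already uses to write the model in the $Z_{\ell t}$ form. Everything after that is pure ordering of finitely many real numbers.
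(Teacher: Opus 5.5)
Your proposal is correct and follows essentially the same route as the paper. The paper also confines the anchor to $[v_1,v_P]$ via Lemma~\ref{lem:hlanchorcompact}, moves all anchor mass in each gap $(v_p,v_{p+1})$ down to $v_p$, and uses $\one(z\leq a)=\one(z\leq v_p)$ and $\one(z<v_p)\leq\one(z<a)$ to show that the weak restrictions are unchanged while the strict ones are relaxed. Your integrated definition of $q_{jp}$ over $\phi^{-1}(v_p)$ simply makes explicit the aggregation of mass sent to a common residual value, which the paper's proof leaves implicit.
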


\begin{proof}
LP feasibility directly constructs the joint law of the observed support point and
the common anchor.  Conditional on an anchor having positive marginal mass,
division of
\eqref{eq:hlprimalweak}--\eqref{eq:hlprimalstrict} by that mass gives exactly
\eqref{eq:hlcoupleweak}--\eqref{eq:hlcouplestrict}.

Conversely, begin with any feasible coupling whose observed conditional support is
\eqref{eq:hlfinitesupport}.  Lemma \ref{lem:hlanchorcompact} excludes anchor mass
below \(v_1\) or above \(v_P\).  Keep mass already located at a residual value.
For \(a\in(v_p,v_{p+1})\), move its joint mass, support point by support point,
to \(v_p\).  For every transformed residual value \(z\),
\[
  \one(z\leq a)=\one(z\leq v_p),
  \qquad
  \one(z<v_p)\leq\one(z<a).
\]
Hence all weak lower-probability restrictions are unchanged, while every strict
upper-probability restriction is weakly relaxed.  Applying the move to every gap
produces masses satisfying
\eqref{eq:hlprimalnonnegative}--\eqref{eq:hlprimalstrict}.  This is an exact,
candidate-specific anchor reduction; it is not a numerical discretization of
\(A\).
\end{proof}

For nonnegative numbers
\(\lambda_{\ell pt}^-\) and \(\lambda_{\ell pt}^+\), define
\begin{align}
  s_{jp}(\theta;\lambda)
  =
  \sum_{\ell=0}^1\sum_{t=1}^T
  &\lambda_{\ell pt}^-
  \left\{
    \one(z_{j\ell t}(\theta)\leq v_p)-q_\ell
  \right\}
  \notag\\
  &+
  \lambda_{\ell pt}^+
  \left\{
    q_\ell-\one(z_{j\ell t}(\theta)<v_p)
  \right\},
  \label{eq:hlfinitedualscore}\\
  \Psi_{x,\theta}^{\mathrm{fin}}(\lambda)
  &=
  \sum_{j=1}^Jp_j
  \max_{1\leq p\leq P}s_{jp}(\theta;\lambda).
  \label{eq:hlfinitedualfunctional}
\end{align}

\begin{corollary}[Exact finite-support observable dual]
\label{cor:hlfinitedual}
The primal LP
\eqref{eq:hlprimalnonnegative}--\eqref{eq:hlprimalstrict} is feasible if and
only if
\begin{equation}
  \Psi_{x,\theta}^{\mathrm{fin}}(\lambda)\geq0
  \quad\text{for every }\lambda\geq0.
  \label{eq:hlfinitedualcriterion}
\end{equation}
If the primal is infeasible, a nonnegative \(\lambda\) makes
\(\Psi_{x,\theta}^{\mathrm{fin}}(\lambda)<0\).
\end{corollary}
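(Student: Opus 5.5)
The plan is to reduce the corollary to a finite-dimensional theorem of the alternative, following the argument behind Proposition~\ref{prop:dualcertificate}. The joint primal LP \eqref{eq:hlprimalnonnegative}--\eqref{eq:hlprimalstrict} has the same structure as the baseline pathwise LP. There are nonnegative unknowns \(q_{jp}\) and \(J\) row-total equalities. The baseline's \(2TP\) homogeneous column inequalities are replaced by \(4TP\) inequalities, indexed by \((\ell,t,p)\) and the weak/strict sign. Each inequality is linear in the column \((q_{jp})_j\), with coefficients \(\one(z_{j\ell t}\leq v_p)-q_\ell\) or \(q_\ell-\one(z_{j\ell t}<v_p)\). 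The baseline proof never used that all coordinates share one quantile constant, so it should transfer verbatim once coordinates are indexed by \(i=(\ell,t)\) with constant \(q_i\).

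\emph{Necessity.} Take a feasible table \(q\) and any \(\lambda\geq0\). Multiply each column inequality by its weight \(\lambda^\pm_{\ell pt}\) and sum over \((\ell,t,\pm)\) and over \(p\). This gives \(\sum_{j,p}q_{jp}s_{jp}(\theta;\lambda)\geq0\). Next bound \(s_{jp}\leq\max_{p'}s_{jp'}\) and use \(\sum_p q_{jp}=p_j\). This yields \(\Psi^{\mathrm{fin}}_{x,\theta}(\lambda)\geq0\).

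\emph{Sufficiency.} Argue by contraposition using Farkas' lemma, or equivalently LP duality for the program ``maximize \(0\) subject to the primal constraints.'' Write the primal as the existence of \(q\geq0\) with \(\sum_pq_{jp}=p_j\) and \(M_{(\ell,t,p,\pm)}(q)\geq0\). The alternative system asks for multipliers \(\mu_j\) (free) for the row equalities and \(\lambda\geq0\) for the column inequalities. These must satisfy \(\mu_j\geq s_{jp}(\theta;\lambda)\) for all \(j,p\), which is the dual feasibility condition from the sign of the \(q_{jp}\) variables, together with \(\sum_jp_j\mu_j<0\). Choosing \(\mu_j=\max_ps_{jp}\) shows that the alternative holds exactly when some \(\lambda\geq0\) has \(\Psi^{\mathrm{fin}}_{x,\theta}(\lambda)<0\). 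Hence primal infeasibility produces such a \(\lambda\), which also gives the final sentence of the statement. One can equivalently normalize \(\sum\lambda\leq1\) and observe that the resulting minimization has value \(0\) if and only if the primal is feasible, mirroring \eqref{eq:dualcertificatezero}--\eqref{eq:dualcertificatenegative}.

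The only step needing care is writing the Farkas alternative with the correct signs. The row constraints are equalities, so their multipliers are free. The columns carry homogeneous inequalities, and \(q\geq0\) turns the stationarity condition into an inequality. The strict \(<0\) in the alternative must be matched with the positive right-hand side \(p_j>0\). I expect this bookkeeping, rather than any conceptual issue, to be the main obstacle. In particular, no compactness or minimax argument is needed, since everything is finite-dimensional, and Theorem~\ref{thm:hlprimal} then transfers the conclusion to joint couplings.
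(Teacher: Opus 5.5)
Your proposal is correct and matches the paper's proof: write the primal as $q\geq0$, $Bq=p$, $Gq\geq0$, apply the Farkas alternative with free multipliers on the row totals, and set those multipliers equal to $\max_p s_{jp}(\theta;\lambda)$. The only difference is that you prove the easy direction by direct weighted summation (weak duality) rather than reading it off the alternative, which changes nothing of substance.
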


\begin{proof}
Stack the masses \(q_{jp}\) into \(q\), write the marginal equalities as
\(Bq=p\), and write the quantile inequalities as \(Gq\geq0\).  The coefficient
of \(q_{jp}\) in a nonnegative linear combination of the rows of \(G\) is
\(s_{jp}(\theta;\lambda)\).  Farkas' alternative says that the primal is
infeasible if and only if there are \(u\in\mathbb R^J\) and \(\lambda\geq0\)
such that
\[
  u_j\geq s_{jp}(\theta;\lambda)
  \quad\text{for every }j,p,
  \qquad
  \sum_{j=1}^Jp_ju_j<0.
\]
Minimizing the left side over admissible \(u_j\) sets
\(u_j=\max_ps_{jp}(\theta;\lambda)\), which gives exactly a strict violation of
\eqref{eq:hlfinitedualcriterion}.  The converse follows by taking these epigraph
values as \(u_j\).
\end{proof}

A normalized certificate can be obtained from the LP
\begin{equation}
  \delta_x(\theta)
  =
  \min_{\lambda,\,\xi}
  \sum_{j=1}^Jp_j\xi_j
  \label{eq:hlcertificateobjective}
\end{equation}
subject to
\begin{align}
  \xi_j&\geq s_{jp}(\theta;\lambda),
  &&j\leq J,\quad p\leq P,
  \label{eq:hlcertificateepigraph}\\
  \lambda_{\ell pt}^-,\lambda_{\ell pt}^+&\geq0,
  &&\ell=0,1,\quad p\leq P,\quad t\leq T,
  \label{eq:hlcertificatenonnegative}\\
  \sum_{\ell=0}^1\sum_{p=1}^P\sum_{t=1}^T
  \left(
    \lambda_{\ell pt}^-+\lambda_{\ell pt}^+
  \right)&\leq1.
  \label{eq:hlcertificatenormalization}
\end{align}
The zero multiplier is feasible, and Corollary \ref{cor:hlfinitedual} implies
\begin{equation}
  \delta_x(\theta)=0
  \quad\Longleftrightarrow\quad
  \theta\text{ is feasible in path }x,
  \qquad
  \delta_x(\theta)<0
  \quad\Longleftrightarrow\quad
  \theta\text{ is infeasible in path }x.
  \label{eq:hlcertificateinterpretation}
\end{equation}

\begin{algorithm}[Exact two-quantile membership oracle]
\label{alg:hlmembership}
For a candidate
\(\theta=(b_0,b_1,r)\) with \(r>0\):
\begin{enumerate}[label=\arabic*.]
\item Form \(\gamma=1/r\), \(d=\gamma b_1\), and compute the two residual
      families in \eqref{eq:hlfiniteresiduals}.
\item In each positive-probability regressor path, sort their distinct values to
      obtain the exact anchor list \(v_1,\ldots,v_P\).
\item Solve either the primal feasibility problem
      \eqref{eq:hlprimalnonnegative}--\eqref{eq:hlprimalstrict} or the normalized
      dual-certificate problem
      \eqref{eq:hlcertificateobjective}--\eqref{eq:hlcertificatenormalization}.
\item Accept the joint candidate if and only if every path is primal feasible,
      equivalently if and only if every path has \(\delta_x(\theta)=0\).
\item To construct \(\Theta_{I,\tau}^{\mathrm{pt}}\), retain \((b_1,r)\) whenever
      at least one \(b_0\in\mathcal B_0\) is accepted.
\end{enumerate}
\end{algorithm}

\paragraph{Size and full-set construction.}
In one path, \(P\leq2TJ\).  The primal uses at most
\(JP\leq2TJ^2\) nonnegative variables, \(J\) marginal equalities, and
\(4TP\leq8T^2J\) quantile inequalities.  The transformed parameterization is
also useful for constructing the whole set in low dimension.  Every indicator
array can change only on a residual-tie hyperplane of the form
\begin{equation}
  \gamma_\ell y_t^j-x_t'd_\ell
  =
  \gamma_m y_s^h-x_s'd_m,
  \qquad
  \ell,m\in\{0,1\},
  \label{eq:hlhyperplanes}
\end{equation}
where
\[
  (\gamma_0,d_0)=(1,b_0),
  \qquad
  (\gamma_1,d_1)=(\gamma,d).
\]
Equation \eqref{eq:hlhyperplanes} is affine in
\((b_0,d,\gamma)\).  Consequently, for finite support and low-dimensional
parameters, one may enumerate all relatively open cells and faces of this
hyperplane arrangement, solve one LP per face, take the union of accepted faces,
and finally recover
\[
  r=\frac1\gamma,
  \qquad
  b_1=\frac d\gamma.
\]
This avoids an exogenous grid over either \(A\) or \(\rho(\tau)\).

\subsection{Scope of the two-quantile result}
\label{subsec:hlscope}

\begin{remark}[Two-quantile versus functional sharpness]
The projected set \(\Theta_{I,\tau}^{\mathrm{pt}}\) is sharp under the restrictions imposed
at \(1/2\) and at the specified \(\tau\).  If the model is maintained
simultaneously for every \(q\) in a larger set \(\mathcal Q\), all those quantile
restrictions must use the same coupling with \(A\).  For a finite grid
\(\mathcal Q\), the theorem and the LP extend exactly by adding one transformed
residual family and \(2T\) multiplier arrays for each \(q\).  The projection of
that joint set onto \((\beta(\tau),\rho(\tau))\) can be strictly smaller than the
two-quantile set.  A continuum of quantile indices leads to an
infinite-dimensional multiplier problem and requires additional measurability and
compactness conditions; it is not established merely by solving the problem on a
finite quantile grid.
\end{remark}

\begin{remark}[Negative or zero loadings]
The common-anchor transformation
\eqref{eq:hltauresidual} uses \(r>0\).  A negative loading reverses inequalities
when the outcome is rescaled and therefore is not covered by the stacked-residual
formulation above; a zero loading removes \(A\) from the \(\tau\)-quantile
altogether.  Both cases can be analyzed directly with the threshold indicators
\[
  \one\{Y_t\leq X_t'b_1+rA\},
  \qquad
  \one\{Y_t<X_t'b_1+rA\},
\]
but they require separate parameter regions and anchor bounds.  Thus positivity
should be reported as a maintained relative-loading restriction, not described as
without loss of generality.
\end{remark}

\subsection{Additional finite-support outer screens}
\label{subsec:hlscreens}

\begin{corollary}[Combined support overlap]
\label{cor:hlsupportoverlap}
For a candidate \(\theta=(b_0,b_1,r)\), let
\[
  \ell_{\ell t}(x,\theta)
  =
  \operatorname*{ess\,inf}
  \{Z_{\ell t}(\theta)\mid X=x\},
  \qquad
  u_{\ell t}(x,\theta)
  =
  \operatorname*{ess\,sup}
  \{Z_{\ell t}(\theta)\mid X=x\}.
\]
If \(\theta\in\Theta_I^{0,\tau}\), then, for almost every \(x\),
\[
  \bigcap_{\ell=0}^1\bigcap_{t=1}^T
  [\ell_{\ell t}(x,\theta),u_{\ell t}(x,\theta)]
  \neq\varnothing.
\]
Under finite conditional support this is an extrema-only outer screen.
\end{corollary}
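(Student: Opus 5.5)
The plan is to mimic the proof of Lemma~\ref{lem:overlap}, replacing the $T$ residual coordinates by the $2T$ transformed coordinates $Z_{\ell t}(\theta)$. Take $\theta\in\Theta_I^{0,\tau}$ and a feasible coupling $\pi\in\mathcal C_\theta^{0,\tau}(\Pobs)$. Because $\pi$ has $(Y,X)$-marginal $\Pobs$, disintegration gives the following. Outside a $P_X$-null set of $x$, and for $P_\pi(A\in da\mid X=x)$-almost every $a$, the conditional law of each $Z_{\ell t}(\theta)$ given $(X,A)=(x,a)$ is supported on $[\ell_{\ell t}(x,\theta),u_{\ell t}(x,\theta)]$. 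The justification is that mixing these conditional laws over $a$ reproduces the observed conditional law given $X=x$. That law puts no mass outside the essential range, so almost every component law puts no mass there either.

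Next, fix such an $x$ and an anchor value $a$ in the full-measure set. Suppose first that $a<\ell_{\ell t}(x,\theta)$ for some $(\ell,t)$. Then $\pi\{Z_{\ell t}(\theta)\leq a\mid X=x,A=a\}=0<q_\ell$, which contradicts \eqref{eq:hlcoupleweak}. Suppose instead that $a>u_{\ell t}(x,\theta)$. Then $\pi\{Z_{\ell t}(\theta)<a\mid X=x,A=a\}=1>q_\ell$, which contradicts \eqref{eq:hlcouplestrict}. Both contradictions use $q_\ell\in(0,1)$ for $\ell=0,1$, which holds because $q_0=1/2$ and $q_1=\tau\in(0,1)$. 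Hence almost every anchor used by the coupling at $x$ lies in every interval $[\ell_{\ell t}(x,\theta),u_{\ell t}(x,\theta)]$.

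Since $P_\pi(A\in\cdot\mid X=x)$ is a probability measure, it charges at least one such $a$, so the intersection over $\ell\in\{0,1\}$ and $t\le T$ is nonempty, as claimed. For the final sentence about finite support: when the conditional support of $Y$ given $X=x$ is finite, the essential infimum and supremum are simply the minimum and maximum of $z_{j\ell t}(\theta)$ over $j$. The check therefore reduces to comparing $\max_{\ell,t}\min_j z_{j\ell t}$ with $\min_{\ell,t}\max_j z_{j\ell t}$, which uses only extrema of the observed support.

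The one delicate point is the measure-theoretic step of passing from the observed conditional support to the support of the conditional law given $(x,a)$ for almost every $a$. I would handle it exactly as in Lemma~\ref{lem:overlap}. Alternatively, it can be done directly: the event $\{Z_{\ell t}(\theta)<\ell_{\ell t}(X,\theta)\}$ has $\Pobs$-probability zero, so by iterated expectations its conditional probability given $(X,A)$ vanishes $\pi$-almost surely, and the same holds for the upper endpoint. Everything else is routine.
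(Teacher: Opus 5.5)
Your proposal is correct and is essentially the paper's own argument: the stacked-coordinate version of Lemma~\ref{lem:overlap}. Anchors below an interval are excluded by the weak restriction and anchors above it by the strict one, both using $0<q_\ell<1$ for $\ell=0,1$. Your direct iterated-expectations justification, that $\{Z_{\ell t}(\theta)<\ell_{\ell t}(X,\theta)\}$ is $\Pobs$-null and hence has zero conditional probability given $(X,A)$, makes the almost-sure bookkeeping more explicit than the paper's one-line proof does.
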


\begin{proof}
In a feasible joint coupling, the common anchor must lie in the conditional
support interval of every transformed residual.  Below an interval its weak
lower-quantile probability is zero; above an interval its strict
upper-quantile probability is one.  The proof is the stacked-coordinate version
of Lemma \ref{lem:overlap}.
\end{proof}

\begin{corollary}[Pairwise stacked-coordinate outer set]
\label{cor:hlpairwise}
For each pair of transformed coordinates \((\ell,s)\) and \((m,t)\), solve the
exact two-coordinate coupling LP while conditioning on the full regressor path.
The intersection of all accepted pairwise sets contains
\(\Theta_I^{0,\tau}\).  It is generally larger because the pair-specific anchors
need not be compatible with one common anchor for all \(2T\) coordinates.
\end{corollary}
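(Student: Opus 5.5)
The plan is to show, for a fixed pair of stacked coordinates, that a feasible joint coupling for all \(2T\) coordinates restricts to a feasible coupling for that pair. Then \(\theta\) passes every pairwise LP, and the intersection of the accepted sets contains \(\Theta_I^{0,\tau}\). Strictness will come from a counterexample, which I will not construct from scratch.

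Fix \(\theta\in\Theta_I^{0,\tau}\). By Proposition~\ref{prop:hlcouplingsharp}, or by the definition in \eqref{eq:hljointset}, there is a coupling \(\pi\in\mathcal C_\theta^{0,\tau}(\Pobs)\). Its conditional law given \(X=x\) satisfies \eqref{eq:hlcoupleweak}--\eqref{eq:hlcouplestrict} for all \(2T\) coordinates \((\ell,t)\), conditional on \((X,A)\). Take the pair \((\ell,s)\), \((m,t)\) and keep exactly the same law of \((Y,X,A)\).

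\begin{itemize}
\item The observed marginal is unchanged.
\item The two quantile restrictions for these coordinates hold on the same conditioning field \(\sigma(X,A)\), since dropping the other \(2T-2\) restrictions only removes constraints.
\end{itemize}

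So the same coupling is feasible for the two-coordinate problem with the full regressor path retained in the conditioning. Conditioning on the full path is what lets one coupling serve both problems: no iterated-expectations coarsening is needed, unlike in Proposition~\ref{prop:Tnesting}.

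Under finite support, I will transfer this to the LP level using the argument of Theorem~\ref{thm:hlprimal}, specialized to two coordinates. Starting from any feasible two-coordinate coupling in path \(x\):

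\begin{itemize}
\item The stacked version of Lemma~\ref{lem:hlanchorcompact} confines the anchor to the hull of the pair's residual values.
\item Anchor mass lying in gaps between consecutive residual values of the pair is moved down to the lower value. Weak events are unchanged and strict events are relaxed.
\end{itemize}

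This yields a feasible pairwise table. Hence \(\theta\) is accepted by every pairwise LP in every positive-probability path, and it lies in the intersection.

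For the claim that the outer set is "generally larger," an example suffices. The natural route is to reuse the \(T=3\) configuration already given in Section~\ref{subsec:fullset}. That example has four residual triples that are pairwise feasible but jointly infeasible, with criterion value \(-1/112\). I embed it in the stacked setting, for instance by taking \(\tau\) close to \(1/2\), or by using three same-quantile coordinates \(\ell=0\) with \(T=3\). The median family alone then reproduces the example, and the remaining \(\tau\)-coordinates can be chosen so that they do not interfere. The obstacle I expect is verifying that the added \(\tau\)-family neither breaks pairwise feasibility nor rescues joint feasibility. Since the corollary only asserts that strictness holds in general, it is enough to point to the baseline example as the case in which the pair-specific anchors are incompatible.
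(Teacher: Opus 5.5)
Your containment argument matches the paper's: you restrict a feasible joint coupling in $\mathcal C_\theta^{0,\tau}(\Pobs)$ to the chosen pair, note that the two retained quantile restrictions still hold on the same field $\sigma(X,A)$, and pass to the exact pairwise LP in each path via the two-coordinate version of the residual-value reduction in Theorem~\ref{thm:hlprimal}. For ``generally larger,'' the paper gives only the heuristic you end with (pair-specific anchors need not be compatible), so your unfinished embedding is not needed; if you do pursue it, adding the $\tau$-coordinates can never rescue joint infeasibility of the median triple because it only adds constraints, and the real work is checking feasibility of the cross-quantile and $\tau$--$\tau$ pairwise LPs, which is not automatic for $\tau\neq1/2$ since the baseline's equal-weight columns no longer qualify.
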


Together with Proposition \ref{prop:hlcrossing}, these results give the
loading-model cascade
\[
  \begin{gathered}
  \text{combined support overlap}
  \ \longrightarrow\
  \text{within- and cross-quantile crossing}\\
  {}\longrightarrow
  \text{pairwise stacked LPs}
  \ \longrightarrow\
  \text{full joint LP}.
  \end{gathered}
\]
For continuous designs, the dual-sieve construction in
Sections~\ref{subsec:continuous}--\ref{subsec:sieveevaluation} applies after
replacing the \(T\) residual coordinates and common quantile constant by the
\(2T\) coordinates \(Z_{\ell t}\) and their constants \(q_\ell\).  The
interval-envelope alternative noted in Appendix~\ref{app:envelope} can likewise
be applied to these transformed coordinates.  Envelope programs are outer;
representative-point and empirical calculations retain the numerical and
sampling qualifications described there.

\section{Monte Carlo experiments}
\label{sec:montecarlo}

The Monte Carlo analysis will have two purposes.  The first is economic: to
measure how the sharp identified set changes with the time dimension, the
support and distribution of the regressors, and the width of the composite
residual distribution.  The second is computational: to compare the support,
crossing, pairwise, primal, and dual procedures developed above.  The analysis
is deliberately separated from statistical inference.  All reported sets in
the baseline finite-support designs will be population sharp sets computed from
the known design probabilities; sample coverage will be studied in subsequent
work.

\subsection{Baseline designs}

The baseline data-generating process is
\begin{equation}
  Y_{it}=X_{it}'\beta_0+A_i+U_{it},
  \qquad
  U_{it}=\sigma_t(X_i,A_i)(V_{it}-\tau),
  \label{eq:mcdgp}
\end{equation}
where every scale \(\sigma_t\) is positive and bounded.  Each \(V_{it}\) is
marginally uniform on \([0,1]\) conditional on \((X_i,A_i)\), while the vector
\((V_{i1},\ldots,V_{iT})\) is generated from a Gaussian copula.  Varying the
copula correlation permits strong positive or negative serial dependence without
altering the conditional quantile restriction.  The individual effect is allowed
to depend on the complete regressor path through
\[
  A_i=m_A(X_i)+\eta_i,
  \qquad \eta_i\mid X_i\sim\operatorname{Unif}[-M_A,M_A].
\]
Thus the designs retain arbitrary correlation between \(A_i\) and \(X_i\),
while the bounded innovation and bounded scale generate a transparent
finite-width composite-residual envelope.

We will consider four sequences of designs.  First, finite-support regressor
paths will permit exact construction of the population identified set and a
direct comparison of the full-panel set with the three outer screens.  Second,
bounded continuous designs will vary the radius and directional shape of the
regressor support.  Third, jointly Gaussian designs will vary the smallest
eigenvalue of \(\operatorname{Var}(X_{i2}-X_{i1})\) and the envelope width
\(C_T\), thereby illustrating Example \ref{ex:Tgaussianpoint} and the distinction
between population point identification and identification driven by rare tail
paths.  Fourth, for \(T\in\{2,3,5,10\}\), we will vary the rate at which new
periods expand the difference body \(\mathcal K_T\).  These designs separate
the mechanical nesting of the identified sets from genuine contraction induced
by additional directional support.

For every design we will report componentwise widths, diameter, volume when the
dimension permits, and distance from the sharp set to \(\beta_0\).  We will
also report the anisotropic outer region \(\beta_0+C_T\mathcal K_T^\circ\)
and the high-quantile radial bounds over several values of \(\eta\).  Comparing
these objects will show how much information is lost by reducing the full
directional geometry to a single scalar modulus.

\subsection{Computation and the factor-loading extension}

In the finite-support designs, the complete hyperplane-face construction will be
used for one- and two-dimensional slopes.  For each candidate face we will record
which stage first rejects it, the number and size of the pathwise programs, and
the time required by the primal and dual formulations.  The dual multipliers will
be retained to assess whether simple crossing restrictions account for most
rejections or whether higher-order common-anchor compatibility is empirically
important.  For continuous designs, nested conservative partitions and nested
dual multiplier sieves will be compared; representative-point grids will be
reported only as numerical approximations.

For the loading model, conditional outcome distributions will be generated so
that the median and \(\tau\)-quantile selections are
\[
  v_{0t}=X_{it}'b_0+A_i,
  \qquad
  v_{1t}=X_{it}'b_1+r_0A_i,
\]
with the coefficient and support choices imposing the required order between
\(v_{0t}\) and \(v_{1t}\).  When \(\tau>1/2\), for example, conditional mass
\(1/2\) is placed at \(v_{0t}\), mass \(\tau-1/2\) at \(v_{1t}\), and the
remaining mass strictly above \(v_{1t}\); the construction is reversed when
\(\tau<1/2\).  Serial dependence is introduced by coupling these conditional
draws across periods while preserving their marginal probabilities.  We will
compare the projected sharp set for \((b_1,r_0)\) with the set obtained from
within-quantile restrictions alone.  The difference isolates the identifying
content of cross-quantile common-anchor restrictions for the relative loading.

\section{Conclusion}
\label{sec:conclusion}

This paper develops a sharp identification analysis for linear panel quantile
models with unrestricted individual heterogeneity and a fixed number of periods.
The maintained full-path restriction is quantile strict exogeneity, not
independence.  It does not justify differencing and does not make
the idiosyncratic errors independent over time.  Its identifying content instead
comes from the requirement that all observed residual coordinates be compatible
with one latent scalar anchor.  The coupling representation makes this content
explicit, the observable dual characterizes it sharply, the finite-support
linear programs make the sharp set constructive, and the nested dual sieve
extends the construction to continuous designs.  The dual values also supply
scalar objective functions: under compatibility, their sets of global optimizers
are the sharp set in finite support and nested outer sets for continuous
support.  This allows joint coefficient optimization and profiling, rather
than requiring a prespecified grid of separate membership tests.

The analysis also clarifies when a short panel is informative.  Time-invariant
regressor directions remain absorbed by the individual effect, and adding periods
alone need not shrink the set.  What matters is directional index variation
relative to the conditional support of the composite residual.  This distinction
both yields finite-\(T\) point identification under the conditions of the
Gaussian example and produces explicit contraction bounds as the number of
periods grows.  The baseline analysis is pointwise in the quantile index and
therefore permits a separate latent completion at each quantile.  The
factor-loading extension deliberately strengthens that model by imposing one
common anchor across quantiles; this joint restriction can identify economically
meaningful relative loadings.  Statistical inference for the resulting sets is
left for future work.

\clearpage
\appendix
\section{Proof of the observable duality theorem}
\label{app:dualityproof}

\begin{proof}[Proof of Theorem~\ref{thm:observabledual}]
The proof has four steps.

\textit{Step 1: conditional quantile restrictions as positive measures.}
Let
\[
  \Gamma_b
  =
  \{\pi\in\mathcal P(\mathcal W\times K_b):
    \pi_W=\Pobs\}.
\]
This is the collection of all couplings of the observed variables with an anchor
supported on \(K_b\).  It is convex and weakly compact.

Fix \(\pi\in\Gamma_b\), and let \(\nu_\pi\) be its \((X,A)\)-marginal.  For every
\(t\), define finite signed Borel measures on \(\mathcal X\times K_b\) by
\begin{align}
  M_{t,\pi}^-(B)
  &=
  \int
  \one\{(x,a)\in B\}
  \{\one(y_t-x_t'b\leq a)-\tau\}
  \,d\pi(y,x,a),
  \label{eq:signedmeasureweak}\\
  M_{t,\pi}^+(B)
  &=
  \int
  \one\{(x,a)\in B\}
  \{\tau-\one(y_t-x_t'b<a)\}
  \,d\pi(y,x,a).
  \label{eq:signedmeasurestrict}
\end{align}
Their Radon--Nikodym derivatives with respect to \(\nu_\pi\) are
\begin{align*}
  \frac{dM_{t,\pi}^-}{d\nu_\pi}(X,A)
  &=
  \pi\{R_t(b)\leq A\mid X,A\}-\tau,\\
  \frac{dM_{t,\pi}^+}{d\nu_\pi}(X,A)
  &=
  \tau-\pi\{R_t(b)<A\mid X,A\}.
\end{align*}
Consequently, \(\pi\) satisfies
\eqref{eq:coupleweak}--\eqref{eq:couplestrict} if and only if all \(2T\) signed
measures in \eqref{eq:signedmeasureweak}--\eqref{eq:signedmeasurestrict} are
nonnegative.  Since \(\mathcal X\times K_b\) is a compact metric space, a finite
signed Borel measure \(M\) is nonnegative if and only if
\(\int f\,dM\geq0\) for every \(f\in C_+(\mathcal X\times K_b)\).
It follows that
\begin{equation}
  \pi\text{ is feasible}
  \quad\Longleftrightarrow\quad
  L_b(\pi,\lambda):=
  \int S_b(w,a;\lambda)\,d\pi(w,a)\geq0
  \quad\text{for every }\lambda\geq0.
  \label{eq:lagrangianfeasible}
\end{equation}

\textit{Step 2: a normalized feasibility value.}
Let
\begin{equation}
  \Lambda_b^1
  =
  \left\{\lambda\in C_+(\mathcal X\times K_b)^{2T}:
  \sum_{t=1}^T
  \{\|\lambda_t^-\|_\infty+\|\lambda_t^+\|_\infty\}
  \leq1\right\}
  \label{eq:normalizedmultipliers}
\end{equation}
and define
\begin{equation}
  V_b
  =
  \sup_{\pi\in\Gamma_b}
  \inf_{\lambda\in\Lambda_b^1}L_b(\pi,\lambda).
  \label{eq:feasibilityvalue}
\end{equation}
Because \(0\in\Lambda_b^1\), the inner infimum is never positive.  By
\eqref{eq:lagrangianfeasible}, it equals zero if \(\pi\) is feasible.  If \(\pi\)
is infeasible, at least one of the signed measures in
\eqref{eq:signedmeasureweak}--\eqref{eq:signedmeasurestrict} is not nonnegative.
The continuous-function characterization of positive measures then supplies a
nonnegative continuous multiplier with a strictly negative integral; after
normalization, the inner infimum is strictly negative.

The function
\[
  \pi\longmapsto
  \inf_{\lambda\in\Lambda_b^1}L_b(\pi,\lambda)
\]
is upper semicontinuous: for fixed \(\lambda\), the integrand
\eqref{eq:dualscore} is bounded and upper semicontinuous, and an infimum of
upper-semicontinuous functions is upper semicontinuous.  Since \(\Gamma_b\) is
compact, the supremum in \eqref{eq:feasibilityvalue} is attained.  Therefore
\begin{equation}
  V_b=0
  \quad\Longleftrightarrow\quad
  \Gamma_b\text{ contains a feasible coupling},
  \qquad
  V_b<0\text{ otherwise}.
  \label{eq:valuefeasibility}
\end{equation}

\textit{Step 3: minimax.}
For fixed \(\lambda\geq0\), the score in \eqref{eq:dualscore} is upper
semicontinuous in \((w,a)\).  Indeed,
\(\{(w,a):y_t-x_t'b\leq a\}\) is closed, so its indicator is upper
semicontinuous, while \(\{(w,a):y_t-x_t'b<a\}\) is open, so the negative of its
indicator is upper semicontinuous.  Nonnegative continuous multiplication
preserves these properties.  Hence
\(\pi\mapsto L_b(\pi,\lambda)\) is affine and upper semicontinuous.  For fixed
\(\pi\), \(L_b(\pi,\lambda)\) is continuous and linear in \(\lambda\) under the
product sup norm.  The set \(\Gamma_b\) is compact and convex, and
\(\Lambda_b^1\) is convex.  Sion's minimax theorem \citep{sion1958}
therefore gives
\begin{equation}
  V_b
  =
  \inf_{\lambda\in\Lambda_b^1}
  \sup_{\pi\in\Gamma_b}L_b(\pi,\lambda).
  \label{eq:minimax}
\end{equation}

\textit{Step 4: eliminate the coupling.}
For any \(\pi\in\Gamma_b\),
\[
  L_b(\pi,\lambda)
  \leq
  \E_{\mathrm{obs}}
  \left[
    \max_{a\in K_b}S_b(W,a;\lambda)
  \right].
\]
Conversely, the score is Borel measurable and upper semicontinuous in \(a\) on
the compact set \(K_b\).  The measurable maximum theorem supplies a measurable
selector \(a_\lambda(w)\) from its argmax correspondence.  Coupling
\(A=a_\lambda(W)\) with \(W\sim\Pobs\) attains the preceding upper bound.  Thus
\begin{equation}
  \sup_{\pi\in\Gamma_b}L_b(\pi,\lambda)
  =
  \Psi_b(\lambda).
  \label{eq:pointwiseanchor}
\end{equation}
Combining \eqref{eq:valuefeasibility}, \eqref{eq:minimax}, and
\eqref{eq:pointwiseanchor} yields
\[
  b\in\ThetaI
  \quad\Longleftrightarrow\quad
  \inf_{\lambda\in\Lambda_b^1}\Psi_b(\lambda)=0.
\]
Since \(\Psi_b(0)=0\), the last equality holds if and only if
\(\Psi_b(\lambda)\geq0\) for every \(\lambda\in\Lambda_b^1\).  Positive
homogeneity extends the condition to the full cone
\(C_+(\mathcal X\times K_b)^{2T}\), proving
\eqref{eq:sharpdualcriterion}.  If \(b\notin\ThetaI\), then \(V_b<0\), so
\eqref{eq:minimax} supplies a multiplier satisfying
\eqref{eq:strictdualcertificate}.
\end{proof}

\section{Computational details and proofs}
\label{app:computation}

\subsection{The residual-value reduction and finite-support program}
\label{app:finiteproof}

\begin{proof}[Proof of Theorem~\ref{thm:lp}]
\textit{LP feasibility implies a coupling.}
Use \(q_{jp}\) as the joint probability of residual support point
\(z^j(b)\) and anchor \(v_p\).  Equation \eqref{eq:lpmarginal} gives the observed
residual marginal.  If the anchor mass
\(w_p=\sum_jq_{jp}\) is positive, divide
\eqref{eq:lpweak}--\eqref{eq:lpstrict} by \(w_p\).  The resulting inequalities say
exactly that \(v_p\) is a coordinatewise conditional \(\tau\)-quantile of the
residual vector given \(A=v_p\).  If \(w_p=0\), the corresponding restrictions are
vacuous.  Thus the \(q_{jp}\)'s construct the required coupling.

\textit{A coupling implies LP feasibility.}
Start from any feasible coupling of the finite residual vector with an arbitrary
real-valued anchor.  It cannot put anchor mass below \(v_1\): conditional on such an
anchor, \(P(R_t(b)\leq A\mid A)=0<\tau\) for every \(t\).
It cannot put anchor mass above \(v_P\): conditional on such an anchor,
\(P(R_t(b)<A\mid A)=1>\tau\) for every \(t\).

Keep anchor mass already at a residual value.  If
\(a\in(v_p,v_{p+1})\), move its mass to the lower endpoint \(v_p\), retaining its
joint mass with every residual support point.  For every residual coordinate \(z\),
\[
  \one(z\leq a)=\one(z\leq v_p),
  \qquad
  \one(z<v_p)\leq\one(z<a).
\]
The lower-quantile inequality is therefore unchanged, while the strict
upper-quantile probability can only fall, so that inequality is weakly relaxed.
Apply this move to every open gap and aggregate at equal residual values.  The
observed residual marginal is unchanged, and the resulting masses \(q_{jp}\) satisfy
\eqref{eq:lpnonnegative}--\eqref{eq:lpstrict}.

The cellwise statement \eqref{eq:lpsharpset} follows by disintegrating a global
coupling across the finitely many \(X\)-paths and, conversely, combining the
pathwise couplings with their observed path probabilities.
\end{proof}

\subsection{The finite-support dual and the criterion proof}
\label{subsec:finitedual}

Use the score \(s_{jp}(b;\lambda)\) defined in
\eqref{eq:finitedualscore}.  Its observable average after maximizing over the
candidate-effect columns is
\begin{equation}
  \Psi_{x,b}^{\mathrm{fin}}(\lambda)
  =\sum_{j=1}^Jp_j\max_{1\leq p\leq P}s_{jp}(b;\lambda).
  \label{eq:finitedualfunctional}
\end{equation}
The following finite-dimensional alternative proves the criterion
characterization in Proposition~\ref{prop:dualcertificate}.

\begin{corollary}[Exact finite-support observable dual]
\label{cor:finitedual}
For the path \(X=x\), the exact coupling LP
\eqref{eq:lpnonnegative}--\eqref{eq:lpstrict} is feasible if and only if
\begin{equation}
  \Psi_{x,b}^{\mathrm{fin}}(\lambda)\geq0
  \quad\text{for every }
  \lambda_{pt}^-,\lambda_{pt}^+\geq0.
  \label{eq:finitedualcriterion}
\end{equation}
If the primal coupling LP is infeasible, some nonnegative \(\lambda\) makes the
left-hand side of \eqref{eq:finitedualcriterion} strictly negative.
\end{corollary}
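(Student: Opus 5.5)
The plan is to apply Farkas' lemma to the pathwise primal \eqref{eq:lpnonnegative}--\eqref{eq:lpstrict}, mirroring the argument already given for Corollary~\ref{cor:hlfinitedual}. Stack the unknowns \(q_{jp}\) into a vector \(q\geq0\). Write the row-total equalities \eqref{eq:lpmarginal} as \(Bq=p\), where \(p=(p_1,\ldots,p_J)'\). Move \(\tau\sum_jq_{jp}\) to the left in \eqref{eq:lpweak}--\eqref{eq:lpstrict}, so the \(2TP\) quantile constraints read \(Gq\geq0\). The row of \(G\) indexed by \((p,t,-)\) has entry \(\one(z_{jt}(b)\leq v_p)-\tau\) in column \((j,p)\) and zeros elsewhere. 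The row indexed by \((p,t,+)\) has entry \(\tau-\one(z_{jt}(b)<v_p)\) in column \((j,p)\). The key bookkeeping observation is this: for nonnegative row weights \(\lambda_{pt}^\pm\), the \((j,p)\) coordinate of \(G'\lambda\) is exactly \(s_{jp}(b;\lambda)\) from \eqref{eq:finitedualscore}. This holds because each row of \(G\) touches only one column index \(p\).

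Next I invoke the Farkas alternative for the system \(\{q\geq0,\ Bq=p,\ Gq\geq0\}\). It is infeasible if and only if there exist \(u\in\R^J\) and \(\lambda\geq0\) such that \(B'u-G'\lambda\geq0\) componentwise and \(p'u<0\). Here \(B'u\) has \((j,p)\) coordinate \(u_j\), so the condition is \(u_j\geq s_{jp}(b;\lambda)\) for all \(j,p\), together with \(\sum_jp_ju_j<0\). To obtain the stated form I argue in both directions.

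\begin{itemize}
\item Given such \((u,\lambda)\), set \(u_j^\ast=\max_ps_{jp}(b;\lambda)\leq u_j\). Since every \(p_j>0\), this gives \(\Psi_{x,b}^{\mathrm{fin}}(\lambda)=\sum_jp_ju_j^\ast\leq\sum_jp_ju_j<0\).
\item Conversely, if some \(\lambda\geq0\) makes \(\Psi^{\mathrm{fin}}_{x,b}(\lambda)<0\), then \(u_j=\max_ps_{jp}(b;\lambda)\) is a Farkas certificate, so the primal is infeasible.
\end{itemize}

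This yields both the equivalence \eqref{eq:finitedualcriterion} and the strict-violation statement. For the easy direction, I will also note the direct weak-duality check. If \(q\) is feasible, then for any \(\lambda\geq0\),
\[
0\leq\lambda'Gq=\sum_{j,p}q_{jp}s_{jp}\leq\sum_jp_j\max_ps_{jp}.
\]
This makes the necessity transparent without appealing to Farkas.

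The only delicate step is stating the Farkas variant correctly, with mixed equality constraints, \(\geq\) constraints, and sign constraints. I would first convert \(Gq\geq0\) to \(Gq-s=0\) with a slack \(s\geq0\), and then apply the standard form \(\{Mz=c,\ z\geq0\}\) infeasible \(\iff\exists y:\ M'y\leq0,\ c'y>0\). The sign constraint on the slack block forces the multipliers on \(G\) to be nonnegative, up to a sign convention I will fix carefully. Once that is done, the rest is the identification of \(G'\lambda\) with the score, which I have already checked above.
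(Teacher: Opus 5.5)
Your proposal is correct and is essentially the paper's proof: both apply the Farkas alternative to $\{q\geq0,\ Bq=p,\ Gq\geq0\}$, identify the $(j,p)$ coordinate of $G'\lambda$ with $s_{jp}(b;\lambda)$, and convert between a certificate $(u,\lambda)$ and $\Psi_{x,b}^{\mathrm{fin}}(\lambda)<0$ by setting $u_j=\max_p s_{jp}(b;\lambda)$. Your slack-variable derivation of the Farkas variant, with its sign conventions checked, and your direct weak-duality argument for necessity are details the paper leaves implicit.
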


\begin{proof}
Stack the primal masses \(q_{jp}\) into a vector \(q\).  Write the marginal
equalities as \(Aq=p\).  Bring the quantile restrictions to the left-hand
side, orienting each as a nonnegative expression, and write them as
\(Gq\geq0\).
The coefficient of \(q_{jp}\) in the linear combination of the rows of \(G\) with
nonnegative weights \(\lambda\) is exactly \(s_{jp}(b;\lambda)\).

The alternative theorem for linear inequalities states that
\[
  q\geq0,\qquad Aq=p,\qquad Gq\geq0
\]
is infeasible if and only if there are \(u\in\R^J\) and \(\lambda\geq0\) such
that
\[
  A'u-G'\lambda\geq0,
  \qquad
  p'u<0.
\]
The first inequality is equivalent to
\[
  u_j\geq s_{jp}(b;\lambda)
  \qquad\text{for every }j,p.
\]
It follows that
\[
  \Psi_{x,b}^{\mathrm{fin}}(\lambda)
  =
  \sum_jp_j\max_ps_{jp}(b;\lambda)
  \leq
  \sum_jp_ju_j<0.
\]
Conversely, if \(\Psi_{x,b}^{\mathrm{fin}}(\lambda)<0\), set
\(u_j=\max_ps_{jp}(b;\lambda)\).  Then \((u,\lambda)\) satisfies the displayed
alternative system and certifies primal infeasibility.
\end{proof}

\begin{proof}[Proof of Proposition~\ref{prop:dualcertificate}]
The zero multiplier with \(d_j=0\) is feasible, so \(\delta_x(b)\leq0\).
If the primal is feasible, Corollary \ref{cor:finitedual} implies that every
feasible \(\lambda\) has
\(\sum_jp_j\max_ps_{jp}(b;\lambda)\geq0\).  Minimization over the epigraph
variables \(d_j\) therefore gives \(\delta_x(b)=0\).  If the primal is
infeasible, Corollary \ref{cor:finitedual} supplies a multiplier with a strictly
negative dual functional.  Positive homogeneity permits normalization to satisfy
\eqref{eq:dualcertificatenormalize}, giving \(\delta_x(b)<0\).
\end{proof}

\subsection{Program size, certificates, and numerical checks}
\label{subsec:primaldualcomputation}

In a path with \(J\) outcome vectors and \(P\leq JT\) distinct residual
values, the primal has \(JP\) nonnegative unknowns, \(J\) marginal
equalities, and \(2TP\) quantile inequalities.  The dual in
Proposition~\ref{prop:dualcertificate} has \(J+2TP\) unknowns and \(JP+1\)
principal constraints, in addition to the sign restrictions.  The dual evaluates the criterion used in the outer coefficient optimization;
the primal is useful when a rationalizing distribution is wanted.  A negative
dual value also supplies an explicit excluding inequality.  Their relative running times depend on the problem and
the solver, not only on the number of unknowns.

\paragraph{Checking a numerical answer.}
Choose and report a feasibility tolerance \(\epsilon_{\mathrm{LP}}>0\).
For a returned primal table, check nonnegativity, row totals, and both quantile
restrictions at that tolerance.  For a returned dual multiplier, check its
nonnegativity and normalization, and evaluate
\(\sum_jp_j\max_p s_{jp}(b;\lambda)\) using every candidate anchor.
A value below \(-\epsilon_{\mathrm{LP}}\) is a numerical exclusion with that
reported margin.  An exactly established zero optimum is equivalent to feasibility by
Proposition~\ref{prop:dualcertificate}; a rounded numerical value near zero
is not the same conclusion.  Check the solver's objective bounds or optimality
gap.  When a solver reports an ambiguous status, the other formulation can be
used as a check rather than as a required second solve at every coefficient.  Conflicting statuses call for rescaling, tighter tolerances, and
another solve, not classification from a ``nearly feasible'' flag.

Paths can be solved independently.  Solutions can be reused when nearby
coefficients induce the same residual orderings; in that case the programs
are identical.  The purpose of storing primal tables or dual multipliers is
to retain the evidence for the reported classification.

\paragraph{Adding dual constraints as needed.}
For a large dual, it is not necessary to load all \(JP\) constraints
\(d_j\geq s_{jp}(b;\lambda)\) initially.  Start with a nonempty subset of
anchor constraints for each \(j\), solve that program, and compute
\[
  p^\ast(j)\in\arg\max_{1\leq p\leq P}s_{jp}(b;\lambda).
\]
Add \(d_j\geq s_{j p^\ast(j)}(b;\lambda)\) whenever it is violated by more
than the chosen tolerance.  Repeat until no constraint is violated.  Because
the list is finite, this procedure terminates after finitely many additions,
up to solver tolerance.  A negative objective from an intermediate program
is not yet a certificate: all omitted anchor constraints must first be
checked.

\subsection{Constructing the set from residual orderings}
\label{app:orderings}

\begin{proof}[Proof of Proposition~\ref{prop:arrangement}]
Every LP coefficient involving \(b\) is an indicator of either
\(z_{mjt}(b)\leq z_{mhs}(b)\) or \(z_{mjt}(b)<z_{mhs}(b)\).  The weak and strict
ordering of every residual pair is fixed on a relatively open face of the
arrangement, including a face on which one or more ties hold.  Hence the distinct
anchor classes and all LP coefficients are constant on that face.  The
normalization in \eqref{eq:dualcertificatenormalize}, the observed
probabilities, and the path weights also stay fixed.  Hence each
\(\delta_{x^m}(b)\), and thus \(\delta(b)\) and \(\mathcal L(b)\), is
constant on the face.  Theorem~\ref{thm:lp} gives the corresponding primal
statement, and \eqref{eq:globaldualzeroset} gives \eqref{eq:faceunion}.
There are finitely many weak and strict residual orderings even without a
polyhedral restriction on \(\cB\).  Thus \(\mathcal L\) takes finitely
many values on any coefficient region.  This also justifies attainment in
nonempty slices and the profile characterization in
\eqref{eq:profiledualzeroset}.
\end{proof}

\begin{algorithm}[Full sharp set by cells and faces]
\label{alg:fullset}
\begin{enumerate}[label=\arabic*.]
\item Generate and deduplicate the hyperplanes \eqref{eq:tiehyperplane}, restricted
      to \(\cB\).
\item Enumerate all nonempty relatively open cells and lower-dimensional faces.
\item Choose one relative-interior point \(b_F\) from each face \(F\), evaluate
      the pathwise dual LPs, and form \(\delta(b_F)\).
\item Return the union of all faces with \(\delta(b_F)=0\), retaining their
      weak/strict boundary descriptions.  Under compatibility, this is the
      complete set of global maximizers of \(\delta\).
\end{enumerate}
Testing only full-dimensional cells is not exact: lower-dimensional tie faces can
have a different LP and must be checked.  
\end{algorithm}

The number of distinct nonvacuous hyperplanes is no larger than
\(\sum_m\binom{TJ_m}{2}\).  For a fixed coefficient dimension \(k\), an
arrangement of \(N\) hyperplanes has \(O(N^k)\) regions and faces, before
accounting for the boundary description of \(\cB\).  This explains why the
construction is most useful for low-dimensional slopes.  Preliminary outer
checks can discard a face before the full-panel program is solved.

\subsection{Pairwise and crossing outer screens}
\label{subsec:outer}

The following relaxations trade tightness for speed.  Each is necessary for full-panel
feasibility, so their intersection is also an outer set.

\paragraph{Pairwise sharp outer set.}
For every pair \(s<t\), retain the conditional law of \((Y_s,Y_t)\) given the
\emph{full} path \(X\), build the exact two-coordinate version of Algorithm
\ref{alg:membership}, and intersect the accepted sets:
\[
  \ThetaI\subseteq\Theta_I^{\mathrm{pair}}
  :=\bigcap_{s<t}\Theta_{I,st}.
\]
This follows by marginalizing any feasible full-panel coupling to the retained pair.
It is sharp for each retained pair but
generally not for the full panel because the pair-specific anchors need not be
compatible.  With finite support, each pairwise path LP has at most \(2J_m^2\)
variables and \(8J_m\) quantile inequalities.

For a candidate \(b\), reject as soon as a pairwise path program is infeasible.  Only surviving candidates need a full-panel calculation.

\paragraph{Crossing outer set.}
Proposition \ref{prop:crossing} gives
\begin{equation}
  \Theta_I^{\mathrm{cross}}
  =
  \bigcap_m\bigcap_{s\ne t}
  \left\{b:
  \sup_v P^{\mathrm{obs}}\{R_s(b)\leq v<R_t(b)\mid X=x^m\}
  \leq c_\tau\right\}.
  \label{eq:crossouterset}
\end{equation}
Then \(\ThetaI\subseteq\Theta_I^{\mathrm{cross}}\).  Under finite conditional
support, it is enough, for a fixed \(b\), to check thresholds among the distinct
residual values in that path.  No optimization over latent masses is required.
With continuous \(X\), the conditional moment inequalities
\eqref{eq:observablecmi} and a chosen instrument class give an observable sample
analogue, again an outer relaxation at the population level.

For each path, ordered pair, and distinct residual threshold, sum the observed probabilities of the crossing event.  Reject the candidate if any sum exceeds \(c_\tau\).

\begin{corollary}[Computable outer sets for model \textup{(A)}]
\label{cor:commonoutersets}
Let \(\Theta_I^{\mathrm{pair}}\), \(\Theta_I^{\mathrm{cross}}\), and
\(\Theta_I^{\mathrm{supp}}\) denote, respectively, the pairwise sharp,
crossing, and support-overlap sets defined above.  Then
\[
  \Theta_I
  \subseteq
  \Theta_I^{\mathrm{pair}}
  \cap
  \Theta_I^{\mathrm{cross}}
  \cap
  \Theta_I^{\mathrm{supp}}.
\]
None of the three containments is asserted to be an equality.  Passing all three
screens is therefore necessary but not sufficient for full-panel feasibility.
\end{corollary}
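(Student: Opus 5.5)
The plan is to prove the three containments one at a time and then intersect them. Each one says that a necessary condition for full-panel feasibility holds at every \(b\in\ThetaI\). No sharpness argument is needed, because the corollary asserts no equality.

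\emph{Support overlap.} The inclusion \(\ThetaI\subseteq\Theta_I^{\mathrm{supp}}\) is exactly \eqref{eq:supportouterset}. It follows from Lemma~\ref{lem:overlap}: nonemptiness of \(\bigcap_t[\ell_t(x,b),u_t(x,b)]\) is equivalent to \(\max_t\ell_t(x,b)\leq\min_tu_t(x,b)\) for \(P_X\)-almost every \(x\).

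\emph{Crossing.} The inclusion \(\ThetaI\subseteq\Theta_I^{\mathrm{cross}}\) follows from Proposition~\ref{prop:crossing}. That proposition gives \eqref{eq:crossing} for every \(v\), every ordered pair \(s\neq t\), and every positive-probability path \(x^m\); taking the supremum over \(v\) preserves the bound \(c_\tau\). The only care needed is the order of quantifiers. In the finite-path setting each path \(x^m\) has positive probability, so the almost-sure statement holds at every \(x^m\). Hence \(b\) lies in each set in the intersection \eqref{eq:crossouterset}.

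\emph{Pairwise.} For the inclusion \(\ThetaI\subseteq\Theta_I^{\mathrm{pair}}\), take \(b\in\ThetaI\). By Theorem~\ref{thm:sharp} there is a coupling \(\pi\in\cC_b(\Pobs)\). Fix a pair \(s<t\) and let \(\pi_{st}\) be the \((Y_s,Y_t,X,A)\)-marginal of \(\pi\).
\begin{itemize}
\item Its \((Y_s,Y_t,X)\)-marginal is the observed pair law.
\item The conditional probabilities \(\pi\{R_j(b)\leq A\mid X,A\}\) and \(\pi\{R_j(b)<A\mid X,A\}\) for \(j\in\{s,t\}\) depend only on the joint law of \((Y_j,X,A)\). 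They are therefore unchanged under marginalization, and the two quantile inequalities still hold conditional on the full path \(X\).
\end{itemize}
So \(\pi_{st}\) is a feasible two-coordinate coupling and \(b\in\Theta_{I,st}\). In the finite-support implementation, apply Theorem~\ref{thm:lp} to the retained pair to translate this into feasibility of the two-coordinate path LP.

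The only point requiring attention is this marginalization step, specifically that the conditioning \(\sigma\)-field \(\sigma(X,A)\) is not coarsened. Unlike Proposition~\ref{prop:Tnesting}, no iterated-expectation argument is needed, since dropping other outcome coordinates leaves \((X,A)\) intact. Intersecting the three inclusions gives the display. The closing remark, that passing all three screens is not sufficient, follows from Example~\ref{ex:crossingnotsharp}, the \(T=3\) triple example of Section~\ref{subsec:fullset}, and Example~\ref{ex:crossingnotsharp}'s observation that support overlap holds there.
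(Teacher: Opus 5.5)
Your proof is correct and follows the paper's own argument: Lemma~\ref{lem:overlap} gives the support-overlap screen, Proposition~\ref{prop:crossing} gives the crossing screen, and marginalizing a feasible full-panel coupling to a pair of periods gives the pairwise screen, since the full path \(X\) stays in the conditioning set. One small tightening concerns the non-sufficiency remark: Example~\ref{ex:crossingnotsharp} has \(T=2\), where the pairwise screen coincides with the full LP, so the actual witness is the \(T=3\) example alone, together with the observation that pairwise feasibility already implies the crossing inequalities and, because pairwise-overlapping intervals on the line share a common point, the support-overlap condition.
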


\begin{proof}
Support overlap follows from Lemma \ref{lem:overlap}, and the crossing
containment follows from Proposition \ref{prop:crossing}.  Marginalizing a
feasible \(T\)-coordinate common-anchor coupling to any pair of coordinates gives
feasibility of every pairwise LP.  Pairwise couplings constructed separately need
not share one joint anchor, so the converse need not hold.
\end{proof}
\subsection{Recovery by nested multiplier classes}
\label{app:sieveproof}

\begin{proof}[Proof of Proposition~\ref{prop:dualsieverecovery}]
Every feasible \(b\) satisfies all sieve inequalities by Theorem
\ref{thm:observabledual} and Remark \ref{rem:boreldual}; nesting gives the two
containments.  If \(b\notin\Theta_I\), Theorem
\ref{thm:observabledual} supplies a continuous nonnegative multiplier
\(\lambda\) with \(\Psi_b(\lambda)<0\).  Moreover,
\[
  |\Psi_b(\lambda)-\Psi_b(\widetilde\lambda)|
  \leq
  \sum_{t=1}^T
  \left(
    \|\lambda_t^- -\widetilde\lambda_t^-\|_\infty
    +
    \|\lambda_t^+ -\widetilde\lambda_t^+\|_\infty
  \right).
\]
A sufficiently accurate sieve approximation therefore also has a negative
dual value and excludes \(b\).
\end{proof}

For the construction in \eqref{eq:cellhatbasis}, the required nesting is
obtained by subdividing each regressor cell and retaining all old knots.
Uniform approximation follows if
\[
  \max_g\operatorname{diam}(D_g^{(Q)}\cap\mathcal X)\longrightarrow0,
  \qquad
  \max_\ell(\kappa_{\ell+1}-\kappa_\ell)\longrightarrow0.
\]
Indeed, evaluate a continuous nonnegative multiplier at a representative
point of each cell and at each knot, and interpolate in \(a\).  Uniform
continuity on the compact domain bounds the approximation error uniformly.
These representative points are used only to establish approximation of a
\emph{multiplier}; they do not replace the observed regressors in a residual.
Cell indicators are bounded Borel functions, so their inequalities are valid
by Remark~\ref{rem:boreldual}.  If \(K_b\) is a singleton, use a constant
basis function on that singleton instead of an anchor knot grid.

When \(X\) has finite support, apply the construction in each exact path
using its own compact residual interval \(K_{m,b}\).  The uniform
approximation then concerns \(a\) alone.  The set recovery in
Proposition~\ref{prop:dualsieverecovery} is pointwise in the candidate slope;
Hausdorff convergence requires additional uniformity and closedness conditions.

\subsection{Exact evaluation of the anchor maximum}
\label{app:breakpoints}

Fix a weighted point \(W_i\), a candidate \(b\), and weights of the form
\eqref{eq:cellhatbasis}.  Write \(S_i(a)=S_b(W_i,a;\lambda)\).  Between
successive members of \(\mathcal A_{i,Q}(b)\), all event indicators are
constant and every multiplier is affine in \(a\).  Hence \(S_i\) is affine
on each such open interval.  At a residual value \(c\), continuity of the
multiplier in \(a\) gives
\begin{align*}
  S_i(c)-S_i(c^-)
    &=\sum_{t:r_{it}(b)=c}\lambda_t^-(X_i,c)\geq0,\\
  S_i(c)-S_i(c^+)
    &=\sum_{t:r_{it}(b)=c}\lambda_t^+(X_i,c)\geq0.
\end{align*}
At a knot that is not a residual value, the score is continuous.  The endpoints
of \(K_b\) are included among the knots.  These observations prove
\[
  \max_{a\in K_b}S_b(W_i,a;\lambda)
  =\max_{a\in\mathcal A_{i,Q}(b)}S_b(W_i,a;\lambda).
\]
No evaluations at artificial points \(c\pm\varepsilon\) are needed.

Because the interpolation functions are nonnegative and sum to one, the
supremum norm of a multiplier is its largest knot value across nonempty
regressor cells.  Thus the auxiliary bounds \(z_t^\pm\) in
\eqref{eq:dualsievenormalization} implement the normalization in
\(\Lambda_{Q,b}^1\).  Empty cells can be omitted.  After duplicate breakpoints
are removed, there are at most \(N(L_Q+1+T)\) score constraints.  The program
has \(N+2TG_Q(L_Q+1)+2T\) unknowns, including the normalization bounds.

\subsection{Implementing and interpreting a sequence of sieve calculations}
\label{app:sieveimplementation}

The law used in the expectation should be held fixed while comparing sieve
resolutions.  With genuinely nested classes and the same anchor interval,
enlarging the class cannot raise the minimized dual value.  For the population
law this gives nested outer sets.  With a fixed weighted law it gives nesting
for that weighted-law calculation, not a sampling guarantee about the
population set.

\begin{algorithm}[Criterion-based dual-sieve computation]
\label{alg:sieve}
Fix a coefficient region, a law or weighted representation of that law, and
valid compact residual intervals for the coefficients considered.
\begin{enumerate}[label=\arabic*.]
\item At resolution \(Q\), choose exact regressor paths when \(X\) has finite
      support, or cells of the complete regressor-path space otherwise.
      Specify the rule for the anchor knots, including the interval endpoints.
\item For each objective evaluation requested by the coefficient optimizer,
      compute the actual residuals at \(b\), form
      \(\mathcal A_{i,Q}(b)\), and solve
      \eqref{eq:empiricaldualsieve}--\eqref{eq:dualsievenormalization}.
      Return the objective value and retain its optimizing weights and solver
      accuracy information.
\item Maximize this value over coefficients, or minimize its negative.  With
      the exact population law and a nonempty outer set, recover its entire
      zero-level set as the set of optimal solutions.  When only bounds on a
      target are needed, optimize that target subject to the zero-value
      restriction.  For a numerical or empirical law, report the computed
      optimizing set and objective value with the qualifications below.
\item Increase \(Q\) by subdividing cells when needed and inserting knots
      without discarding earlier ones.  Repeat the coefficient optimization.
      Reusing previous solutions can aid the search, but retaining just one
      optimizer at each resolution does not recover the sequence of sets.
\end{enumerate}
Report the expectation used, partitions, knots, support intervals, coefficient
region, optimization method, and numerical tolerances.  An arbitrary bounded
search region yields only the part of the identified set in that region;
to report the entire set, the region must be known to contain it.  Refining
the coefficient optimization and refining the multiplier class are different
operations.  Neither LP accuracy nor multiple local starts alone certifies
recovery of all global solutions.
\end{algorithm}

With empirical weights, the exact residual range for the empirical law is
\[
  [\min_{i,t}r_{it}(b),\max_{i,t}r_{it}(b)],
\]
or the analogous range within each path when \(X\) has finite support.
These ranges need not contain the population residual support.  In the
finite-\(X\) case, one can also apply Theorem~\ref{thm:lp} directly to the
empirical conditional outcome distribution.  That computes the sharp set of
the empirical law, not exact membership for the continuously distributed
population.  Data-dependent partitions, increasing resolution, estimated
support endpoints, and statistical inference require additional analysis.

\paragraph{An alternative outer approximation.}
\label{app:envelope}
One can instead partition the joint \((Y,X)\)-space and the anchor interval
and use within-cell residual extrema to construct a finite primal outer
relaxation.  Such a construction can provide an additional preliminary check,
but it involves the dimension of \((Y,X)\).  No claim is made here that its
limiting intersection is sharp.  The nested dual construction is the procedure
for which Proposition~\ref{prop:dualsieverecovery} establishes sharp-set recovery.
\bibliographystyle{apalike}
\bibliography{references}
\end{document}